\newcommand{\remove}[1]{}
\newenvironment{myquote}[1]%
  {\list{}{\leftmargin=#1\rightmargin=#1}\item[]}%
  {\endlist}
\newenvironment{example}{\pushQED{\qed}\examplex}{\popQED\endexamplex}
\newtheorem{claim}{Claim}
\newtheorem{corollary}{Corollary}
\newtheorem{definition}{Definition}
\newtheorem{lemma}{Lemma}
\newtheorem{proposition}{Proposition}
\newtheorem{remark}{Remark}
\newtheorem{theorem}{Theorem}
\Crefname{claim}{Claim}{Claims}
\Crefname{corollary}{Corollary}{Corollaries}
\Crefname{definition}{Definition}{Definitions}
\Crefname{example}{Example}{Examples}
\Crefname{lemma}{Lemma}{Lemmas}
\Crefname{property}{Property}{Properties}
\Crefname{proposition}{Proposition}{Propositions}
\Crefname{remark}{Remark}{Remarks}
\Crefname{theorem}{Theorem}{Theorems}
\tikzset{snake it/.style={decorate, decoration=snake}}
\newcounter{hposcnt}
\renewcommand*{\thehposcnt}{hpos\number\value{hposcnt}}
\NewDocumentCommand{\lplabel}{m m}{%
  \stepcounter{hposcnt}%
  \zsaveposx{\thehposcnt l}%
  \zref@refused{\thehposcnt l}%
  \zref@refused{hpos0l}%
  \makebox[0pt][r]{\makebox[\dimexpr\zposx{\thehposcnt l}sp-\zposx{hpos0l}sp][l]{#2}}%
  \IfNoValueF{#1}
   {\def\@currentlabel{#2}\ltx@label{#1}}
}
\renewcommand{\>}{{\succ}}
\newcommand{\eps}{{\varepsilon}}
\newcommand{\I}{\mathcal{I}}
\renewcommand{\O}{{\mathcal{O}}}
\newcommand{\bigO}{\mathcal{O}}
\newcommand{\OptStab}{\textup{\textsc{Optimal Stable Fractional Matching}}}
\newcommand{\OptEpsStab}{\textup{\textsc{Optimal $\eps$-Stable Fractional Matching}}}
\newcommand{\R}{\mathbb{R}}
\newcommand{\SMC}{\textup{\textrm{SMC}}}
\newcommand{\W}{\mathcal{W}}
\title{Stable Fractional Matchings}
\author{
	\begin{tabular}{cc}
		& \\
		\textbf{Ioannis Caragiannis} & \textbf{Aris Filos-Ratsikas}\\
		\small{Aarhus University, Denmark} & \small{University of Liverpool, United Kingdom}\\
		\href{mailto:iannis@cs.au.dk}{\small{\texttt{iannis@cs.au.dk}}} & \href{mailto:Aris.Filos-Ratsikas@liverpool.ac.uk}{\small{\texttt{Aris.Filos-Ratsikas@liverpool.ac.uk}}}\\
		& \\
		\textbf{Panagiotis Kanellopoulos} & \textbf{Rohit Vaish}\\
		\small{University of Essex, United Kingdom} & \small{Tata Institute of Fundamental Research, India}\\
		\href{mailto:panagiotis.kanellopoulos@essex.ac.uk}{\small{\texttt{panagiotis.kanellopoulos@essex.ac.uk}}} & \href{mailto:rohit.vaish@tifr.res.in}{\small{\texttt{rohit.vaish@tifr.res.in}}}\\
		& \\
	\end{tabular}
}
\date{}
\begin{document}

\maketitle

\begin{abstract}
We study a generalization of the classical stable matching problem that allows for \emph{cardinal} preferences (as opposed to ordinal) and \emph{fractional} matchings (as opposed to integral). In this cardinal setting, stable fractional matchings can have much larger social welfare than stable integral ones. Our goal is to understand the computational complexity of finding an \emph{optimal} (i.e., welfare-maximizing) stable fractional matching. We consider both \emph{exact} and \emph{approximate} stability notions, and provide simple approximation algorithms with weak welfare guarantees. Our main result is that, somewhat surprisingly, achieving better approximations is computationally hard. To the best of our knowledge, these are the first computational complexity results for stable fractional matchings in the cardinal model. En route to these results, we provide a number of structural observations that could be of independent interest.
\end{abstract}

\section{Introduction}
\label{sec:Introduction}
The stable matching problem is one of the most extensively studied problems at the interface of economics and computer science~\citep{GS62stable,GI89stable,RS90book,K97stable,M13algorithmics}. The input to the problem consists of the preference lists of two sets of agents, commonly referred to as the \emph{men} and the \emph{women}. The goal is to find a \emph{stable} matching, i.e., a matching in which no pair of man and woman prefer each other over their assigned partners.

While the problem was originally motivated by college admissions~\citep{GS62stable}, its applicability has subsequently expanded to various other domains such as medical residency~\citep{R84evolution,RP99redesign}
and school choice~\citep{APR05new}. In addition, the insights gained from the study of the stable matching problem, together with the development of computational tools and techniques in artificial intelligence, have shaped the design of modern two-sided matching platforms such as organ exchanges~\citep{RSU04kidney} and ridesharing platforms~\citep{WAE18stable}.

The standard formulation of the stable matching problem involves two important assumptions, namely that the matching is \emph{integral} (i.e., two agents are either completely matched or completely unmatched) and that the agents have \emph{ordinal} preferences (typically in the form of rank-ordered lists). Although these assumptions suffice in a number of applications, including those mentioned above, there are natural examples where they could be inadequate. For instance, consider a \emph{time-sharing} scenario~\citep{RRV93stable} wherein a set of employees are matched with a set of supervisors. Assuming that each individual can spend one unit of time at work, an integral matching prescribes that every employee should work full time with a single supervisor. On the other hand, fractional matchings allow the employees to divide their time in working with multiple supervisors, making them a more natural modeling choice in such situations. Fractional matchings are also useful in the context of \emph{randomization}, as they can be used to model lotteries over integral matchings~\citep{BM04random,BAH+18redesigning,G19compromises}.

In a similar vein, \emph{ordinal} preferences, despite their simplicity and ease of elicitation, can often be quite restrictive. Indeed, in many real-world matching applications, the outcomes experienced by the participants are inherently \emph{cardinal} in nature (e.g., wages in labor markets or quality of transplants in kidney exchange~\citep{LLM+19incorporating}). In such settings, it is decidedly more natural to model the \emph{intensity} of preferences, as has been noted both in theory \citep{AD10matching,PRV+13stability} as well as in lab experiments \citep{EWY16clearinghouses}.

Motivated by these applications, we consider a generalization of the stable matching model that allows for \emph{fractional} matchings (as opposed to integral) and \emph{cardinal} preferences (as opposed to ordinal). More concretely, we consider a setting in which the preferences are specified in terms of numerical utilities or \emph{valuations} (for example, in the matching instance in \Cref{fig:CardinalvsOrdinal}, $m_1$ values $w_1$ at $0$, and $w_1$ values $m_1$ at $3$). A fractional matching is simply a convex combination of integral matchings, and an agent's utility under a fractional matching is the appropriately weighted sum of its utilities under the constituent integral matchings. A fractional matching $\mu$ is \emph{stable} if no pair of man and woman simultaneously derive greater utility in being integrally matched to each other than they do under $\mu$~\citep{ADN13anarchy,PRV+13stability}. Thus, for instance, in the employee-supervisor example mentioned above, stability ensures that no employee-supervisor pair will abandon their time-sharing arrangements and instead prefer to work with each other full time.

The aforementioned generalization has clear merit in terms of \emph{social welfare}: Stable solutions in the generalized model can have larger welfare than those in the standard model, as the following example illustrates.

\begin{figure}
\footnotesize
\centering
\begin{tikzpicture}
 \tikzset{man/.style = {shape=circle,draw,inner sep=1pt}}
	  \tikzset{woman/.style = {shape=circle,draw,inner sep=1pt}}
	  \tikzset{edge/.style = {solid}}
\node[man]   (1) at (0,3.2) {$m_1$};
\node[man]   (2) at (0,1.6) {$m_2$};
\node[man]   (3) at (0,0)   {$m_3$};
\node[woman] (4) at (2.6,3.2) {$w_1$};
\node[woman] (5) at (2.6,1.6) {$w_2$};
\node[woman] (6) at (2.6,0)   {$w_3$};
\draw[edge] (1) to node [very near start,fill=white,inner sep=0pt] (141) {$0$} node [very near end,fill=white,inner sep=0pt] (142) {$3$} (4);
\draw[edge] (1) to node [very near start,fill=white,inner sep=0pt] (151) {$1$} node [very near end,fill=white,inner sep=0pt] (152) {$0$} (5);
\draw[edge] (1) to node [very near start,fill=white,inner sep=0pt] (161) {$2$} node [very near end,fill=white,inner sep=0pt] (162) {$1$} (6);
\draw[edge] (2) to node [very near start,fill=white,inner sep=0pt] (241) {$2$} node [very near end,fill=white,inner sep=0pt] (242) {$0$} (4);
\draw[edge] (2) to node [very near start,fill=white,inner sep=0pt] (251) {$1$} node [very near end,fill=white,inner sep=0pt] (252) {$1$} (5);
\draw[edge] (2) to node [very near start,fill=white,inner sep=0pt] (261) {$0$} node [very near end,fill=white,inner sep=0pt] (262) {$2$} (6);
\draw[edge] (3) to node [very near start,fill=white,inner sep=0pt] (341) {$1$} node [very near end,fill=white,inner sep=0pt] (342) {$1$} (4);
\draw[edge] (3) to node [very near start,fill=white,inner sep=0pt] (351) {$0$} node [very near end,fill=white,inner sep=0pt] (352) {$2$} (5);
\draw[edge] (3) to node [very near start,fill=white,inner sep=0pt] (361) {$3$} node [very near end,fill=white,inner sep=0pt] (362) {$0$} (6);
\end{tikzpicture}
\caption{An instance with cardinal preferences.\label{fig:CardinalvsOrdinal}}
\end{figure}
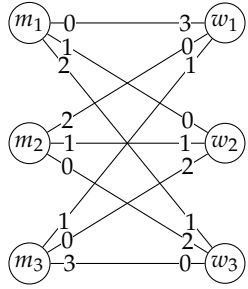

\begin{example}
\label{eg:Motivating_Example}
Consider the instance in \Cref{fig:CardinalvsOrdinal} with three men $m_1$, $m_2$, $m_3$ and three women $w_1$, $w_2$, $w_3$. Among the six possible integral matchings, only two are stable, namely $\mu_1 \coloneqq \{(m_1,w_3),(m_2,w_2),(m_3,w_1)\}$ and $\mu_2 \coloneqq \{(m_1,w_1),(m_2,w_3),(m_3,w_2)\}$; indeed, $\mu_1$ and $\mu_2$ are the men-proposing and women-proposing Gale-Shapley matchings, respectively~\citep{GS62stable}. The social welfare (i.e., the sum of utilities of all agents) of these matchings is $\W(\mu_1) = \W(\mu_2) = 7$.

Define $\mu_3 \coloneqq \{(m_1,w_1),(m_2,w_2),(m_3,w_3)\}$, and notice that $\W(\mu_3) = 8$. Now consider a \emph{fractional} matching $\mu \coloneqq \frac{1}{2} \mu_2 + \frac{1}{2} \mu_3$. The social welfare of $\mu$ is $\W(\mu) = \frac{1}{2} \W(\mu_2) + \frac{1}{2} \W(\mu_3) = 15/2 > 7 = \W(\mu_1) = \W(\mu_2)$, which means that $\mu$ has a higher social welfare than any stable integral matching. Importantly, $\mu$ is a \emph{stable} fractional matching. Indeed, in $\mu$, the utilities of $m_1$, $m_2$, $m_3$, $w_1$, $w_2$, and $w_3$ are $0$, $1/2$, $3/2$, $3$, $3/2$, and $1$ respectively. Thus, for every man-woman pair, at least one of the two agents meets its utility threshold for that pair, implying that $\mu$ is stable.

Overall, the instance in \Cref{fig:CardinalvsOrdinal} admits a stable fractional matching with strictly greater welfare than any stable integral matching.\footnote{Observe that the gain in social welfare of $\mu$ was achieved by including an \emph{unstable} integral matching $\mu_3$ in its support. One could consider alternative notions such as \emph{ex-post stability}~\citep{RRV93stable} wherein the support consists only of stable integral matchings. We discuss this and various other stability notions in \Cref{sec:Stability_Notions}, and find that insisting on a purely stable support results in significant welfare loss (\Cref{rem:Strong_Stability_Suboptimal}). In fact, there exists instances where the support of an optimal stable fractional matchings consists only of unstable integral matchings (\Cref{prop:UnstableSupport}).}
\end{example}

Starting with the seminal work of \cite{GS62stable}, an extensive literature has emerged over the years on algorithms for computing stable solutions, including ones that optimize a variety of objectives pertaining to fairness and economic efficiency \citep{G87three,ILG87efficient,K97stable,TS98geometry,MII+02hard}. Many of these algorithms, however, are tailored to compute stable \emph{integral} matchings. As \Cref{eg:Motivating_Example} demonstrates, such algorithms could, in general, return highly suboptimal outcomes in our setting. Therefore, it becomes pertinent to understand the computational complexity of finding an ``optimal'' stable matching in the generalized model. Our work studies this question from the lens of the fundamental objective of social welfare, and asks the following natural question:

\begin{myquote}{0.5in}
\emph{Can an (approximately) optimal stable fractional matching be efficiently computed?}
\end{myquote}

\subsection{Our results and roadmap}
We formalize the above question by defining the optimization problem \OptStab. To motivate this problem, we strengthen the observation in Example~\ref{eg:Motivating_Example} to show that the social welfare gap between the best stable fractional and the best stable integral matchings can be \emph{arbitrarily} large. We show that the favorable welfare properties of stable fractional matchings come at the cost of limiting the algorithmic tools at our disposal. Specifically, we show that the set of stable fractional matchings can be \emph{non-convex} (\Cref{prop:NonConvexity}), and that, in the worst case, stable fractional matchings can have a \emph{large support} (\Cref{thm:LowerBoundSupportSize}), thus prohibiting the use of support enumeration algorithms. Nevertheless, we present simple algorithms for \OptStab\ with approximation ratio of $1+\sigma_{\max}/\sigma_{\min}$, where $\sigma_{\max}$ and $\sigma_{\min}$ represent the maximum and minimum positive valuation in the input instance, respectively (\Cref{thm:alg-summary}). For the variant \OptEpsStab, where the stability constraints are relaxed by a multiplicative factor of $(1-\eps)$, an embarrassingly simple algorithm computes $1/\eps$-approximate solutions (\Cref{thm:ApproxStable+ApproxOptimal}). We then proceed to our main results (\Cref{thm:strong-inapprox,thm:strong-eps-inapprox}), which show that these approximation guarantees are---somewhat surprisingly---almost the best achievable via polynomial-time algorithms (unless $\text{P}=\text{NP}$). To the best of our knowledge, these are the first computational complexity results for stable fractional matchings in the cardinal preferences model.

The rest of the paper is organized as follows. We present related work in \cref{sec:Related_Work}. We continue in \cref{sec:Preliminaries} with preliminary definitions and warm up with exponential-time algorithms that solve \OptStab\ using linear programming. \cref{sec:StructuralObservations} presents the structural properties of (nearly)-optimal solutions of \OptStab{}. Our algorithms are presented in \cref{sec:TractabilityResults}, and our inapproximability results are presented in \cref{sec:IntractabilityResults}. Some proofs and additional material appear in the appendix. 
\subsection{Related work}
\label{sec:Related_Work}
The stable matching problem has been extensively studied for \emph{integral} matchings. The universal existence of a stable integral matching~\citep{GS62stable} has led to considerable research on stable solutions that, in addition, optimize various measures of fairness or efficiency. A relevant example is that of optimizing the \emph{average rank} of matched partners, which, in our model, corresponds to finding an optimal stable integral matching when the cardinal utilities are completely specified by the ordinal ranks (e.g., if an agent is ranked at position $i$, then it is valued at $n+1-i$). This problem is known to admit combinatorial polynomial-time algorithms~\citep{ILG87efficient,MV18}. Other examples of such problems include minimizing the difference between average ranks of matched partners of men and women~\citep{K93complexity}, optimizing the rank of matched partner for the least well-off agent~\citep{G87three,F92new,K97stable}, and maximizing (or minimizing) the cardinality of the matching~\citep{IMM+99stable,MII+02hard,IMO09stable}.

A growing body of work in artificial intelligence and multiagent systems has studied variants of the stable matching problem motivated by practical considerations such as minimizing the amount of information exchange required in arriving at stable outcomes~\citep{DB13elicitation,RGM16preference} or ensuring strategyproofness in many-to-one matchings with quota constraints~\citep{GIK+16strategyproof,HSK+17strategy}. Various other papers have used computational approaches such as SAT solving~\citep{GPS+02sat,DB15sat,PDB16strategy} and constraint programming~\citep{GIM+01constraint, MMT17almost} in developing practically efficient solutions to computationally hard variants of the problem.

Stability has also been studied in the context of \emph{fractional} matchings. Starting with the works of \cite{V89linear}, \cite{R92characterization}, and \cite{RRV93stable}, there is now a well-developed literature on linear programming formulations of the stable matching problem~\citep{TS98geometry,V12stable,CDH+12maximum}. This line of work has led to novel stability notions such as \emph{strong stability}~\citep{RRV93stable}, \emph{ex-post stability}~\citep{RRV93stable}, and \emph{fractional stability}~\citep{V89linear}, which are discussed further in \Cref{sec:Stability_Notions}.\footnote{The term ``fractional stable matching'' has been overloaded in the literature. For example, \cite{TS98geometry} use it to refer to a feasible solution of stable matching linear program, and \cite{AF03lemma} and \cite{BF16fractional} use it in the study of hypergraphic preference systems to refer to a slightly different solution concept. We refer the reader to~\citep{AK19random} for a detailed overview of these notions.} A noteworthy difference with our work is that these notions have been studied with respect to purely \emph{ordinal} preferences.

Turning to \emph{cardinal} preferences, we note that the notions of exact and approximate stability studied by us (\Cref{def:Stability}) first appeared, to the best of our knowledge, in the work of \cite{ADN13anarchy}. They focus on \emph{qualitative}, rather than computational, questions such as analyzing the ``price of anarchy'' for integral stable matchings under various preference structures as well as its extensions to approximate stability.\footnote{The price of anarchy in this context is the worst-case multiplicative welfare gap between an optimal matching and an integral stable matching.} ~\cite{PRV+13stability} study the same notion with the goal of computing integral stable matchings that, in addition, satisfy economic efficiency (in particular, Pareto optimality and its variants). They also study strategic aspects which are an exciting avenue for future research even in our model.

An interesting special case of our problem is when agents have \emph{symmetric} preferences, i.e., for every $(m,w) \in M \times W$, $U(m,w) = V(m,w)$. \cite{DMS17computational} study this model in the context of \emph{integral} matchings, and show that computing a welfare-maximizing integral stable matching is NP-hard under symmetric valuations. In general, stable fractional matchings can have much higher welfare compared to integral ones (\Cref{eg:Motivating_Example}), and thus it is not clear a priori whether the result of \cite{DMS17computational} for integral matchings has any implications for stable fractional matchings. Nevertheless, as we show in \Cref{lem:Symmetric_Ternary_Vals_OptStab_Is_Integral} in \Cref{subsec:Symmetric_Valuations}, their result implies NP-hardness of \OptStab. In comparison, our results in \Cref{sec:IntractabilityResults} on the hardness of approximation are much stronger and also apply to approximate stability.

Finally, we note that stable \emph{fractional} matchings under cardinal preferences have been previously studied in economics literature~\citep{M13stability,HKN+13stability,HK15multilateral,DY16efficiency,EG17ordinal}. Some of these works~\citep{HKN+13stability,HK15multilateral} consider much more general matching models such as matching with transfers, blocking coalitions of arbitrary size, non-linear utilities, etc. The focus in these papers is primarily on \emph{existential} questions (such as the existence of competitive equilibria) or on examining the \emph{logical} relationship among various notions of stability and economic efficiency. Some of these models are strict generalizations of ours, and therefore computational hardness results in our model readily extend to these more general settings. Whether one can obtain stronger inapproximability results for these models is an interesting avenue for future research.

\section{Preliminaries}
\label{sec:Preliminaries}
An instance of \emph{Stable Matching problem with Cardinal preferences} (\SMC{}) is given by the tuple $\langle M, W, U, V \rangle$, where $M \coloneqq \{m_1,\dots,m_n\}$ and $W \coloneqq \{w_1,\dots,w_n\}$ denote the set of $n$ men and $n$ women, respectively, and $U$ and $V$ are $n \times n$ matrices of non-negative rational numbers that specify the \emph{valuations} of the agents. Specifically, $U(m,w)$ is the value derived by man $m$ from his match with woman $w$, and $V(m,w)$ is the value derived by woman $w$ from her match with man $m$. Many of our results will focus on two special classes of valuations, namely {\em binary} (where $U,V \in \{0,1\}^{n \times n}$) and {\em ternary} valuations (where $U,V \in \{0,1,\alpha\}^{n \times n}$ for some $\alpha>1$).

We will often describe an \SMC{} instance using its {\em graph representation}. An instance $\I =$ $\langle M, W, U, V \rangle$ can be represented as a bipartite graph with vertex sets $M$ and $W$, and an edge for every pair $(m,w)\in M\times W$ such that at least one of $U(m,w)>0$ or $V(m,w)>0$ holds. Each edge $(m,w)$ in this graph has two valuations associated with it, namely $U(m,w)$ and $V(m,w)$.

A \emph{fractional matching} $\mu: M \times W \rightarrow \R_{\geq 0}$ is an assignment of non-negative weights to all man-woman pairs such that $\textstyle{ \sum_{w \in W} \mu(m,w) \leq 1 }$ for each $m \in M$ and $\textstyle{ \sum_{m \in M} \mu(m,w) \leq 1 }$ for each $w \in W$.
A fractional matching $\mu$ is said to be \emph{complete} if $\textstyle{ \sum_{w \in W} \mu(m,w)  = 1 }$ for each man $m \in M$ and $\textstyle{ \sum_{m \in M} \mu(m,w) = 1 }$ for each woman $w \in W$. An {\em integral} matching $\mu$ is a fractional matching with weights $\mu(m,w)\in \{0,1\}$ for every pair $(m,w)$. With slight abuse of notation, we sometimes view an integral matching $\mu$ as a set of pairs and write $(m,w)\in \mu$ in place of $\mu(m,w)=1$. Also, unless stated otherwise, we will assume that any fractional/integral matching is complete.

It is well-known, and follows from the Birkhoff-von Neumann (BvN) theorem, that a (complete) fractional matching $\mu$ can be written as a convex combination of $k=\bigO(n^2)$ integral matchings $\mu^{(1)}, \mu^{(2)}, \dots,\mu^{(k)}$ so that for every pair $(m,w)\in M\times W$, we have
\begin{align*}
	\mu(m,w) = \sum_{j=1}^{k}{\lambda_j \cdot\mu^{(j)}(m,w)},
\end{align*}
where $\lambda_j>0$ for all $j \in \{1,\dots,k\}$ and $\sum_{j=1}^k{\lambda_j}=1$.
 The set of integral matchings $\{\mu^{(1)}, \dots, \mu^{(k)}\}$ is called the {\em support} of the fractional matching $\mu$. Note that the support need not be unique.

We proceed with the formal definitions of {\em stability} and {\em approximate stability}, which, in turn, depend on the {\em utility} derived by agents in a fractional matching. In particular,
the utility derived by the man $m$ in $\mu$ is given by $\textstyle{ u_{m}(\mu) \coloneqq \sum_{w \in W} U(m,w) \mu(m,w)} $, and the utility derived by the woman $w$ is given by $\textstyle{ v_w(\mu) \coloneqq \sum_{m\in M} V(m,w) \mu(m,w) }$. 

\begin{definition}[Stability~\citep{ADN13anarchy,PRV+13stability}]
\label{def:Stability}
Given a fractional matching $\mu$, a man-woman pair $(m,w)$ is said to be a \emph{blocking pair} if $u_{m}(\mu) < U(m,w)$ and $v_w(\mu) < V(m,w)$. A fractional matching $\mu$ is \emph{stable} if there are no blocking pairs, i.e., for each $(m,w) \in M \times W$, either $u_{m}(\mu) \geq U(m,w)$ or $v_w(\mu) \geq V(m,w)$.
\end{definition}

Thus, under a stable fractional matching, no pair of man and woman can simultaneously improve by breaking away from the fractional matching and instead being integrally matched with each other. This notion of deviation is also reasonable from the viewpoint of bounded rationality, as agents only form blocking coalitions of size two, and only deviate to an integral matching between the members of the coalition.

\begin{definition}[$\eps$-Stability~\citep{ADN13anarchy}]
\label{def:Approx-Stability}
Given any $\eps \in [0,1)$ and a fractional matching $\mu$, a man-woman pair $(m,w)$ is said to be \emph{$\eps$-blocking} if $u_{m}(\mu) < (1-\eps) U(m,w)$ and $v_w(\mu) < (1-\eps) V(m,w)$; otherwise, the pair is said to be \emph{$\eps$-stable}. A fractional matching $\mu$ is $\eps$-\emph{stable} if all pairs are $\eps$-stable.
\end{definition}

Thus, a $0.01$-stable fractional matching is one in which, for every man-woman pair, at least one of the two agents already receives (at least) $99\%$ of the utility that he or she would receive by being integrally matched with the other. Note that a stable fractional matching is also $\eps$-stable for every $\eps\geq 0$. 

Notice that \Cref{def:Stability,def:Approx-Stability} entail that agents in a blocking pair prefer to switch to an \emph{integral} match with each other. One could also consider an alternative formulation wherein the agents merely prefer to \emph{increase} their mutual fractional engagement, possibly at the expense of weakening other less preferred matches. This is precisely the notion of \emph{strong stability} (see \Cref{sec:Stability_Notions} for the definition) which has been studied in the context of ordinal preferences~\citep{RRV93stable,AK19random}. However, as we note in \Cref{rem:Strong_Stability_Suboptimal} in \Cref{sec:Stability_Notions}, a strongly stable fractional matching can be strictly suboptimal in terms of social welfare, which further justifies the consideration of the stability notion in \Cref{def:Stability}.

The next statement follows from the seminal result of \cite{GS62stable}.

\begin{proposition}\label{prop:GaleShapley}
	Given any \SMC{} instance $\I$, a stable fractional matching $\mu$ for $\I$ always exists and can be computed in polynomial time.
\end{proposition}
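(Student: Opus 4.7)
The plan is to reduce the problem to the classical ordinal stable matching problem and invoke the algorithm of \cite{GS62stable}. Given the \SMC{} instance $\I = \langle M, W, U, V \rangle$, I would first construct a derived ordinal instance: for each man $m$, define a ranking over $W$ by sorting the women in non-increasing order of $U(m,\cdot)$, breaking ties arbitrarily; do the symmetric construction for each woman using $V(\cdot,w)$. Applying the (men-proposing) Gale-Shapley algorithm to this derived instance produces, in $O(n^2)$ time, an integral matching $\mu$ that is stable with respect to these ordinal preferences.

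The next step is to verify that $\mu$, viewed as a (complete) fractional matching with weights in $\{0,1\}$, satisfies \Cref{def:Stability}. Suppose towards a contradiction that some pair $(m,w)$ blocks $\mu$. Writing $w' \coloneqq \mu(m)$ and $m' \coloneqq \mu(w)$, the integrality of $\mu$ gives $u_m(\mu) = U(m,w')$ and $v_w(\mu) = V(m',w)$, so the blocking condition becomes $U(m,w') < U(m,w)$ and $V(m',w) < V(m,w)$. By construction of the derived ordinal lists, $m$ strictly prefers $w$ to $w'$ and $w$ strictly prefers $m$ to $m'$, so $(m,w)$ is an ordinal blocking pair, contradicting the stability of $\mu$ in the derived instance.

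Polynomial-time computability is immediate from the complexity of Gale-Shapley together with the fact that the derived preference lists are constructible in $O(n^2)$ time from the matrices $U$ and $V$.

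There is essentially no serious obstacle here; the only point requiring care is ensuring that the ordinal ranking faithfully encodes the cardinal order, so that a strict cardinal inequality always implies a strict ordinal preference. This is guaranteed by the sorting step, regardless of how ties (arising from equal valuations) are broken, since the contradiction above only invokes strict ordinal preferences derived from strict cardinal inequalities.
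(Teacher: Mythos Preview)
Your proposal is correct and follows essentially the same approach as the paper: derive an ordinal instance consistent with the cardinal valuations (breaking ties arbitrarily), apply Gale--Shapley, and observe that any cardinal blocking pair would yield an ordinal blocking pair. The paper states this in a single sentence without spelling out the blocking-pair argument, whereas you make it explicit; both routes are the same in substance.
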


\cref{prop:GaleShapley} was originally proven in \citep{GS62stable} in the standard stable matching model with ordinal preferences and integral matchings. It is easy to see that given any \SMC{} instance $\I$, if an integral matching $\mu$ is stable for an ordinal instance derived from $\I$ (where the ordinal preferences of each agent are consistent with its valuations, breaking ties arbitrarily), then it is also stable for the original instance $\I$.

Next, we define \emph{social welfare}, which is a measure of the efficiency of a fractional matching.

\begin{definition}[Social welfare]
	Given an \SMC{} instance $\langle M, W, U, V \rangle$ and a fractional matching $\mu$, the \emph{social welfare} of $\mu$ is defined as
	\begin{align*}
	\W(\mu) \coloneqq \sum_{m \in M} u_{m}(\mu) + \sum_{w \in W} v_w(\mu) = \sum_{m \in M}\sum_{w\in W} (U(m,w) + V(m,w)) \mu(m,w).
	\end{align*}
\end{definition}
An {\em optimal} matching is one with the highest social welfare among all fractional matchings. It follows from the BvN decomposition that an optimal matching is, without loss of generality, integral. Similarly, an {\em optimal stable} fractional matching (respectively, {\em optimal $\eps$-stable} fractional matching) is one with the highest social welfare among all stable (respectively, all $\eps$-stable) fractional matchings. We will use \OptStab\ and \OptEpsStab\ to refer to the corresponding optimization problems. 

We will also discuss stable fractional matchings that are \emph{approximately optimal}.
\begin{definition}[$\rho$-efficiency]
\label{def:Approx-Efficiency}
For $\rho\in (0,1]$, the term \emph{$\rho$-efficient} will refer to a stable (respectively, $\eps$-stable) fractional matching with welfare at least $\rho$ times the welfare of the optimal stable (respectively, $\eps$-stable) fractional matching. That is, $\mu$ is a $\rho$-efficient stable fractional matching if it is stable (respectively, $\eps$-stable) and $\W(\mu) \geq \rho \W(\mu^*)$, where $\mu^*$ is an optimal stable (respectively, $\eps$-stable) fractional matching.
\end{definition}

Thus, an optimal stable (or $\eps$-stable) fractional matching is $1$-efficient.

\subsection{Computing optimal stable fractional matchings}
\label{subsec:OptimalStableMatchingProgram}
We will now discuss two exponential-time algorithms for \OptStab{}. The first algorithm uses the following mixed integer linear program \ref{Prog:OPT-Stab}:
\begin{alignat}{2}
  \lplabel{{Prog:OPT-Stab}}{\textup{(OPT-Stab)}}&&&\nonumber\\
  \text{maximize}   \quad \sum_{m \in M} u_m & + \sum_{w \in W} v_w && \nonumber \\
\text{subject to} \quad u_m & \geq U(m,w) y(m,w) && \quad \forall \, m \in M, w\in W \label{Prog-OPT-Stab:Stability1}\\
\quad v_w & \geq V(m,w)  (1-y(m,w)) && \quad \forall \, m \in M, w\in W \label{Prog-OPT-Stab:Stability2}\\
\quad u_m & = \sum_{w \in W} U(m,w) \mu(m,w)  && \quad \forall \, m \in M \label{Prog-OPT-Stab:Utility-Men}\\
\quad v_w & = \sum_{m \in M} V(m,w) \mu(m,w)  && \quad \forall \, w \in W \label{Prog-OPT-Stab:Utility-Women}\\
 \quad \sum_{w \in W} \mu(m,w) & \leq 1 && \quad \forall \, m \in M \label{Prog-OPT-Stab:Feasibility-Men}\\
\quad \sum_{m \in M} \mu(m,w) & \leq 1 && \quad \forall \, w \in W \label{Prog-OPT-Stab:Feasibility-Women}\\
\quad \mu(m,w) & \geq 0 && \quad \forall \, m \in M, w\in W \\
\quad y(m,w) & \in \{0,1\} && \quad \forall \, m \in M, w\in W \label{MILP-OPT-Stab:Integral_Constraint}
\end{alignat}

The non-negative weights $\mu(m,w)$ of man-woman pairs as well as the utilities $u_m\coloneqq u_m(\mu)$ and $v_w\coloneqq v_w(\mu)$ of the agents (set in equalities (\ref{Prog-OPT-Stab:Utility-Men}) and (\ref{Prog-OPT-Stab:Utility-Women})) are the fractional variables of~\ref{Prog:OPT-Stab}. The binary variables $y(m,w)$ encode the stability requirements for pair $(m,w)$ in constraints~(\ref{Prog-OPT-Stab:Stability1}) and~(\ref{Prog-OPT-Stab:Stability2}). Indeed, by setting $y(m,w)$ to $1$ or $0$, we can require either $u_m(\mu) \geq U(m,w)$ or $v_w(\mu) \geq V(m,w)$. 
Constraints (\ref{Prog-OPT-Stab:Feasibility-Men}) and~(\ref{Prog-OPT-Stab:Feasibility-Women}) ensure feasibility. By enumerating over all possible combinations of values for the binary variables $y(m,w)$ for $(m,w)\in M\times W$, we get $2^{n^2}$ different linear programs, and at least one of them must have the optimal stable fractional matching as its optimal solution.

Our second algorithm is slightly faster and solves at most $\bigO(n^n)$ linear programs.
 It exploits the following linear program~\ref{LP:OPT-Thresh}, which is defined using non-negative constants $\theta_m$ for $m\in M$ and $\theta_w$ for $w\in W$, which we call {\em utility thresholds}.\smallskip
\begin{alignat}{2}
  \lplabel{{LP:OPT-Thresh}}{\textup{(OPT-Thresh)}}&&&\nonumber \\
  \text{maximize} \quad \sum_{m \in M} u_m & + \sum_{w \in W} v_w && \nonumber \\
\text{subject to} \quad u_m & \geq \theta_m  && \quad \forall \, m \in M \label{LP-OPT-Thresh:Utility-Men-LowerBound}\\
\quad v_w & \geq \theta_w  && \quad \forall \, w \in W \label{LP-OPT-Thresh:Utility-Women-LowerBound}\\
 \quad u_m & = \sum_{w \in W} U(m,w) \mu(m,w)  && \quad \forall \, m \in M \label{LP-OPT-Thresh:Utility-Men}\\
\quad v_w & = \sum_{m \in M} V(m,w) \mu(m,w)  && \quad \forall \, w \in W \label{LP-OPT-Thresh:Utility-Women}\\
 \quad \sum_{w \in W} \mu(m,w) & \leq 1 && \quad \forall \, m \in M \label{LP-OPT-Thresh:Feasibility-Men}\\
\quad \sum_{m \in M} \mu(m,w) & \leq 1 && \quad \forall \, w \in W \label{LP-OPT-Thresh:Feasibility-Women}\\
\quad \mu(m,w) & \geq 0 && \quad \forall \, m\in M, w\in W
 \label{LP-OPT-Thresh:Nonnegativity}
\end{alignat}

When all utility thresholds are set to zero, the solution of \ref{LP:OPT-Thresh} is an optimal (i.e., welfare-maximizing) fractional matching. Using \ref{LP:OPT-Thresh} to maximize social welfare under stability constraints is more challenging. We say that a set of utility thresholds is {\em stability-preserving} if for every pair of agents $m\in M$ and $w\in W$, either $\theta_m\geq U(m,w)$ or $\theta_w\geq V(m,w)$. Note that any fractional matching $\mu$ that is feasible for \ref{Prog:OPT-Stab} is also feasible for \ref{LP:OPT-Thresh} for some stability-preserving set of utility thresholds (in particular, set the threshold of an agent equal to its utility under $\mu$). Conversely, any fractional matching $\mu$ that is feasible for \ref{LP:OPT-Thresh} with some set of stability-preserving utility thresholds is also feasible for \ref{Prog:OPT-Stab}. 

One could now adopt the following strategy to solve \OptStab{}: First, enumerate all $\bigO(n^n)$ tuples of utility thresholds $(\theta_{m_1},\dots,\theta_{m_n})$ with $\theta_m\in \{U(m,w):w\in W\}$ for every man $m\in M$. Next, for every choice of $(\theta_{m_1},\dots,\theta_{m_n})$, solve \ref{LP:OPT-Thresh} after appropriately setting $(\theta_{w_1},\dots,\theta_{w_n})$ where $\theta_w\in \{V(m,w):m\in M\}$ for all $w\in W$, so that the set of utility thresholds is stability-preserving. Among these solutions, the fractional matching with highest social welfare will be the solution of \OptStab. We note that \ref{LP:OPT-Thresh} resembles the integer programming formulations with cut-off variables in other works~\citep{ABM16integer,DGG+19mathematical}.

\subsection{The case of symmetric valuations}
\label{subsec:Symmetric_Valuations}

As mentioned previously in \Cref{sec:Related_Work}, a result by \cite{DMS17computational} implies NP-hardness of the problem of computing an optimal stable \emph{integral} matching on \SMC{} instances. The construction of \cite{DMS17computational} involves a restricted class of \SMC{} instances wherein the agents have \emph{symmetric} (i.e., $U=V$) and  \emph{ternary} valuations in $\{0,1,\alpha\}$ with $\alpha \in (1,2)$.

\begin{proposition}\label{prop:DeligkasEtAl}
	Computing an optimal stable integral matching is NP-hard even for \SMC{} instances with symmetric ternary valuations.
\end{proposition}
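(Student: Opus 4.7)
The plan is to derive \Cref{prop:DeligkasEtAl} as a direct consequence of the hardness construction of \cite{DMS17computational}, rather than re-proving any reduction from scratch. Their paper studies the problem of computing a welfare-maximizing stable integral matching in a cardinal model, and the reduction they present produces instances with symmetric valuations taking only the values $0$, $1$, and $\alpha$ for a fixed $\alpha \in (1,2)$. The first task is therefore to check that each such instance of theirs is, verbatim, an \SMC{} instance with symmetric ternary valuations in the sense used throughout this paper.

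To do this I would compare the two formalisms side by side: both are defined on a bipartite set $M \cup W$ with $|M|=|W|=n$; both assign to each pair $(m,w) \in M \times W$ a man-value and a woman-value drawn from a common rational alphabet; both define the utility of an agent under an integral matching as the value of its unique partner; and both use the classical "no blocking pair" stability condition from \Cref{def:Stability}, which, under symmetry $U = V$, collapses to the single-inequality form employed in \cite{DMS17computational}. The social welfare objective in the two models is literally the same sum of agent utilities. Once this syntactic identification is made, no additional combinatorial argument is needed.

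Given the correspondence, I would complete the argument by contradiction: if a polynomial-time algorithm solved \emph{Optimal Stable Integral Matching} on the symmetric ternary subclass of \SMC{} instances, then applying it to the image of the DMS17 reduction would solve their problem in polynomial time and thereby decide the underlying \NP-hard source problem (e.g., the variant of \Partition{} or \MaxCut{} that their reduction starts from) in polynomial time as well. This immediately yields \NP-hardness for the claimed restriction.

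The only subtlety to watch is that the reduction of \cite{DMS17computational} must really be gap-producing, or at least exact, for the optimization version: since they state the result in its NP-hardness form for welfare maximization, this is already built in, and no extra gap amplification is required. Thus the "hard part," such as it is, is purely a bookkeeping exercise of verifying the model match; all of the genuine combinatorial work has been done inside \cite{DMS17computational}, and the proposition follows by direct invocation.
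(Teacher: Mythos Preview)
Your proposal is correct and matches the paper's treatment: the paper does not give an independent proof of \Cref{prop:DeligkasEtAl} but simply attributes it to \cite{DMS17computational}, noting in the preceding text that their construction already uses symmetric ternary valuations in $\{0,1,\alpha\}$ with $\alpha\in(1,2)$. Your write-up makes explicit exactly the model-matching bookkeeping that the paper leaves implicit in that attribution.
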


Our next result (\Cref{lem:Symmetric_Ternary_Vals_OptStab_Is_Integral}) shows that for these \SMC{} instances, the optimal stable \emph{fractional} matching is, without loss of generality, \emph{integral}. Consequently, the result in \citep{DMS17computational} implies NP-hardness of \OptStab.

\begin{lemma}	\label{lem:Symmetric_Ternary_Vals_OptStab_Is_Integral}
	Let $\I$ be an \SMC{} instance with symmetric and ternary valuations, and let $\mu^*$ be an optimal stable fractional matching for $\I$. Then, there exists a stable integral matching $\mu^\textup{s}$ such that $\W(\mu^{\textup{s}}) = \W(\mu^*)$.
\end{lemma}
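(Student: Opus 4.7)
The plan is to apply the Birkhoff--von Neumann decomposition $\mu^* = \sum_j \lambda_j \mu^{(j)}$ of the optimal stable fractional matching, and prove the (stronger) claim that \emph{every} integral matching $\mu^{(j)}$ in the support of $\mu^*$ is stable. This claim immediately yields the lemma: averaging over the decomposition produces some $\mu^{(j^*)}$ with $\W(\mu^{(j^*)})\ge\W(\mu^*)$, and since any stable integral matching is a fortiori a stable fractional one, we also have $\W(\mu^{(j^*)})\le\W(\mu^*)$. Equality then holds, so we may take $\mu^{\textup{s}} := \mu^{(j^*)}$.

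To prove the claim, I argue by contradiction: assume $(m,w)$ is a blocking pair for some $\mu^{(j)}$ in the support, and split on $U(m,w)\in\{1,\alpha\}$ (the case $U(m,w)=0$ cannot block). First consider $U(m,w)=\alpha$. The ternary valuations and row sums equal to one imply $u_m(\mu)\le\alpha$ for every fractional matching $\mu$, so stability of $\mu^*$ on $(m,w)$ forces, without loss of generality, $u_m(\mu^*)=\alpha$. Writing $u_m(\mu^*)=\sum_{j'}\lambda_{j'} u_m(\mu^{(j')})$ as a convex combination of terms each $\le\alpha$, the equality forces $u_m(\mu^{(j')})=\alpha$ for every $j'$ with $\lambda_{j'}>0$. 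This contradicts $u_m(\mu^{(j)})<\alpha$, which we have from the blocking assumption.

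The harder case is $U(m,w)=1$. Here integrality of $\mu^{(j)}$ combined with the blocking assumption and ternary values forces $u_m(\mu^{(j)})=v_w(\mu^{(j)})=0$, so the partner of $m$ in $\mu^{(j)}$, call it $w_0$, satisfies $U(m,w_0)=0$, and the partner $m_0$ of $w$ in $\mu^{(j)}$ satisfies $V(m_0,w)=U(m_0,w)=0$ (using symmetry). I then form the integral matching $\mu^{(j\textup{sw})}$ obtained from $\mu^{(j)}$ by replacing the edges $(m,w_0)$ and $(m_0,w)$ by $(m,w)$ and $(m_0,w_0)$ (note $m\neq m_0$ and $w\neq w_0$, else $\mu^{(j)}(m,w)=1$ contradicts blocking), and the fractional matching $\mu' := \mu^* - \lambda_j\mu^{(j)} + \lambda_j\mu^{(j\textup{sw})}$, which is the result of substituting $\mu^{(j\textup{sw})}$ for $\mu^{(j)}$ in the BvN decomposition. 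A direct computation gives $\W(\mu')-\W(\mu^*) = 2\lambda_j(1+U(m_0,w_0)) > 0$. The crucial observation---where symmetry $V=U$ is essential---is that the only utilities that change going from $\mu^*$ to $\mu'$ are those of the four agents $m,m_0,w,w_0$, and each of these can only \emph{increase}: $u_m$ and $v_w$ rise by $\lambda_j$, while $u_{m_0}$ and $v_{w_0}$ rise by $\lambda_j U(m_0,w_0)\ge 0$. Since every agent is at least as well off in $\mu'$ as in $\mu^*$, the stability of $\mu^*$ transfers verbatim to $\mu'$, and the strictly higher welfare of $\mu'$ contradicts the optimality of $\mu^*$.

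The main obstacle is precisely the $U(m,w)=1$ case. Unlike the $\alpha$-case, the simple averaging argument does not close on its own, because $u_m(\mu^*)\ge 1$ is compatible with $u_m(\mu^{(j)})=0$ for the offending matching and $u_m(\mu^{(j')})=\alpha$ for some other support element. The swap-and-substitute construction resolves this by exhibiting a stable fractional matching $\mu'$ of strictly larger welfare, with the symmetry of the valuations used in a critical way to control the utilities of the ``new partners'' $m_0$ and $w_0$. The remaining checks---non-negativity of $\mu'$ (immediate from $\mu^*(m,w_0)\ge\lambda_j$ and $\mu^*(m_0,w)\ge\lambda_j$, both implied by $\mu^{(j)}(m,w_0)=\mu^{(j)}(m_0,w)=1$), preservation of row and column sums, and the welfare accounting---are all routine.
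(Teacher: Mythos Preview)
Your proof is correct and follows essentially the same approach as the paper: both show the stronger claim that every integral matching in a BvN support of $\mu^*$ is stable, handling the $\alpha$-case by the convex-combination/averaging argument and the $1$-case by swapping partners inside the offending support matching to produce a stable fractional matching of strictly higher welfare. The only cosmetic differences are that the paper phrases the $\alpha$-case contrapositively (blocking in the support forces blocking in $\mu^*$) and, in the $1$-case, simply drops the zero-value edges rather than completing the swap with the extra pair $(m_0,w_0)$; neither changes the substance of the argument.
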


\begin{proof}
Consider an optimal stable fractional matching $\mu^*$ that is not integral (i.e., has support of size at least two), and let $\mu$ be any integral matching in any support of $\mu^*$. We will show that $\mu$ is a stable integral matching. This would imply that all matchings in the support of $\mu^*$ are optimal stable integral matchings, as desired.

Assume, for contradiction, that $\mu$ is not stable, and let $(m,w)$ be a blocking pair in $\mu$. Then, because of symmetry, either $U(m,w)=V(m,w)=1$ or $U(m,w)=V(m,w)=\alpha$. 

Suppose that $U(m,w)=V(m,w)=1$. Then, either both $m$ and $w$ are unmatched in $\mu$, or any pair $(m',w')$ in $\mu$ with either $m'=m$ or $w'=w$ satisfies $U(m',w')=V(m',w')=0$. By replacing such pairs with $(m,w)$ in $\mu$ (and, subsequently, in the support of $\mu^*$), we get a stable fractional matching with a higher welfare than $\mu^*$, contradicting its optimality.

Now suppose that $U(m,w)=V(m,w)=\alpha$. Then, either both $m$ and $w$ are unmatched in $\mu$, or any pair $(m',w')$ in $\mu$ with either $m'=m$ or $w'=w$ satisfies $U(m',w')=V(m',w') \in \{0,1\}$. However, since $\mu$ is in the support of $\mu^*$ (i.e., with strictly positive weight) and all valuations are less than or equal to $\alpha$, the utilities of both $m$ and $w$ in $\mu^*$ will be strictly smaller than $\alpha$, contradicting the stability of $\mu^*$.
\end{proof}

From \cref{lem:Symmetric_Ternary_Vals_OptStab_Is_Integral} and \cref{prop:DeligkasEtAl}, we immediately have the following:

\begin{corollary}
	\label{cor:deligkas}
\OptStab\ is NP-hard.
\end{corollary}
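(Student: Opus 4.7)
The plan is to derive the corollary by a straightforward reduction from the problem of \Cref{prop:DeligkasEtAl} (computing an optimal stable integral matching on symmetric ternary \SMC{} instances) to \OptStab, with \Cref{lem:Symmetric_Ternary_Vals_OptStab_Is_Integral} serving as the bridge between the fractional and integral worlds.

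Concretely, suppose for contradiction that \OptStab{} admits a polynomial-time algorithm $\A$. Given an \SMC{} instance $\I$ with symmetric ternary valuations (as in \Cref{prop:DeligkasEtAl}), I would first run $\A$ on $\I$ to obtain an optimal stable fractional matching $\mu^*$. By the Birkhoff--von Neumann theorem, $\mu^*$ can be decomposed, in polynomial time, as a convex combination $\mu^* = \sum_{j=1}^{k} \lambda_j \mu^{(j)}$ of $k=\bigO(n^2)$ integral matchings with $\lambda_j>0$.

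By \Cref{lem:Symmetric_Ternary_Vals_OptStab_Is_Integral} (specifically, the argument in its proof), every matching $\mu^{(j)}$ in this support is in fact a stable integral matching. Moreover, since $\W(\mu^*) = \sum_j \lambda_j \W(\mu^{(j)})$ is a convex combination and no stable integral matching can exceed $\W(\mu^*)$ (as it is the optimum over the larger class of stable fractional matchings), each $\mu^{(j)}$ must have welfare exactly $\W(\mu^*)$. Hence any single $\mu^{(j)}$ extracted from the decomposition is an optimal stable integral matching for $\I$, contradicting \Cref{prop:DeligkasEtAl}.

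The reasoning here is essentially a bookkeeping step on top of \Cref{lem:Symmetric_Ternary_Vals_OptStab_Is_Integral}; there is no genuine obstacle, only the small point that the BvN decomposition is constructive and computable in polynomial time, so that an integral matching in the support of $\mu^*$ can actually be recovered and returned as the answer to the \NPH{} problem of \Cref{prop:DeligkasEtAl}.
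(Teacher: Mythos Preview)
Your argument is correct and follows the same route as the paper: combine \Cref{prop:DeligkasEtAl} with \Cref{lem:Symmetric_Ternary_Vals_OptStab_Is_Integral} to transfer NP-hardness from the integral to the fractional problem on symmetric ternary instances. The paper simply states that the corollary is immediate from these two results, whereas you spell out the reduction (and even show how to recover an optimal stable \emph{integral} matching via a BvN decomposition), which is more detail than needed but perfectly valid.
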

\section{Structural properties}
\label{sec:StructuralObservations}
In this section, we present several observations about the structure of optimal and nearly-optimal stable fractional matchings. We begin by considerably strengthening our observation in Example~\ref{eg:Motivating_Example} regarding the welfare gap between stable fractional and stable integral matchings.

\begin{theorem}\label{thm:gap}
	For every $\delta>0$ and $\alpha\geq 2$, there exists an \SMC{} instance with ternary valuations in $\{0,1,\alpha\}$ and an optimal stable fractional matching $\mu^*$ such that any stable integral matching $\mu^s$ satisfies $\W(\mu^s)\leq \left(\alpha-\frac{1}{2}-\delta\right)^{-1} \W(\mu^*)$.
\end{theorem}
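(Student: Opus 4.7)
My plan is to exhibit, for each $\delta > 0$ and $\alpha \geq 2$, a family of \SMC{} instances with ternary valuations in $\{0,1,\alpha\}$ (parameterized by $n = n(\delta)$) and an explicit stable fractional matching $\mu^*$ witnessing the claimed ratio. The core idea is to build an instance that admits (a) a ``target'' integral matching $\mu_a$ of welfare close to $2n\alpha$---mostly composed of $\alpha$--$\alpha$ edges---that fails to be stable at a controlled set of ``defect'' pairs, and (b) a ``backup'' integral matching $\mu_b$ of welfare $\approx 2n$ which is stable and matches the defect agents to $\alpha$-valued rescue partners. The fractional matching $\mu^*$ will be the convex combination $\lambda \mu_a + (1-\lambda)\mu_b$ in which each blocking pair of $\mu_a$ is deliberately chosen to have $U=V=1$ (rather than $\alpha$). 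The point of this choice is that a weight $(1-\lambda) = \Theta(1/\alpha)$ on $\mu_b$ then suffices to meet the stability threshold of value $1$, because the rescue contribution gives $(1-\lambda)\alpha = 1$ on the rescued side, while most of $\mu_a$'s welfare is preserved.

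I would then carry out the following steps: (i) write down the instance in full, arranging the $\alpha$--$\alpha$ edges into a structured subgraph (e.g., a union of short cycles or a ``ladder'') so that $\mu_a$ is essentially a near-perfect matching on them with the $\{1,1\}$-defects in predictable positions; (ii) verify stability of $\mu^*$ pair-by-pair: the blocking pairs of $\mu_a$ are neutralized by $\mu_b$'s rescue edges as above, and all other pairs are trivially stable because one of the two agents is either matched or already has utility $\alpha$; (iii) compute $\W(\mu^*) = \lambda \W(\mu_a) + (1-\lambda)\W(\mu_b)$, which for $\lambda$ near $1-1/\alpha$ is close to $(2\alpha-1)n$ when the gadgets are chained in parallel so that the $(1-\lambda)$-loss is amortized across many near-$\alpha$ pair contributions (this amortization is what pushes the attainable constant from the naive $\alpha - 1 + 1/\alpha$ up to the sharper $\alpha - 1/2$).

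The main obstacle, and the heart of the proof, is step (iv): bounding $\W(\mu^s)$ from above by $\approx 2n$ for \emph{every} stable integral matching $\mu^s$ of the instance. The auxiliary $\{0,1\}$-valued edges of the construction must be arranged so that any integral matching using more than a controlled number of $\alpha$--$\alpha$ edges necessarily triggers a $\{1,1\}$-witness blocking pair and thereby fails to be stable; conversely, stable integral matchings that avoid such triggers must rely mostly on $\{1,1\}$-edges, bounding their per-pair welfare by $2$. Establishing this dichotomy uniformly across all integral matchings requires a combinatorial ``covering'' argument, and this is what is delicate: the witness edges must rule out every subset of $\alpha$--$\alpha$ pairs that an integral matching might try to use without inadvertently destabilizing $\mu^*$ itself. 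With this bound in hand, taking $n$ large as a function of $\delta$ yields $\W(\mu^*) / \W(\mu^s) \geq \alpha - 1/2 - \delta$, which is the statement of the theorem.
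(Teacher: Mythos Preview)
Your plan has a structural flaw that would prevent step (iv) from going through: the high-welfare matching $\mu_a$ is built out of $\alpha$--$\alpha$ edges. Such edges are self-stabilizing. If $(m,w)$ is an $\alpha$--$\alpha$ edge and an integral matching includes it, then $u_m=v_w=\alpha$, the maximum possible, so neither $m$ nor $w$ can belong to any blocking pair. Consequently, the integral matching that takes all of the $\alpha$--$\alpha$ pairs of $\mu_a$ and then stably completes on the remaining ``defect'' agents (a stable completion always exists) is itself stable and already has welfare essentially $\W(\mu_a)\approx 2n\alpha$. No arrangement of auxiliary $\{1,1\}$ witness edges can force $\W(\mu^s)$ down to $\approx 2n$: a $\{1,1\}$ pair $(m',w')$ blocks only when both endpoints have utility below $1$, and agents sitting on $\alpha$--$\alpha$ edges have utility $\alpha$, so they are never involved. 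Your step (iv) is therefore not merely ``delicate'' but infeasible as stated.

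The same choice also breaks step (ii). If $(m,w)\in\mu_a$ is an $\alpha$--$\alpha$ edge and $\mu^*=\lambda\mu_a+(1-\lambda)\mu_b$ with $\lambda<1$, stability of $\mu^*$ at $(m,w)$ requires $u_m(\mu^*)\geq\alpha$ or $v_w(\mu^*)\geq\alpha$; since utilities are capped at $\alpha$, this forces $u_m(\mu_b)=\alpha$ or $v_w(\mu_b)=\alpha$, i.e.\ one of $m,w$ must also sit on an $\alpha$-valued edge in $\mu_b$. That is incompatible with $\W(\mu_b)\approx 2n$ when most agents are in this situation.

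The paper sidesteps both issues by using \emph{no} $\alpha$--$\alpha$ edges: every high-value edge is one-sided ($0$--$\alpha$ or $\alpha$--$0$). One-sided edges carry no stability constraint (the $0$-side is trivially content), so they can take arbitrary fractional weight; and using one integrally leaves one endpoint with utility $0$, which is exactly what allows that endpoint to participate in a $\{1,1\}$ blocking pair and kill the high-welfare integral matchings via a chain argument. The paper's fractional matching is also not a two-term combination: each critical agent receives weight $1/\alpha$ on two distinct one-sided $\alpha$-edges (so its utility is exactly $1$, neutralizing the adjacent $\{1,1\}$ edge) and weight $1-2/\alpha$ on a third. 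This three-way split is what produces the constant $\alpha-\tfrac12$; a two-matching scheme with $\lambda=1-1/\alpha$ yields only $\alpha-1+1/\alpha$ (equal to $\alpha-\tfrac12$ only at $\alpha=2$), and the ``amortization'' you invoke to close the gap is really an increase in support size that your framework does not accommodate.
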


\begin{figure}[ht]
	\footnotesize
	\centering
		\begin{tikzpicture}[scale=0.75]
		\tikzset{man/.style = {shape=circle,draw,inner sep=1pt}}
		\tikzset{woman/.style = {shape=diamond,draw,inner sep=1pt}}
		\tikzset{edge/.style = {solid}}
		\tikzset{earedge/.style = {densely dotted}}
		\node[man] (3) at (2,0)  {$m_1$};
		\node[woman] (4) at (4,0)  {$w_1$};
		\node[man] (5) at (6,0)  {$m_2$};
		\node[man] (6) at (4,-2)  {$e_1^2$};
		\node[woman] (7) at (6,-2)  {$f_2^3$};
		\node[man] (9) at (4,2)  {$e_1^1$};
		\node[man] (10) at (6,4)  {$e_2^3$};
		\node[woman] (11) at (6,2)  {$f_2^2$};
		\node[woman] (17) at (8,0)  {$w_2$};
		\node[man] (18) at (8,-2)  {$e_2^2$};
		\node[man] (19) at (8,2)  {$e_2^1$};
		\node[woman] (20) at (8,4)  {$f_2^1$};
		\node[man] (12) at (11,0)  {$m_k$};
		\node[woman] (13) at (13,0)  {$w_k$};
		\node[man] (14) at (11,4)  {$e_k^3$};
		\node[woman] (15) at (11,2)  {$f_k^2$};
		\node[woman] (16) at (11,-2)  {$f_k^3$};
		\draw[edge] (3) to node [near start,fill=white,inner sep=2pt] (324) {\footnotesize{$0$}} node [near end,fill=white,inner sep=2pt] (324) {\footnotesize{$\alpha$}} (4);
		\draw[edge] (4) to node [near start,fill=white,inner sep=2pt] (425) {\footnotesize{$0$}} node [near end,fill=white,inner sep=2pt] (425) {\footnotesize{$\alpha$}} (5);
		\draw[edge] (4) to node [near start,fill=white,inner sep=2pt] (426) {\footnotesize{$0$}} node [near end,fill=white,inner sep=2pt] (426) {\footnotesize{$\alpha$}} (6);
		\draw[edge] (5) to node [near start,fill=white,inner sep=2pt] (527) {\footnotesize{$0$}} node [near end,fill=white,inner sep=2pt] (527) {\footnotesize{$\alpha$}} (7);
		\draw[edge] (5) to node [near start,fill=white,inner sep=2pt] (527) {\footnotesize{$0$}} node [near end,fill=white,inner sep=2pt] (527) {\footnotesize{$\alpha$}} (17);
		\draw[edge] (9) to node [near start,fill=white,inner sep=2pt] (924) {\footnotesize{$1$}} node [near end,fill=white,inner sep=2pt] (924) {\footnotesize{$1$}} (4);
		\draw[edge] (11) to node [near start,fill=white,inner sep=2pt] (11210) {\footnotesize{$0$}} node [near end,fill=white,inner sep=2pt] (11210) {\footnotesize{$\alpha$}} (10);
		\draw[edge] (5) to node [near start,fill=white,inner sep=2pt] (5211) {\footnotesize{$1$}} node [near end,fill=white,inner sep=2pt] (5211) {\footnotesize{$1$}} (11);
		\draw[edge] (17) to node [very near start,fill=white,inner sep=2pt] (5212) {\footnotesize{$0$}} node [midway,fill=white,inner sep=2pt] (5212) {$\dots$} node [very near end,fill=white,inner sep=2pt] (5212) {\footnotesize{$\alpha$}} (12);	
		\draw[edge] (17) to node [near start,fill=white,inner sep=2pt] (527) {\footnotesize{$0$}} node [near end,fill=white,inner sep=2pt] (527) {\footnotesize{$\alpha$}} (18);
		\draw[edge] (17) to node [near start,fill=white,inner sep=2pt] (527) {\footnotesize{$1$}} node [near end,fill=white,inner sep=2pt] (527) {\footnotesize{$1$}} (19);
		\draw[edge] (19) to node [near start,fill=white,inner sep=2pt] (527) {\footnotesize{$0$}} node [near end,fill=white,inner sep=2pt] (527) {\footnotesize{$\alpha$}} (20);
		\draw[edge] (12) to node [near start,fill=white,inner sep=2pt] (12213) {\footnotesize{$0$}} node [near end,fill=white,inner sep=2pt] (12213) {\footnotesize{$\alpha$}} (13);
		\draw[edge] (15) to node [near start,fill=white,inner sep=2pt] (15214) {\footnotesize{$0$}} node [near end,fill=white,inner sep=2pt] (15214) {\footnotesize{$\alpha$}} (14);
		\draw[edge] (15) to node [near start,fill=white,inner sep=2pt] (15212) {\footnotesize{$1$}} node [near end,fill=white,inner sep=2pt] (15212) {\footnotesize{$1$}} (12);
		\draw[edge] (12) to node [near start,fill=white,inner sep=2pt] (12216) {\footnotesize{$0$}} node [near end,fill=white,inner sep=2pt] (12216) {\footnotesize{$\alpha$}} (16);
		\end{tikzpicture}
	\caption{The \SMC{} instance used in the proof of \cref{thm:gap}. As a convention, in graph representations where the two sides of the bipartition do not appear as left and right, we use circles to represent men and diamonds to represent women.}
	\label{fig:gap}
\end{figure}
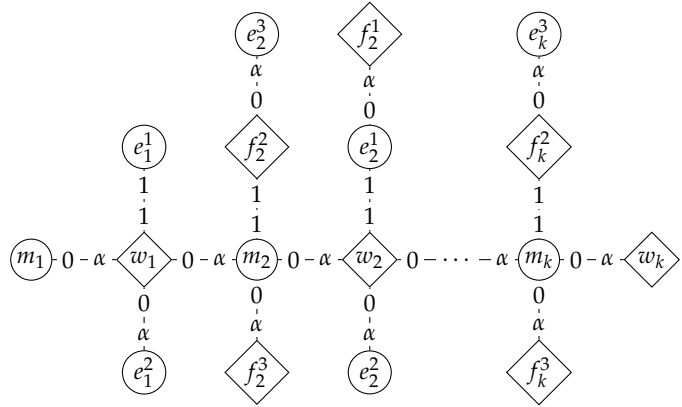

\begin{proof}
Consider the \SMC{} instance shown in \Cref{fig:gap}, which, for some $k$ (to be determined later), consists of man $m_i$ and woman $w_i$ for all $i \in \{1,\dots,k\}$, men $e_i^1$ and $e_i^2$ for all $i \in \{1,\dots,k-1\}$, woman $f_i^1$ for all $i \in \{2,\dots,k-1\}$, man $e_i^3$ and women $f_i^2$ and $f_i^3$ for all $i \in \{2,\dots,k\}$. 

To specify the valuations, we will use the following notation: For any $a,b \geq 0$, we will say that a man-woman pair $(m,w)$ is an ``$a$\textemdash$b$'' edge if $U(m,w) = a$ and $V(m,w)=b$. In the instance in \Cref{fig:gap}, the pairs $\{(m_i,w_i)\}_{i=1}^{k}$, $\{(m_i,f_i^3)\}_{i=2}^{k}$, and $\{(e_i^1,f_i^1)\}_{i=2}^{k-1}$ are $0$\textemdash$\alpha$ edges, the pairs $\{(m_{i+1},w_i)\}_{i=1}^{k-1}$, $\{(e_i^2,w_i)\}_{i=1}^{k-1}$, and $\{(e_i^3,f_i^2)\}_{i=2}^{k}$ are $\alpha$\textemdash$0$ edges, and the pairs $\{(e_i^1,w_i)\}_{i=1}^{k-1}$ and $\{(m_i,f_i^2)\}_{i=2}^{k}$ are $1$\textemdash$1$ edges. All other pairs are $0$\textemdash$0$ edges. We remark that a slightly modified version of the instance in \Cref{fig:gap} will be used again later in the proof of \cref{thm:strong-inapprox} (refer to the \emph{accumulator gadget} in Figure~\ref{fig:acc}a).

Consider a stable integral matching $\mu^s$. Since the pair $(e^1_1,w_1)$ is a $1$\textemdash$1$ edge, the stability requirement dictates that either $(m_1,w_1)$ or $(e^1_1,w_1)$ is contained in $\mu^s$. Either of these pairs contribute at most $\alpha$ to the social welfare (recall that $\alpha \geq 2$). Additionally, since $\mu^s$ is an integral matching, we get that $\mu^s(m_2,w_1) = 0$. This, in turn, forces the pair $(m_2,f_2^2)$ to be contained in $\mu^s$ as well (or, otherwise, both $m_2$ and $f_2^2$ will have zero utility and violate stability). Continuing in this manner, we observe that the stability requirement for the pairs $\{(m_i,f_i^2)\}_{i=2}^{k}$ and $\{(e_i^1,w_i)\}_{i=2}^{k-1}$ will force these pairs to be contained in $\mu$ as well. Each of these pairs contributes $2$ to the social welfare, and thus, $\W(\mu^s) \leq 4k-6+\alpha$. 
	
Now define a stable \emph{fractional} matching $\mu$ as follows: Set $\mu(m_i,w_i)=1/\alpha$ for all $i \in \{1,\dots,k\}$, $\mu(e^2_i,w_i)=1-2/\alpha$, $\mu(m_{i+1},w_i)=1/\alpha$, $\mu(e^1_i,w_i)=0$ for all $i \in \{1,\dots,k-1\}$, $\mu(m_i,f^2_i)=0$, $\mu(m_i,f^3_i)=1-2/\alpha$, $\mu(e^3_i,f^2_i)=1$ for all $i \in \{2,\dots,k\}$, and $\mu(e^1_i,f^1_i)=1$, for all $i \in \{2,\dots,k-1\}$. (A similar matching will be used later in the proof of \cref{lem:sat}.) It is easy to verify that the social welfare of $\mu$ is $\W(\mu)=4k(\alpha-1/2)-5\alpha+3$. Furthermore, for each $i \in \{1,\dots,k\}$, both man $m_i$ and woman $w_i$ have utility $1$ in $\mu$ (i.e., 
$u_{m_i}(\mu) = v_{w_i}(\mu) = 1$). Every $1$\textemdash$1$ edge includes one of these agents, and these are the only edges that need to be checked for stability (any other pair has at least one agent with a valuation of zero and, hence, the stability constraint for those pairs is trivially satisfied). Therefore, $\mu$ is stable.
The theorem follows by setting $k$ to be sufficiently large; specifically, for any given $\delta > 0$, choosing $k \geq \frac{1}{4} \cdot \left( 6 - \alpha + \frac{\alpha^2 - 23\alpha /2 + 6}{\delta} \right)$ gives the desired bound.
\end{proof}

We emphasize that \cref{thm:gap} is a \emph{positive} result as it establishes that stable fractional matchings can have much higher welfare than their integral counterparts, and highlights the importance of \OptStab.
 
Our next observation (\Cref{prop:NonConvexity}) shows that the set of stable fractional matchings can be \emph{non-convex} even for binary valuations. Interestingly, this does not prove to be a barrier in efficiently solving \OptStab{} in this setting (see \Cref{thm:MaxWelfareBinaryVals} in \Cref{sec:TractabilityResults}).

\begin{proposition}
	\label{prop:NonConvexity}
	There exists an \SMC{} instance with binary valuations for which the set of stable fractional matchings is non-convex.
\end{proposition}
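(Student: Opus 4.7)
The plan is to exhibit an explicit \SMC{} instance with binary valuations together with two stable integral matchings $\mu_1,\mu_2$ whose midpoint $\tfrac12\mu_1+\tfrac12\mu_2$ has a blocking pair. The underlying intuition is: since all utilities lie in $[0,1]$ under binary valuations, a pair $(m,w)$ with $U(m,w)=V(m,w)=1$ is non-blocking in a stable matching $\mu'$ exactly when $u_m(\mu')=1$ or $v_w(\mu')=1$. To make the average unstable at such a pair, I need one stable matching in which $m$ is saturated by a different partner he values at $1$ while $w$ has utility $0$, and another stable matching with the roles reversed; averaging then drops both $u_m$ and $v_w$ to $\tfrac12<1$.

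First I would specify the instance on three men and three women. For example, set $U(m_1,w_1)=U(m_1,w_2)=U(m_2,w_1)=U(m_2,w_3)=1$ and all remaining entries of $U$ equal to $0$, and set $V(m_1,w_1)=V(m_2,w_1)=1$ with all remaining entries of $V$ equal to $0$. Thus $m_3$ and $w_2,w_3$ have identically zero valuations, and the only pair satisfying $U=V=1$ is $(m_1,w_1)$. Consider the two integral complete matchings
\[
\mu_1=\{(m_1,w_2),(m_2,w_3),(m_3,w_1)\},\qquad \mu_2=\{(m_1,w_3),(m_2,w_1),(m_3,w_2)\}.
\]

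Next I would verify, by direct case analysis over the nine man-woman pairs, that both $\mu_1$ and $\mu_2$ are stable. Most pairs are trivially non-blocking because $U(m,w)=0$ or $V(m,w)=0$; the only nontrivial pair is $(m_1,w_1)$, for which $u_{m_1}(\mu_1)=U(m_1,w_2)=1$ in $\mu_1$ (so $m_1$'s threshold is met) and $v_{w_1}(\mu_2)=V(m_2,w_1)=1$ in $\mu_2$ (so $w_1$'s threshold is met). Hence both matchings are stable fractional (in fact integral) matchings.

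Finally, the key calculation: in $\mu\coloneqq\tfrac12\mu_1+\tfrac12\mu_2$, one has $u_{m_1}(\mu)=\tfrac12(1+0)=\tfrac12$ and $v_{w_1}(\mu)=\tfrac12(0+1)=\tfrac12$, while $U(m_1,w_1)=V(m_1,w_1)=1$, so $(m_1,w_1)$ is a blocking pair and $\mu$ is not stable. This proves the set of stable fractional matchings is non-convex. There is no conceptual obstacle beyond bookkeeping; the main design consideration is arranging the asymmetry in $U$ and $V$ so that the utilities of $m_1$ under $\mu_1$ and of $w_1$ under $\mu_2$ can each reach $1$ while their counterparts vanish, and this is achieved by giving $w_1$ only two in-edges with $V=1$ (from $m_1,m_2$) and giving $m_1$ only two out-edges with $U=1$ (to $w_1,w_2$), then matching $w_1$ to $m_3$ in $\mu_1$ and $m_1$ to $w_3$ in $\mu_2$.
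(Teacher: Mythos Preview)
Your proposal is correct and follows essentially the same approach as the paper: both construct an explicit $3\times 3$ binary \SMC{} instance and exhibit two stable integral matchings whose midpoint has a blocking pair (the paper's instance is slightly different, with the critical pair being $(m_2,w_2)$). One small inaccuracy in your write-up: you state that the only nontrivial pair is $(m_1,w_1)$, but in your instance $(m_2,w_1)$ also satisfies $U(m_2,w_1)=V(m_2,w_1)=1$ and must be checked; fortunately $u_{m_2}(\mu_1)=U(m_2,w_3)=1$ and $u_{m_2}(\mu_2)=U(m_2,w_1)=1$, so it is non-blocking in both matchings and your argument goes through unchanged.
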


\begin{figure}[t]
	\centering
	\footnotesize
\subfloat[]{
	\begin{tikzpicture}[scale=1.2]
\tikzset{man/.style = {shape=circle,draw,inner sep=1pt}}
\tikzset{woman/.style = {shape=circle,draw,inner sep=1pt}}
\tikzset{edge/.style = {solid}}
\node[man]   (1) at (0,2.4) {$m_1$};
\node[man]   (2) at (0,1.2) {$m_2$};
\node[man]   (3) at (0,0)   {$m_3$};
\node[woman] (4) at (2,2.4) {$w_1$};
\node[woman] (5) at (2,1.2) {$w_2$};
\node[woman] (6) at (2,0)   {$w_3$};

\draw[edge] (1) to node [very near start,fill=white,inner sep=0pt] (151) {$0$} node [very near end,fill=white,inner sep=0pt] (152) {$1$} (5);
\draw[edge] (1) to node [very near start,fill=white,inner sep=0pt] (161) {$1$} node [very near end,fill=white,inner sep=0pt] (162) {$0$} (6);
\draw[edge] (2) to node [very near start,fill=white,inner sep=0pt] (241) {$1$} node [very near end,fill=white,inner sep=0pt] (242) {$0$} (4);
\draw[edge] (2) to node [very near start,fill=white,inner sep=0pt] (251) {$1$} node [very near end,fill=white,inner sep=0pt] (252) {$1$} (5);
\draw[edge] (2) to node [very near start,fill=white,inner sep=0pt] (261) {$0$} node [very near end,fill=white,inner sep=0pt] (262) {$1$} (6);
\draw[edge] (3) to node [very near start,fill=white,inner sep=0pt] (341) {$0$} node [very near end,fill=white,inner sep=0pt] (342) {$1$} (4);
\draw[edge] (3) to node [very near start,fill=white,inner sep=0pt] (351) {$1$} node [very near end,fill=white,inner sep=0pt] (352) {$0$} (5);
\end{tikzpicture}
}
\hspace{1in}
\subfloat[]{
	\begin{tikzpicture}[scale=1.2]
\tikzset{man/.style = {shape=circle,draw,inner sep=1pt}}
\tikzset{woman/.style = {shape=circle,draw,inner sep=1pt}}
\tikzset{edge/.style = {solid}}
\node[man]   (1) at (0,2.4) {$m_1$};
\node[man]   (2) at (0,1.2) {$m_2$};
\node[man]   (3) at (0,0)   {$m_3$};
\node[woman] (4) at (2,2.4) {$w_1$};
\node[woman] (5) at (2,1.2) {$w_2$};
\node[woman] (6) at (2,0)   {$w_3$};
\draw[edge] (1) to node [very near start,fill=white,inner sep=0pt] (141) {$1$} node [very near end,fill=white,inner sep=0pt] (142) {$1$} (4);
\draw[edge] (1) to node [very near start,fill=white,inner sep=0pt] (161) {$0$} node [very near end,fill=white,inner sep=0pt] (162) {$\alpha$} (6);
\draw[edge] (2) to node [very near start,fill=white,inner sep=0pt] (241) {$0$} node [very near end,fill=white,inner sep=0pt] (242) {$\alpha$} (4);
\draw[edge] (2) to node [very near start,fill=white,inner sep=0pt] (251) {$1$} node [very near end,fill=white,inner sep=0pt] (252) {$1$} (5);
\draw[edge] (2) to node [very near start,fill=white,inner sep=0pt] (261) {$\alpha$} node [very near end,fill=white,inner sep=0pt] (262) {$0$} (6);
\draw[edge] (3) to node [very near start,fill=white,inner sep=0pt] (341) {$\alpha$} node [very near end,fill=white,inner sep=0pt] (342) {$0$} (4);
%
%
\draw[edge] (3) to node [very near start,fill=white,inner sep=0pt] (361) {$1$} node [very near end,fill=white,inner sep=0pt] (362) {$1$} (6);
\end{tikzpicture}
}
\caption{The \SMC{} instances used in the proofs of \cref{prop:NonConvexity} and \cref{prop:UnstableSupport}.\label{fig:structural}}
\end{figure}

\begin{proof}
Consider the instance $\I = \langle M,W,U,V \rangle$ with three men $m_1,m_2,m_3$ and three women $w_1,w_2,w_3$, whose graph representation and agent valuations are 
shown in Figure~\ref{fig:structural}a. Consider the integral matchings $\mu^{(1)} \coloneqq \{(m_1,w_3),(m_2,w_1),(m_3,w_2)\}$ and $\mu^{(2)} \coloneqq \{(m_1,w_2),(m_2,w_3),(m_3,w_1)\}$. It is easy to verify that both $\mu^{(1)}$ and $\mu^{(2)}$ are stable for $\I$. However, the fractional matching $\mu \coloneqq 0.5\mu^{(1)} + 0.5\mu^{(2)}$ is not stable since $(m_2,w_2)$ is a blocking pair; indeed, $0.5 = u_{m_2}(\mu) < U(m_2,w_2) = 1$ and $0.5 = v_{w_2}(\mu) < V(m_2,w_2) =~1$.
\end{proof}

\begin{remark}
A follow-up work to the conference version of our paper has shown that non-convexity also holds for strict preferences~\citep{NN20study}.
\label{rem:Nonconvexity_Strict}
\end{remark}

\begin{remark}
It is worth comparing the non-convexity results in \Cref{prop:NonConvexity} and \Cref{rem:Nonconvexity_Strict} with similar results for other stability notions. In particular, \citet[Theorem 1]{AK19random} have shown that the set of strongly stable matchings (see \Cref{sec:Stability_Notions} for the definition) can be non-convex. In \Cref{sec:Non-convexity_Implication}, we revisit the counterexample used in their proof, which is stated in terms of ordinal preferences, and show that there exists a realization of cardinal preferences consistent with their instance such that the relevant convex combination of (strongly) stable matchings is still stable. Thus, the non-convexity result of \cite{AK19random} for strong stability does not imply the same for the set of stable matchings.
\end{remark}

The structure of stable fractional matchings becomes even more interesting (and, as we will see in \Cref{sec:IntractabilityResults}, also computationally unwieldy) when we move to \emph{ternary} valuations. It turns out that the support of a stable fractional matching can comprise entirely of \emph{unstable} integral matchings (\Cref{prop:UnstableSupport}), and its size can grow \emph{linearly} with the input (\Cref{thm:LowerBoundSupportSize}). These observations pose major limitations on the set of algorithmic tools at our disposal.

\begin{proposition}
 \label{prop:UnstableSupport}
 There exists an \SMC{} instance with ternary valuations and a stable fractional matching $\mu$ such that every integral matching in any support of $\mu$ is unstable.
\end{proposition}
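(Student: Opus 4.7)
The plan is to use the \SMC{} instance already pictured in Figure~\ref{fig:structural}b (with $\alpha\geq 2$) and exhibit the explicit fractional matching $\mu$ that places weight $1/2$ on each of the six ``non-diagonal'' edges $(m_1,w_2),(m_1,w_3),(m_2,w_1),(m_2,w_3),(m_3,w_1),(m_3,w_2)$ and weight $0$ on the three $1$\textemdash$1$ diagonal edges $(m_i,w_i)$. A direct row/column sum check confirms $\mu$ is a complete fractional matching, so the setup is well-defined.

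The next step is to check stability of $\mu$ by tabulating agent utilities. Summing $U$- and $V$-values over the support yields $u_{m_1}=0$, $u_{m_2}=u_{m_3}=\alpha/2$, $v_{w_1}=v_{w_3}=\alpha/2$, $v_{w_2}=0$. Every non-diagonal edge has at least one endpoint valuing it at $0$ and is therefore trivially non-blocking; the only pairs requiring a genuine check are the three $1$\textemdash$1$ edges. For each such edge $(m_i,w_i)$, at least one of $u_{m_i},v_{w_i}$ equals $\alpha/2\geq 1$ (using $\alpha\geq 2$), so none of them blocks. Hence $\mu$ is stable.

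The core of the argument is characterizing every possible support. Since $\mu$ is complete, the Birkhoff-von Neumann theorem guarantees that any decomposition expresses $\mu$ as a convex combination of \emph{perfect} integral matchings whose edges all have $\mu(m,w)>0$, i.e., lie among the six non-diagonal edges. In the bipartite subgraph induced by these six edges, $m_1$ has neighbors $\{w_2,w_3\}$, $m_2$ has neighbors $\{w_1,w_3\}$, and $m_3$ has neighbors $\{w_1,w_2\}$; a one-line case analysis (pick the match for $m_1$, and the rest is forced) shows that exactly two perfect matchings exist, namely $\mu^{(1)}=\{(m_1,w_3),(m_2,w_1),(m_3,w_2)\}$ and $\mu^{(2)}=\{(m_1,w_2),(m_2,w_3),(m_3,w_1)\}$. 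Consequently, every support of $\mu$ is a subset of $\{\mu^{(1)},\mu^{(2)}\}$, and in fact both matchings must appear with positive weight (otherwise some column would not sum to $1$).

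It remains to show each of $\mu^{(1)}$ and $\mu^{(2)}$ is unstable as an integral matching. In $\mu^{(1)}$, both $m_2$ and $w_2$ receive utility $0$, while $U(m_2,w_2)=V(m_2,w_2)=1$, so $(m_2,w_2)$ is a blocking pair; a symmetric check gives that $(m_1,w_1)$ blocks $\mu^{(2)}$. The main ``risk'' in the plan is the support-enumeration step, but it reduces to inspecting a tiny bipartite graph on six edges and is essentially routine; everything else is arithmetic verification.
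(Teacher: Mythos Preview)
Your proof is correct and uses the same instance as the paper, but you build a considerably simpler fractional matching. The paper takes $\alpha\geq 3$ and sets $\mu=\frac{1}{\alpha(\alpha-1)}\mu^{(2)}+\frac{1}{\alpha}\mu^{(3)}+\frac{\alpha-2}{\alpha-1}\mu^{(5)}$ (in its own numbering), a three-matching combination; it then argues that the diagonal matching is the \emph{unique} stable integral matching and observes that $\mu(m_1,w_1)=0$ rules it out of any support. Your construction instead takes $\alpha\geq 2$ and the uniform combination $\frac{1}{2}\mu^{(2)}+\frac{1}{2}\mu^{(3)}$ (again in the paper's numbering), then bypasses the uniqueness step by directly enumerating the only two perfect matchings on the non-diagonal edges and checking each is blocked. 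What your route buys is simplicity and a slightly wider range of $\alpha$; what the paper's route buys is that its $\mu$ is (as remarked after the proof) the unique \emph{optimal} stable fractional matching, a property your uniform $\mu$ does not have but which the proposition does not require.
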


\begin{proof}
Consider the \SMC{} instance $\I = \langle M,W,U,V \rangle$ with three men and three women 
shown in Figure~\ref{fig:structural}b. The parameter $\alpha\geq 3$ is a constant. There are six different perfect integral matchings:
\begin{itemize}
	\item Matching $\mu^{(1)}$, which consists of pairs $(m_1,w_1)$, $(m_2,w_2)$, and $(m_3,w_3)$ and has a social welfare of $6$. It is easy to verify that this is the unique stable integral matching. Also, any subset of $\mu^{(1)}$ is not stable as the pair that is missing from $\mu^{(1)}$ will be blocking.
	\item Matching $\mu^{(2)}$, which consists of pairs $(m_1,w_2)$, $(m_2,w_3)$, and $(m_3,w_1)$ and has a social welfare of $2\alpha$. The matching is not stable since the pair $(m_1,w_1)$ is blocking.
	\item Matching $\mu^{(3)}$, which consists of pairs $(m_1,w_3)$, $(m_2,w_1)$, and $(m_3,w_2)$ and has a social welfare of $2\alpha$. It is not stable since $(m_2,w_2)$ is blocking.
	\item Matching $\mu^{(4)}$, which consists of pairs $(m_1,w_1)$, $(m_2,w_3)$, and $(m_3,w_2)$ and has a social welfare of $\alpha+2$. It is not stable since $(m_3,w_3)$ is blocking.
	\item Matching $\mu^{(5)}$, which consists of pairs $(m_1,w_3)$, $(m_2,w_2)$, and $(m_3,w_1)$ and has a social welfare of $2\alpha+2$. It is not stable since $(m_1,w_1)$ is blocking.
	\item Matching $\mu^{(6)}$, which consists of pairs $(m_1,w_2)$, $(m_2,w_1)$, and $(m_3,w_3)$ and has a social welfare of $\alpha+2$. It is not stable since the pair $(m_2,w_2)$ is blocking.
\end{itemize}

Consider the matching $\textstyle{ \mu \coloneqq \frac{1}{\alpha(\alpha-1)} \cdot \mu^{(2)} + \frac{1}{\alpha} \cdot \mu^{(3)} + \frac{\alpha-2}{\alpha-1} \cdot \mu^{(5)} }$. It is easy to verify that $\mu$ is stable. Indeed, the utilities of the agents in $\mu$ are given by $u_{m_1}(\mu) = 0$, $u_{m_2}(\mu) = 1$, $u_{m_3}(\mu) = \alpha-1$, $v_{w_1}(\mu) = 1$, $v_{w_2}(\mu) = \textstyle{ \frac{\alpha-2}{\alpha-1} }$ and $v_{w_3}(\mu) = \textstyle{ \alpha - \frac{1}{\alpha-1} }$. Notice that only the pairs $(m_1,w_1)$, $(m_2,w_2)$, and $(m_3,w_3)$ need to be checked for stability. 
For each of these pairs, at least one of the agents has a utility of at least $1$ in $\mu$, implying that $\mu$ is stable. Finally, notice that $\mu(m_1,w_1) = 0$, which means that the unique stable integral matching $\mu^{(1)}$ cannot occur in any support of $\mu$.\end{proof}

We remark that with some extra work, one can show that the matching $\mu$ in the proof of \cref{prop:UnstableSupport} is the unique optimal stable fractional matching.

One might wonder whether the support of an optimal stable fractional matching always consists of an optimal integral matching. This turns out to not be the case, as illustrated by the following instance: Consider two men $m_1,m_2$ and two women $w_1,w_2$ with valuations given by $U = \begin{bsmallmatrix}
2 & 0 \\
1 & 0 \\
\end{bsmallmatrix}$ and $V = \begin{bsmallmatrix}
0 & 0 \\
1 & 2 \\
\end{bsmallmatrix}$. The unique optimal integral matching is $\{(m_1,w_1),(m_2,w_2)\}$ while the unique optimal stable fractional matching is $\{(m_1,w_2),(m_2,w_1)\}$.

As mentioned previously in Section~\ref{sec:Preliminaries}, a (stable) fractional matching is the convex combination of at most $n^2$ integral ones. \cref{thm:UpperBoundSupportSize} provides a stronger bound on the support size of an \emph{optimal} stable fractional matching.

\begin{theorem}
 \label{thm:UpperBoundSupportSize}
 Given any \SMC{} instance $\I$, there exists an optimal stable fractional matching for $\I$ with at most $4n$ integral matchings in its support.
\end{theorem}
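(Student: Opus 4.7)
The plan is to apply a basic-feasible-solution argument to a linear program whose variables are mixing weights of integral matchings. Fix any optimal stable fractional matching $\mu^*$, and record its utility profile $u_m^* := u_m(\mu^*)$ and $v_w^* := v_w(\mu^*)$. By the BvN theorem (stated in \Cref{sec:Preliminaries}), we may write $\mu^* = \sum_{j=1}^{K}\lambda_j^*\,\mu^{(j)}$ for integral matchings $\mu^{(1)},\dots,\mu^{(K)}$ with $K = \bigO(n^2)$ and positive weights $\lambda_j^*$ summing to one.

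The one genuinely conceptual step is the observation that both stability (\Cref{def:Stability}) and welfare depend \emph{only} on the utility profile: $\mu$ is stable iff for every pair $(m,w)$ either $u_m(\mu) \geq U(m,w)$ or $v_w(\mu) \geq V(m,w)$, and $\W(\mu) = \sum_m u_m(\mu) + \sum_w v_w(\mu)$. Consequently, \emph{any} convex combination of $\mu^{(1)},\dots,\mu^{(K)}$ that reproduces the utility profile of $\mu^*$ is itself an optimal stable fractional matching.

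With that in hand, I would consider the polytope $\Lambda \subseteq \mathbb{R}_{\geq 0}^{K}$ cut out by the $2n+1$ equalities
\begin{align*}
\sum_{j=1}^{K}\lambda_j = 1,\quad \sum_{j=1}^{K}\lambda_j\,u_m(\mu^{(j)}) = u_m^* \;\;\forall\, m \in M,\quad \sum_{j=1}^{K}\lambda_j\,v_w(\mu^{(j)}) = v_w^* \;\;\forall\, w \in W.
\end{align*}
$\Lambda$ is non-empty (it contains $\lambda^*$) and bounded (by $\sum_j\lambda_j = 1$), so it admits a vertex $\hat\lambda$. A standard LP fact asserts that any vertex of a polyhedron of the form $\{\lambda \geq 0 : A\lambda = b\}$ has at most $\mathrm{rank}(A)$ strictly positive coordinates; here $\mathrm{rank}(A) \leq 2n+1$. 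Hence the matching $\hat\mu := \sum_j \hat\lambda_j\,\mu^{(j)}$ is an optimal stable fractional matching whose support consists of at most $2n+1 \leq 4n$ integral matchings, as required.

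The main (and essentially only) obstacle is the conceptual step: recognizing that, unlike notions such as Pareto-optimality, stability is a condition on the \emph{utility vector} alone and is therefore preserved on the affine slice $\{(u(\mu),v(\mu)) = (u^*,v^*)\}$. After that, the support bound is immediate from a routine basic-feasible-solution argument (or, equivalently, from Carathéodory's theorem applied in the $2n$-dimensional utility space, where the vector $(u^*,v^*)$ lies in the convex hull of the $K$ utility vectors induced by $\mu^{(1)},\dots,\mu^{(K)}$).
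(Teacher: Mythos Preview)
Your proof is correct, but it follows a genuinely different route from the paper's.

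The paper works directly in the $n^2$-dimensional space of matching entries $\mu(m,w)$: it observes that an optimal stable fractional matching solves the linear program \ref{LP:OPT-Thresh} for some stability-preserving thresholds, takes an extreme-point optimum there, and counts tight constraints to conclude that at most $4n$ entries $\mu(m,w)$ are positive, whence the support bound. Your argument instead works in the space of mixing weights over a fixed BvN decomposition and pivots on the observation that stability (\Cref{def:Stability}) and welfare are functions of the utility vector alone, so one may freely recombine the support matchings subject only to $2n+1$ linear equalities. The paper's proof is more tightly coupled to the LP machinery already set up in \Cref{subsec:OptimalStableMatchingProgram}, whereas yours is self-contained and, as a bonus, actually yields the sharper bound $2n+1$ rather than $4n$. (Equivalently, your last parenthetical remark---Carath\'eodory in the $2n$-dimensional utility space---is perhaps the cleanest way to phrase the argument.) Both are ultimately basic-feasible-solution arguments; they just live in different polytopes.
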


\begin{proof}
Let $\mu^*$ be an optimal stable fractional matching for $\I$. Recall from Section~\ref{subsec:OptimalStableMatchingProgram} that $\mu^*$ solves the program \ref{LP:OPT-Thresh} for some set of stability-preserving utility thresholds. Observe that \ref{LP:OPT-Thresh} has $n^2$ free variables (we ignore here the $2n$ variables $u_m$ for $m\in M$ and $v_w$ for $w\in W$, which depend on the remaining ones according to constraints (\ref{LP-OPT-Thresh:Utility-Men}) and (\ref{LP-OPT-Thresh:Utility-Women})). Without loss of generality, $\mu^*$ is an optimal {\em extreme point} solution of \ref{LP:OPT-Thresh}. That is, when \ref{LP:OPT-Thresh} is instantiated for $\mu^*$, $n^2$ linearly independent inequality constraints become tight. Among them, at most $4n$ can correspond to the sets of constraints (\ref{LP-OPT-Thresh:Utility-Men-LowerBound}), (\ref{LP-OPT-Thresh:Utility-Women-LowerBound}), (\ref{LP-OPT-Thresh:Feasibility-Men}), and (\ref{LP-OPT-Thresh:Feasibility-Women}). The remaining ones must correspond to the set of constraints (\ref{LP-OPT-Thresh:Nonnegativity}), implying that at least $n^2-4n$ free variables will be equal to zero. Thus, $\mu^*$ can assign positive weights to at most $4n$ man-woman pairs and, consequently, can have at most $4n$ integral matchings in its support.
\end{proof}

Next we show that the bound in \cref{thm:UpperBoundSupportSize} is tight up to a constant factor.

\begin{theorem}
 \label{thm:LowerBoundSupportSize}
 For every $\rho\in (0,1]$, there exists a family of \SMC{} instances with ternary valuations for which any support of a $\rho$-efficient stable fractional matching consists of $\Omega(\rho n)$ integral matchings.
\end{theorem}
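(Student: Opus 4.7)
I plan to exhibit, for each $\rho\in(0,1]$, a family of \SMC{} instances with ternary valuations that forces a designated ``central'' man $m^*$ to have positive matching weight with $\Omega(\rho n)$ women in every $\rho$-efficient stable fractional matching. The conclusion then follows from an elementary \emph{row-support observation}: in any BvN decomposition $\mu=\sum_j\lambda_j\mu^{(j)}$, each integral matching $\mu^{(j)}$ contributes exactly one positive entry per row, so the number of matchings in any support of $\mu$ is at least the number of positive entries in any single row of $\mu$.

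The construction places $m^*$ alongside a set $W^*=\{w_1,\ldots,w_\ell\}$ of $\ell=\Theta(\rho n)$ ``target'' women, with $U(m^*,w_j)=V(m^*,w_j)=\alpha$ for each $j$. To each $w_j$ I attach a constant-size ``accumulator gadget'' $A_j$, adapted from the chain-style structure used in the proof of \Cref{thm:gap} (Figure~\ref{fig:gap}): it contains a secondary man $m_j$ with $U(m_j,w_j)=V(m_j,w_j)=\alpha$ and a short chain of $1$-$1$ and $\alpha$-$0$ edges attached. This gadget is designed so that (i) any stable fractional matching obeys the cap $\mu(m_j,w_j)\le 1-1/\alpha$, by the same stability argument that underlies the chain unit in the proof of \Cref{thm:gap}; and (ii) the gadget contributes welfare $\Theta(\alpha)$ when it is \emph{active} (i.e., $v_{w_j}(\mu)=\alpha$) but only $O(1)$ when it is \emph{inactive} (i.e., $\mu(m^*,w_j)=0$, which forces $v_{w_j}(\mu)<\alpha$ and cascades instability through the internal chain, forcing an integral-like configuration). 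I choose $\alpha=\Theta(\ell)$ so that $\sum_j 1/\alpha=\ell/\alpha=1$ and fill the remaining $n-\Theta(\ell)$ agents per side with $1$-$1$ paired ``filler'' agents to reach $n$ on each side.

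Given the cap on $\mu(m_j,w_j)$, the stability constraint on $(m^*,w_j)$---namely $u_{m^*}(\mu)\ge\alpha$ or $v_{w_j}(\mu)\ge\alpha$, together with the identity $v_{w_j}(\mu)=\alpha(\mu(m^*,w_j)+\mu(m_j,w_j))$---forces $\mu(m^*,w_j)\ge 1/\alpha>0$ whenever $A_j$ is active. The optimal stable fractional matching $\mu^*$ has every gadget active, yielding $\W(\mu^*)=\Theta(\alpha\ell)$ and an $m^*$-row with exactly $\ell$ positive entries. For a $\rho$-efficient matching $\mu$, the welfare gap of $\Theta(\alpha)$ per inactive gadget bounds the number of inactive gadgets by at most $(1-\rho)\ell+O(1)$; hence the $m^*$-row of $\mu$ retains at least $\rho\ell=\Omega(\rho n)$ positive entries, and the row-support observation completes the argument.

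The main obstacle is the precise design and analysis of the accumulator gadget $A_j$: one must verify simultaneously that (i) the cap $\mu(m_j,w_j)\le 1-1/\alpha$ is forced by internal stability, and (ii) inactivity of $A_j$ indeed cascades into a $\Theta(\alpha)$ welfare loss within that gadget rather than being absorbed by local rebalancing. Both properties are already exhibited by a contiguous chain within the gap instance of \Cref{thm:gap}; the required adaptation is to break that single chain into $\ell$ independent copies, each anchored at a distinct target woman $w_j$, and to confirm that $m^*$'s fractional engagement across the copies is consistent with completeness of the matching and with the cascade argument. Care is also needed at the gadget boundaries to keep all valuations in $\{0,1,\alpha\}$.
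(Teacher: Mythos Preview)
Your proposal has a genuine gap at the very point you flag as the main obstacle: the cap $\mu(m_j,w_j)\le 1-1/\alpha$ cannot be forced by stability. Stability constraints are disjunctive (``at least one side meets its threshold''), so matching $m_j$ integrally to $w_j$ gives $u_{m_j}=v_{w_j}=\alpha$, the maximum possible, and thereby satisfies \emph{every} stability constraint involving either $m_j$ or $w_j$ from their side. No chain of $1$\textendash$1$ or $\alpha$\textendash$0$ edges attached to $m_j$ can prevent this: any such edge $(m_j,f)$ has $U(m_j,f)\le\alpha=u_{m_j}$, so its constraint is already met. Without the cap, your claim that $\mu(m^*,w_j)=0$ forces $v_{w_j}<\alpha$ collapses, and with it the cascade.

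The second, related problem is that you set $U(m^*,w_j)=\alpha$ for every $j$. Then $u_{m^*}=\alpha\sum_j\mu(m^*,w_j)$, which equals $\alpha$ whenever $m^*$ is fully matched---even if all his weight lies on a single $w_{j_0}$. This alone satisfies the stability constraint at every pair $(m^*,w_j)$, so nothing in your construction compels $m^*$ to spread across many women. The paper's construction avoids exactly this pitfall by reversing the direction of need: the central agent (there, the woman $w_1$) receives \emph{zero} value from the many shared edges $(m_{2i},w_1)$, which are $\alpha$\textendash$0$. It is each \emph{peripheral} man $m_{2i}$ who must place positive weight on $w_1$---not to satisfy the $(m_{2i},w_1)$ constraint, but to push $u_{m_{2i}}\ge 1$ and thereby stabilize his own local $1$\textendash$1$ edge $(m_{2i},w_{2i})$ without sacrificing $\Theta(\alpha)$ welfare. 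That is the mechanism that piles $\Omega(\rho n)$ positive entries into a single column, and your symmetric $\alpha$\textendash$\alpha$ edges at $m^*$ cannot replicate it.
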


\begin{proof}
Consider a family of \SMC{} instances $\I_n = \langle M,W,U,V \rangle$ with $M = \{m_1,\dots,m_n\}$ and $W = \{w_1,\dots,w_n\}$, where $n$ is odd. Let $\alpha$ be such that $\textstyle{ \alpha > \max\left\{n+2,\frac{2n}{\rho(n-1)}\right\}}$.
The (ternary) valuations of the agents are defined as follows: For each $i \in \{1,2,\dots,n\}$, $U(m_i,w_i) = V(m_i,w_i) = 1$. For each $\textstyle{ i \in \{1,2,\dots,\frac{n-1}{2}\} }$, $U(m_{2i},w_1) = U(m_{2i+1},w_{2i}) = V(m_{2i},w_{2i+1}) = \alpha$ and $V(m_{2i},w_1) = V(m_{2i+1},w_{2i}) = U(m_{2i},w_{2i+1}) = 0$. Finally, $U(m_{n},w_1) = 0$ and $V(m_{n},w_1) = \alpha$. For all remaining pairs $(m,w) \in M \times W$, $U(m,w) = V(m,w) = 0$. Figure \ref{fig:LowerBoundSupportSize}a illustrates the \SMC{} instance $\I_5$.

\begin{figure}[t]
	\centering
	\scriptsize
\subfloat[]{
\begin{tikzpicture}[scale=0.8]
 \tikzset{man/.style = {shape=circle,draw,inner sep=0pt}}
	  \tikzset{woman/.style = {shape=circle,draw,inner sep=0pt}}
	  \tikzset{edge/.style = {solid}}
\node[man]   (1)  at (0,3.2)   {$m_1$};
\node[man]   (2)  at (0,2.4)   {$m_2$};
\node[man]   (3)  at (0,1.6)   {$m_3$};
\node[man]   (4)  at (0,0.8)   {$m_4$};
\node[man]   (5)  at (0,0)   {$m_5$};
\node[woman] (6)  at (2.5,3.2) {$w_1$};
\node[woman] (7)  at (2.5,2.4) {$w_2$};
\node[woman] (8)  at (2.5,1.6) {$w_3$};
\node[woman] (9)  at (2.5,0.8) {$w_4$};
\node[woman] (10) at (2.5,0) {$w_5$};
\draw[edge] (1) to node [very near start,fill=white,inner sep=0pt] (161) {\footnotesize{$1$}} node [very near end,fill=white,inner sep=0pt] (162) {\footnotesize{$1$}} (6);
\draw[edge] (2) to node [near start,fill=white,inner sep=0pt] (261) {$\alpha$} node [near end,fill=white,inner sep=0pt] (262) {\footnotesize{$0$}} (6);
\draw[edge] (2) to node [very near start,fill=white,inner sep=0pt] (271) {\footnotesize{$1$}} node [very near end,fill=white,inner sep=0pt] (272) {\footnotesize{$1$}} (7);
\draw[edge] (2) to node [near start,fill=white,inner sep=0pt] (281) {\footnotesize{$0$}} node [near end,fill=white,inner sep=0pt] (282) {$\alpha$} (8);
\draw[edge] (3) to node [near start,fill=white,inner sep=0pt] (371) {$\alpha$} node [near end,fill=white,inner sep=0pt] (372) {\footnotesize{$0$}} (7);
\draw[edge] (3) to node [very near start,fill=white,inner sep=0pt] (381) {\footnotesize{$1$}} node [very near end,fill=white,inner sep=0pt] (382) {\footnotesize{$1$}} (8);
\draw[edge] (4) to node [very near start,fill=white,inner sep=0pt] (461) {$\alpha$} node [very near end,fill=white,inner sep=0pt] (462) {\footnotesize{$0$}} (6);
\draw[edge] (4) to node [very near start,fill=white,inner sep=0pt] (491) {\footnotesize{$1$}} node [very near end,fill=white,inner sep=0pt] (492) {\footnotesize{$1$}} (9);
\draw[edge] (4) to node [near start,fill=white,inner sep=0pt] (4101) {\footnotesize{$0$}} node [near end,fill=white,inner sep=0pt] (4102) {$\alpha$} (10);
\draw[edge,out=120, in=150, looseness=1.85] (5) to node [very near start,fill=white,inner sep=0pt] (561) {\footnotesize{$0$}} node [very near end,fill=white,inner sep=0pt] (562) {$\alpha$} (6);
\draw[edge] (5) to node [near start,fill=white,inner sep=0pt] (591) {$\alpha$} node [near end,fill=white,inner sep=0pt] (592) {\footnotesize{$0$}} (9);
\draw[edge] (5) to node [very near start,fill=white,inner sep=0pt] (5101) {\footnotesize{$1$}} node [very near end,fill=white,inner sep=0pt] (5102) {\footnotesize{$1$}} (10);
\end{tikzpicture}
}
\hspace{0.15in}
\subfloat[]{
\begin{tikzpicture}[scale=0.8]
 \tikzset{man/.style = {shape=circle,draw,inner sep=0pt}}
	  \tikzset{woman/.style = {shape=circle,draw,inner sep=0pt}}
	  \tikzset{edge/.style = {solid,line width=1.5pt}}
\node[man]   (1)  at (0,3.2)   {$m_1$};
\node[man]   (2)  at (0,2.4)   {$m_2$};
\node[man]   (3)  at (0,1.6)   {$m_3$};
\node[man]   (4)  at (0,0.8)   {$m_4$};
\node[man]   (5)  at (0,0)   {$m_5$};
\node[woman] (6)  at (2.5,3.2) {$w_1$};
\node[woman] (7)  at (2.5,2.4) {$w_2$};
\node[woman] (8)  at (2.5,1.6) {$w_3$};
\node[woman] (9)  at (2.5,0.8) {$w_4$};
\node[woman] (10) at (2.5,0) {$w_5$};
\draw[edge] (1) to (6);
\draw[edge] (2) to (8);
\draw[edge] (3) to (7);
\draw[edge] (4) to (10);
\draw[edge] (5) to (9);
\end{tikzpicture}
}
\hspace{0.15in}
\subfloat[]{
	\begin{tikzpicture}[scale=0.8]
	\tikzset{man/.style = {shape=circle,draw,inner sep=0pt}}
	\tikzset{woman/.style = {shape=circle,draw,inner sep=0pt}}
	\tikzset{edge/.style = {solid,line width=1.5pt}}
	\tikzset{dedge/.style = {dashed,line width=1.5pt}}
	\node[man]   (1)  at (0,3.2)   {$m_1$};
	\node[man]   (2)  at (0,2.4)   {$m_2$};
	\node[man]   (3)  at (0,1.6)   {$m_3$};
	\node[man]   (4)  at (0,0.8)   {$m_4$};
	\node[man]   (5)  at (0,0)   {$m_5$};
	\node[woman] (6)  at (2.5,3.2) {$w_1$};
	\node[woman] (7)  at (2.5,2.4) {$w_2$};
	\node[woman] (8)  at (2.5,1.6) {$w_3$};
	\node[woman] (9)  at (2.5,0.8) {$w_4$};
	\node[woman] (10) at (2.5,0) {$w_5$};
	\draw[dedge] (1) to (8);
	\draw[edge] (2) to (6);
	\draw[edge] (3) to (7);
	\draw[edge] (4) to (10);
	\draw[edge] (5) to (9);
	\end{tikzpicture}
}
\hspace{0.15in}
\subfloat[]{
	\begin{tikzpicture}[scale=0.8]
	\tikzset{man/.style = {shape=circle,draw,inner sep=0pt}}
	\tikzset{woman/.style = {shape=circle,draw,inner sep=0pt}}
	\tikzset{edge/.style = {solid,line width=1.5pt}}
	\tikzset{dedge/.style = {dashed,line width=1.5pt}}
	\node[man]   (1)  at (0,3.2)   {$m_1$};
	\node[man]   (2)  at (0,2.4)   {$m_2$};
	\node[man]   (3)  at (0,1.6)   {$m_3$};
	\node[man]   (4)  at (0,0.8)   {$m_4$};
	\node[man]   (5)  at (0,0)   {$m_5$};
	\node[woman] (6)  at (2.5,3.2) {$w_1$};
	\node[woman] (7)  at (2.5,2.4) {$w_2$};
	\node[woman] (8)  at (2.5,1.6) {$w_3$};
	\node[woman] (9)  at (2.5,0.8) {$w_4$};
	\node[woman] (10) at (2.5,0) {$w_5$};
	\draw[dedge] (1) to (10);
	\draw[edge] (2) to (8);
	\draw[edge] (3) to (7);
	\draw[edge] (4) to (6);
	\draw[edge] (5) to (9);
	\end{tikzpicture}
}
\hspace{0.15in}
\subfloat[]{
	\begin{tikzpicture}[scale=0.8]
	\tikzset{man/.style = {shape=circle,draw,inner sep=0pt}}
	\tikzset{woman/.style = {shape=circle,draw,inner sep=0pt}}
	\tikzset{edge/.style = {solid,line width=1.5pt}}
	\tikzset{dedge/.style = {dashed,line width=1.5pt}}
	\node[man]   (1)  at (0,3.2)   {$m_1$};
	\node[man]   (2)  at (0,2.4)   {$m_2$};
	\node[man]   (3)  at (0,1.6)   {$m_3$};
	\node[man]   (4)  at (0,0.8)   {$m_4$};
	\node[man]   (5)  at (0,0)   {$m_5$};
	\node[woman] (6)  at (2.5,3.2) {$w_1$};
	\node[woman] (7)  at (2.5,2.4) {$w_2$};
	\node[woman] (8)  at (2.5,1.6) {$w_3$};
	\node[woman] (9)  at (2.5,0.8) {$w_4$};
	\node[woman] (10) at (2.5,0) {$w_5$};
	\draw[dedge] (1) to (9);
	\draw[edge] (2) to (8);
	\draw[edge] (3) to (7);
	\draw[edge] (4) to (10);
	\draw[edge] (5) to (6);
	\end{tikzpicture}
}
\hspace{0.25in}
	\caption{Subfigure (a) illustrates the graph representation of the \SMC{} instance $\I_n$ described in the proof of \Cref{thm:LowerBoundSupportSize} for $n=5$. Subfigure (b) shows the matching $\mu^\text{opt}$. Subfigures (c), (d), and (e) show the matchings $\mu^{(1)}$, $\mu^{(2)}$, and $\mu^{(3)}$, respectively. Dashed lines indicate zero-valuation pairs that do not appear in the graph representation.\label{fig:LowerBoundSupportSize}}        
\end{figure}
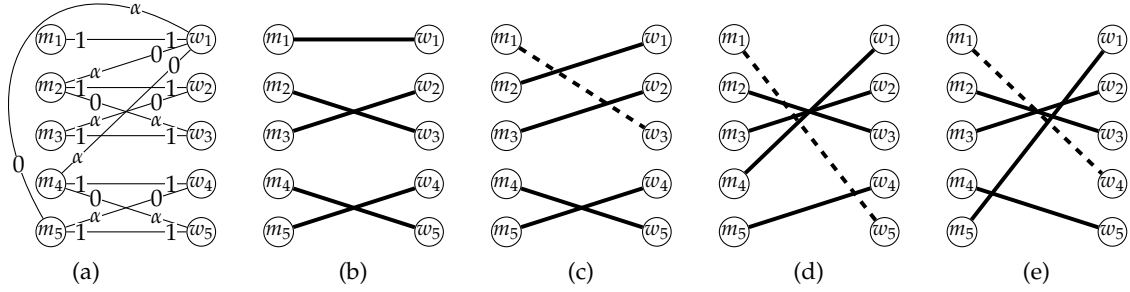

Define $\textstyle{ \mu^\text{opt} \coloneqq \{(m_1,w_1)\} \cup \left\{(m_{2i},w_{2i+1}), (m_{2i+1},w_{2i}): i \in \{1,2,\dots,\frac{n-1}{2}\} \right\} }$ (see Figure~\ref{fig:LowerBoundSupportSize}b). We also define a number of other integral matchings obtained by modifying $\mu^\text{opt}$, as follows: For all $i \in \{1,2,\dots,\frac{n-1}{2}\}$, the matching $\mu^{(i)}$ (see Figures~\ref{fig:LowerBoundSupportSize}c and~\ref{fig:LowerBoundSupportSize}d) is the integral matching which is obtained from ${\textstyle{ \mu^\text{opt}}}$ by replacing $\{(m_1,w_1), (m_{2i},w_{2i+1})\}$ with $\{(m_1,w_{2i+1}),(m_{2i},w_1)\}$, i.e.,
\begin{align*}
\mu^{(i)} \coloneqq \big\{ (m_1,w_{2i+1}) & \cup (m_{2i},w_1) \cup (m_{2i+1},w_{2i}) \cup \left\{ (m_{2\ell},w_{2\ell+1}) \cup (m_{2\ell+1},w_{2\ell}) \right\}_{\ell \in \{1,2,\dots,\frac{n-1}{2}\} \setminus\{i\}} \big\}.
\end{align*}

Also, the matching $\textstyle{ \mu^{\left( \frac{n+1}{2} \right)} }$  (see Figure~\ref{fig:LowerBoundSupportSize}e) is the integral matching obtained from $\mu^\text{opt}$ by replacing $\{(m_1,w_1),(m_{n},w_{n-1})\}$ with $\{(m_1,w_{n-1}),(m_{n},w_1)\}$, i.e.,
\begin{align*}
\mu^{ \left( \frac{n+1}{2} \right)} \coloneqq \big\{ (m_1,w_{n-1}) & \cup (m_{n-1},w_n)  \cup (m_{n},w_{1}) \cup \left\{ (m_{2\ell},w_{2\ell+1}) \cup (m_{2\ell+1},w_{2\ell}) \right\}_{\ell \in \{1,2,\dots,\frac{n-3}{2}\}} \big\}.
\end{align*}

Now, consider the fractional matching $\textstyle{ \mu \coloneqq \sum_{i=1}^{\frac{n+1}{2}} \frac{1}{\alpha} \mu^{(i)} + \left(1 - \frac{n+1}{2\alpha} \right) \mu^\text{opt} }$. Since $\alpha> n+2>\frac{n+1}{2}$, $\mu$ is well-defined and has the matchings $\mu^\text{opt}$ and $\mu^{(i)}$ for all $i \in \left\{1, \dots, \frac{n+1}{2}\right\}$ in its support. Notice that $\W(\mu^\text{opt}) = (n-1)\alpha+2$ and $\W(\mu^{(i)}) = (n-1)\alpha$ for all $\textstyle{ i \in \{1,2,\dots,\frac{n+1}{2}\} }$. Thus, $\textstyle{ \W(\mu) > (n-1)\alpha}$.

It can be verified that $\mu$ is stable. Indeed, we only need to check the blocking condition for the pairs $(m_i,w_i)$ with ${i \in \{1,2,\dots,n\}}$. We have that $v_{w_1}(\mu) \geq 1$ (since $\textstyle{ \mu^{(\frac{n+1}{2})} }$ has weight $\textstyle{ \frac{1}{\alpha} }$ in $\mu$ and $V(m_n,w_1) = \alpha$), $u_{m_{2i}}(\mu) \geq 1$ for each $\textstyle{ i \in \{1,2,\dots,\frac{n-1}{2}\}}$ (since $\textstyle{ \mu^{(i)} }$ has weight $\textstyle{ \frac{1}{\alpha} }$ in $\mu$ and $U(m_{2i},w_1) = \alpha$), and $v_{w_{2i+1}}(\mu) \geq 1$ for each $\textstyle{ i \in \{1,2,\dots,\frac{n-1}{2}\}}$ (since $\textstyle{ \mu^{(i)} }$ has weight $\textstyle{ \frac{1}{\alpha} }$ in $\mu$ and $V(m_{2i},w_{2i+1}) = \alpha$). The welfare of the optimal stable fractional matching must therefore be at least $\W(\mu)$, and thus strictly greater than $(n-1)\alpha$.

We now claim that any $\rho$-efficient stable fractional matching $\mu'$ satisfies $\mu'(m_n,w_1)>0$. Indeed, assuming otherwise that $\mu'(m_n,w_1)=0$, the only pair that can give positive utility to man $m_1$ and woman $w_1$ is $(m_1,w_1)$. Hence, we must also have $\mu'(m_1,w_1)=1$, and, as a result, $\mu'(m_{2i},w_1)=0$ for all $i\in \left\{1, 2, \dots, \frac{n-1}{2}\right\}$. Then, the only pair that can give positive utility to man $m_{2i}$ and woman $w_{2i}$ is $(m_{2i},w_{2i})$, and hence, it must also be that $\mu'(m_{2i},w_{2i})=1$. Consequently, the only pair that can give positive utility to man $m_{2i+1}$ and woman $w_{2i+1}$ is $(m_{2i+1},w_{2i+1})$ and, hence, we must have $\mu'(m_{2i+1},w_{2i+1})=1$. The welfare of matching $\mu'$ would then be $2n$, which is less than $\rho(n-1)\alpha$ by the assumed bound on $\alpha$. In other words, the welfare of $\mu'$ would be less than $\rho$ times the welfare of the stable fractional matching $\mu$, contradicting the assumption that $\mu'$ is $\rho$-efficient.

The final step in the proof involves showing that for any stable fractional matching $\mu'$ with support of size at most $\textstyle{ \frac{n-1}{2}\rho }$, we must have $\W(\mu') < \rho(n-1)\alpha$; the desired bound on the support size would then follow from the contrapositive. Let $T\coloneqq \left\{i\in \left\{1, 2, \dots \frac{n-1}{2}\right\}: \mu'(m_{2i},w_1)>0\right\}$, 
and $\overline{T}\coloneqq \left\{1, 2, \dots \frac{n-1}{2}\right\} \setminus T$. Since $\mu'$ has support of size at most $\textstyle{ \frac{n-1}{2}\rho }$ and $\mu'(m_n,w_1)>0$, it holds that $\textstyle{ |T|\leq  \frac{n-1}{2}\rho - 1}$.

For every $i\in T$, the agents $m_{2i}$, $w_{2i}$, $m_{2i+1}$, and $w_{2i+1}$ can together contribute at most $2\alpha$ to the welfare. On the other hand, when $i\in \overline{T}$, we have $\mu'(m_{2i},w_1)=0$, and the only pair that can give positive utility to man $m_{2i}$ and woman $w_{2i}$ is $(m_{2i},w_{2i})$. Therefore, we must have that $\mu'(m_{2i},w_{2i})=1$. Consequently, the only pair that can give positive utility to man $m_{2i+1}$ and woman $w_{2i+1}$ is $(m_{2i+1},w_{2i+1})$, and it follows that $\mu'(m_{2i+1},w_{2i+1})=1$. Therefore, when $i\in \overline{T}$, the agents $m_{2i}$, $w_{2i}$, $m_{2i+1}$, and $w_{2i+1}$ can together contribute at most $4$ to the welfare. Taking the possible contribution of pair $(m_1,w_1)$ into account, we have that
\begin{align*}
\W(\mu') & \leq 2+2\alpha|T|+4|\overline{T}| = 2n+(2\alpha-4)|T| \\
& \leq 2n+\rho (n-1)\alpha-2\alpha-2(n-1)\rho +4 < \rho (n-1)\alpha.
\end{align*}
The equality follows from the definition of $\overline{T}$, the second inequality follows from the bound on $|T|$ above, and the third one from the definition of $\alpha$. This completes the proof of \Cref{thm:LowerBoundSupportSize}.
\end{proof}

\cref{thm:LowerBoundSupportSize} has an interesting algorithmic implication. The fact that the support size can be large even for approximately optimal solutions suggests that a \emph{support enumeration} strategy---which has proven useful in other economic problems~\citep{LMM03playing,Bar18}---will be strictly more demanding than any of the MILP-based approaches in \Cref{subsec:OptimalStableMatchingProgram}. Specifically, a brute force enumeration of all subsets of $\Omega(\rho n)$ matchings requires $\Omega((n!)^{\rho n}) \approx \Omega(n^{\rho n^2})$ time. By contrast, the MILPs \ref{Prog:OPT-Stab} and \ref{LP:OPT-Thresh} described in \Cref{subsec:OptimalStableMatchingProgram} require $\O(2^{n^2})$ and $\O(n^n)$ time, respectively. 
A similar comparison can be made for optimal $\eps$-stable fractional matchings. Interestingly, as we will show in \Cref{sec:TractabilityResults}, an $\eps$-stable fractional matching of \emph{small} support can be easily computed, and provides nearly the best approximation ratios achievable by efficient algorithms (under standard complexity-theoretic assumptions).
\section{Algorithmic results}
\label{sec:TractabilityResults}
We begin the discussion of our algorithmic results with binary valuations. In this setting, \OptStab{} reduces to computing a maximum weight matching on a specific weighted graph associated with the given instance.

\begin{theorem}
	\label{thm:MaxWelfareBinaryVals}
	Given an \SMC{} instance $\I = \langle M,W,U,V \rangle$ with binary valuations, an optimal stable fractional matching for $\I$ is, without loss of generality, integral, and can be computed in polynomial time.
\end{theorem}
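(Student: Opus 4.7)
The plan is to reduce \OptStab{} on binary-valuation instances to standard maximum weight bipartite matching. Associate the given instance with the bipartite graph $G$ on $M\cup W$ in which edge $(m,w)$ carries weight $\omega(m,w) := U(m,w) + V(m,w) \in \{0,1,2\}$. Since $\W(\mu) = \sum_{(m,w)} \omega(m,w)\,\mu(m,w)$, an (unconstrained) welfare-optimal fractional matching is exactly a maximum weight fractional matching in $G$, and by total unimodularity of the bipartite matching polytope this equals the maximum weight of an \emph{integral} matching, computable in polynomial time; call this common value $\W^\star$. Because a stable fractional matching is, in particular, a fractional matching, $\W^\star$ upper bounds the welfare of every stable fractional matching. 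It therefore suffices to exhibit a maximum weight integral matching that is also stable: this simultaneously shows that an optimal stable fractional matching can be chosen integral and that it is polynomial-time computable.

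I will start from an arbitrary maximum weight integral matching $\mu$ and iteratively eliminate blocking pairs while preserving the total weight. A first observation specific to binary valuations is that a pair $(m,w)$ can block $\mu$ only if $U(m,w) = V(m,w) = 1$, since otherwise either $u_m(\mu) < U(m,w)$ or $v_w(\mu) < V(m,w)$ is unattainable. For integral $\mu$, a blocking pair moreover forces $u_m(\mu) = v_w(\mu) = 0$, so by completeness $m$ is matched to some $w' \neq w$ with $U(m,w') = 0$ and $w$ is matched to some $m' \neq m$ with $V(m',w) = 0$. Consider the \emph{swap} that replaces the two edges $(m,w'),(m',w)$ by $(m,w),(m',w')$; the resulting matching $\mu'$ is well-defined on four distinct agents and its weight changes by
\[
\Delta = 2 + U(m',w') + V(m',w') - V(m,w') - U(m',w).
\]
Maximality of $\mu$ gives $\Delta \leq 0$; combined with $U,V \in \{0,1\}$ this forces $V(m,w') = U(m',w) = 1$ and $U(m',w') = V(m',w') = 0$, so $\Delta = 0$ and $\mu'$ is again maximum weight.

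The step I expect to be the main obstacle is showing that the swap does not create any \emph{new} blocking pair. The pair $(m,w)$ ceases to block since $u_m(\mu') = v_w(\mu') = 1$, and $(m',w')$ does not block since $U(m',w') = V(m',w') = 0$; all pairs involving $m$ or $w$ on the weak side are ruled out by the same utility-one observation. The only genuine worry is a new blocking pair of the form $(m',y)$ with $y \notin \{w,w'\}$, $U(m',y) = V(m',y) = 1$, and $v_y(\mu') < 1$. Such a $y$ must be matched in $\mu'$ (equivalently, in $\mu$) to some $m'' \neq m'$ with $V(m'',y) = 0$; but then a further swap replacing $(m',w'),(m'',y)$ by $(m',y),(m'',w')$ has weight change
\[
\Delta' = 2 + U(m'',w') + V(m'',w') - U(m'',y) \geq 2 - 1 > 0,
\]
contradicting the maximality of $\mu'$. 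A symmetric argument rules out new blocking pairs of the form $(z,w')$. Hence $\mu'$ is a maximum weight integral matching with strictly fewer blocking pairs than $\mu$. Since the initial count of blocking pairs is at most $n^2$, after polynomially many swaps the procedure terminates at a stable maximum weight integral matching, which proves both conclusions of the theorem.
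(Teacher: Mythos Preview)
Your argument is correct, but it follows a different route from the paper. The paper avoids the iterative repair entirely by perturbing the edge weights: it assigns weight $2+\tfrac{1}{n^2}$ to an edge $(m,w)$ with $U(m,w)=V(m,w)=1$ and weight $1$ to an edge with exactly one nonzero valuation, and then argues that a \emph{single} maximum weight integral matching under these weights is already stable. The perturbation is calibrated so that maximising the perturbed weight first maximises the true welfare and, as a tie-break, the number of ``heavy'' edges; a blocking pair would then allow one to drop at most two weight-$\leq 1$ edges and insert a weight-$(2+\tfrac{1}{n^2})$ edge, contradicting maximality directly. Your approach instead uses the natural weights $\omega=U+V$, observes that any max-weight integral matching attains the optimal welfare, and then repairs instability by swaps, exploiting the fact that in the binary case a swap resolving a blocking pair cannot decrease weight (so $\Delta=0$) and cannot create a new blocking pair without contradicting maximality. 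Both arguments are polynomial; the paper's is a one-shot computation, while yours is more explicit about why stability and optimality are simultaneously achievable and would generalise more readily to settings where a perturbation trick is unavailable.
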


\begin{proof}
Let $G$ be the graph representation of $\I$. We assign to each edge $(m,w)$ in $G$ a weight $\gamma(m,w)$, as follows:
\[ \gamma(m,w) =
  \begin{cases}
    2+\frac{1}{n^2} & \text{if } U(m,w)=V(m,w)=1,\\
    1 & \text{if } U(m,w)=1, V(m,w)=0 \text{ or } U(m,w)=0, V(m,w)=1,\\
    0 & \text{otherwise}.
  \end{cases}
\]
 Thus, for any integral matching $\mu$ in $G$, if $n_\mu$ denotes the number of agents (men and women) with utility~$1$ in the \SMC{} instance $\I$, then $n_\mu \leq \sum_{(m,w) \in \mu} \gamma(m,w) < n_\mu + 1$.

Let $\mu$ be a maximum weight matching in $G$. Note that $\mu$ can be computed in polynomial time and is, without loss of generality, integral. Also, it follows from the above inequality that $\mu$ is an optimal matching for $\I$. We will now argue that $\mu$ is stable. Indeed, assuming otherwise, any blocking pair $(m,w)$ must have $0 = u_m(\mu) < U(m,w) = 1$, $0 = v_w(\mu) < V(m,w) = 1$ and $(m,w) \notin \mu$. Thus, if $\mu$ contains one or both of the edges $(m,w')$ and $(m',w)$ for some $w' \neq w$ and $m'\neq m$, then we must have that $U(m,w')=0$ and/or $V(m',w)=0$. By our definition of weights, this would imply that $\gamma(m,w')=1$ and/or $\gamma(m',w)=1$. We can now replace one or both of these edges with the edge $(m,w)$, which has weight $\gamma(m,w)=2+1/n^2$, and obtain a new matching with strictly larger weight---a contradiction.
\end{proof}
The proof of \cref{thm:MaxWelfareBinaryVals} can be easily extended to the non-bipartite setting (also known as the \emph{stable roommates} problem) to compute an optimal stable fractional matching with binary valuations.

Next, we consider general valuations and show how to exploit stable integral matchings to get an approximate solution for \OptStab. Let $\sigma_{\max}$ and $\sigma_{\min}$ denote the largest and the smallest non-zero valuation among all agents in $\I$, respectively. We call a man-woman pair $(m,w)$ {\em light} if either $U(m,w)=0$ or $V(m,w)=0$, and {\em heavy} otherwise. Given an \SMC{} instance $\I$ as input, our algorithm computes a stable integral matching for $\I$, say $\mu$, in two steps: First, it computes a stable integral matching $\mu_1$ using only the heavy pairs (and taking into account the stability constraints in heavy pairs only). Then, it \emph{completes} the solution with a matching $\mu_2$ of maximum welfare using the light pairs subject to feasibility constraints, i.e., using light pairs that do not share any agents with the pairs in $\mu_1$. The light pairs impose no additional constraints on stability, so the resulting matching is stable.

We will show that $\mu$ has approximation ratio $1+\sigma_{\max}/\sigma_{\min}$ (recall that $\sigma_{\max}$ and $\sigma_{\min}$ are the largest and the smallest non-zero valuations, respectively, in the instance $\I$). Let $\mu^\text{opt}$ be an optimal matching for $\I$. Also, let $\mu^\text{opt}_1$ be the set of pairs of $\mu^\text{opt}$ that share an agent with some pair of $\mu_1$, i.e., $\mu^\text{opt}_1 \coloneqq \{(m,w) \in \mu^\text{opt}: \text{at least one of  } m \text{ or } w \text{ is matched in } \mu_1\}$. By definition of $\mu_2$, we have $\W(\mu_2)\geq \W(\mu^\text{opt}\setminus \mu^\text{opt}_1)$. To complete the proof, we will need the following lemma.

\begin{lemma}\label{lem:wit}
	$\W(\mu_1\setminus \mu^\text{opt}_1)  \geq \left(1+\sigma_{\max}/\sigma_{\min}\right)^{-1} \W(\mu^\text{opt}_1\setminus \mu_1)$.
\end{lemma}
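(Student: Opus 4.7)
The plan is to construct a charging scheme in which each edge $e \in \mu^\text{opt}_1 \setminus \mu_1$ is assigned a witness $f(e) \in \mu_1 \setminus \mu^\text{opt}_1$ that shares an agent with $e$, and then to bound $\W(e)$ using (part of) the welfare of $f(e)$. First, for each $e = (m,w) \in \mu^\text{opt}_1 \setminus \mu_1$, I would define candidate witnesses: if $m$ is matched in $\mu_1$ to some $w' \neq w$, let $f_m(e) \coloneqq (m,w')$, and analogously define $f_w(e)$ when $w$ is matched in $\mu_1$. At least one of these candidates exists by the definition of $\mu^\text{opt}_1$, and both (when defined) lie in $\mu_1 \setminus \mu^\text{opt}_1$, since $m$ and $w$ are each other's unique $\mu^\text{opt}$-partners.

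Next, I would pick a single $f(e)$ as follows. If $e$ is heavy, then stability of $\mu_1$ (restricted to heavy pairs, applied to the heavy pair $e \notin \mu_1$) guarantees $u_m(\mu_1) \geq U(m,w)$ or $v_w(\mu_1) \geq V(m,w)$; set $f(e) \coloneqq f_m(e)$ in the first case and $f(e) \coloneqq f_w(e)$ in the second. If $e$ is light, assign $f(e)$ to any candidate that exists. The crucial counting step is that each $f = (m',w') \in \mu_1 \setminus \mu^\text{opt}_1$ can be picked at most twice: once via $m'$ (by $m'$'s unique $\mu^\text{opt}$-partner, provided that pair is in $\mu^\text{opt}_1 \setminus \mu_1$) and once via $w'$.

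The heart of the argument is a uniform per-assignment bound: whenever $e = (m,w)$ is assigned to $f = (m,w')$ via $m$, I would show $\W(e) \leq U(f) + (\sigma_{\max}/\sigma_{\min})\,V(f)$. For heavy $e$ in the first stability case, $U(m,w) \leq u_m(\mu_1) = U(m,w') = U(f)$ since $\mu_1$ is integral, while $V(m,w) \leq \sigma_{\max} \leq (\sigma_{\max}/\sigma_{\min})\,V(f)$ because every edge of $\mu_1$ is heavy and hence $V(f) \geq \sigma_{\min}$. For light $e$, one has $\W(e) \leq \sigma_{\max} \leq (\sigma_{\max}/\sigma_{\min})\,V(f)$ by the same heavy-witness observation, which fits inside the required bound. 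A symmetric inequality $\W(e) \leq (\sigma_{\max}/\sigma_{\min})\,U(f) + V(f)$ holds when $e$ is assigned via $w$.

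Summing the two side bounds, each witness $f$ receives total charge at most $(1 + \sigma_{\max}/\sigma_{\min})\,\W(f)$, attained when both quotas are used. Summing over $f \in \mu_1 \setminus \mu^\text{opt}_1$ yields $\W(\mu^\text{opt}_1 \setminus \mu_1) \leq (1 + \sigma_{\max}/\sigma_{\min})\,\W(\mu_1 \setminus \mu^\text{opt}_1)$, which is the lemma. The main obstacle I expect is making the per-assignment bound hold uniformly on light edges of $\mu^\text{opt}_1 \setminus \mu_1$, for which the stability of $\mu_1$ provides no information; the resolution is precisely the structural fact that every edge of $\mu_1$ is heavy, so that the factor $\sigma_{\max}/\sigma_{\min}$ is exactly what is needed to inflate a witness valuation of size $\geq \sigma_{\min}$ into something that dominates the $\sigma_{\max}$-sized welfare of any light edge.
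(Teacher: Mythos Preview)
Your proposal is correct and follows essentially the same approach as the paper: a witness/charging argument that assigns each pair of $\mu^\text{opt}_1\setminus\mu_1$ to an incident agent (equivalently, to that agent's $\mu_1$-edge), uses stability of $\mu_1$ on heavy pairs for the heavy case, and uses the fact that every edge of $\mu_1$ is heavy (hence each endpoint has utility $\ge \sigma_{\min}$) for the light case. The only cosmetic difference is bookkeeping: the paper maps $e$ to a single \emph{agent} and shows that this agent's utility in $\mu_1\setminus\mu^\text{opt}_1$ is at least $(1+\sigma_{\max}/\sigma_{\min})^{-1}\W(e)$, whereas you map $e$ to an \emph{edge} $f$ via one endpoint and use the asymmetric bound $\W(e)\le U(f)+(\sigma_{\max}/\sigma_{\min})V(f)$ (or its mirror), summing the two endpoint-bounds over $f$ to obtain the same factor.
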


\begin{proof}
Our proof constructs a mapping in which every pair $(m,w)\in \mu^\text{opt}_1\setminus \mu_1$ is mapped to one of its agents, whom we will call the {\em witness} of the pair. The mapping is such that the utility of the witness in the matching $\mu_1\setminus \mu^\text{opt}_1$ is at least $\left(1+\sigma_{\max}/\sigma_{\min}\right)^{-1}\left(U(m,w)+V(m,w)\right)$. Note that once we establish the said mapping, the proof will follow, since each agent can be the witness of at most one pair of $\mu^\text{opt}_1\setminus \mu_1$ and $\W(\mu_1\setminus \mu^\text{opt}_1)$ is at least the total utility of the witnesses in $\mu_1\setminus \mu^\text{opt}_1$.

Consider a light pair $(m,w) \in \mu^\text{opt}_1\setminus \mu_1$. The witness is an agent ($m$ or $w$) who also belongs to a pair of $\mu_1\setminus \mu^\text{opt}_1$; such an agent certainly exists by the definition of $\mu^\text{opt}_1$. Since all pairs of $\mu_1\setminus \mu^\text{opt}_1$ are heavy, the utility of the witness of $(m,w)$ in $\mu_1\setminus \mu^\text{opt}_1$ is at least $\sigma_{\min} = \frac{\sigma_{\min}}{\sigma_{\max}} (0+\sigma_{\max})\geq \left(1+\sigma_{\max}/\sigma_{\min}\right)^{-1}\left(U(m,w)+V(m,w)\right)$, since $(m,w)$ is light.

Now consider a heavy pair $(m,w) \in \mu^\text{opt}_1\setminus \mu_1$. If $\mu_1$ contains a pair $(m,w')$ with $U(m,w')\geq U(m,w)$, select agent $m$ to be the witness, otherwise select agent $w$. Note that in the latter case, stability of $\mu_1$ implies the existence of $(m',w) \in \mu_1$ such that $V(m',w)\geq V(m,w)$. Hence, the utility of the witness of $(m,w)$ in $\mu_1\setminus \mu^\text{opt}_1$ is at least $\min\{U(m,w),V(m,w)\}$, which, in turn, is at least $\left(1+\sigma_{\max}/\sigma_{\min}\right)^{-1}\left(U(m,w)+V(m,w)\right)$.
\end{proof}

Now, \cref{lem:wit} gives the desired approximation ratio, as follows:
\begin{align*}
\W(\mu) & = \W(\mu_1)+\W(\mu_2)= \W(\mu_1\setminus \mu^\text{opt}_1)+\W(\mu_1\cap \mu^\text{opt}_1)+\W(\mu_2)\\
& \geq \left(1+\sigma_{\max}/\sigma_{\min}\right)^{-1} \W(\mu^\text{opt}_1\setminus \mu_1)+\W(\mu^\text{opt}_1\cap \mu_1)+\W(\mu^\text{opt}\setminus \mu^\text{opt}_1)\\
& \geq \left(1+\sigma_{\max}/\sigma_{\min}\right)^{-1} \W(\mu^\text{opt}).
\end{align*}

For ternary valuations in $\{0,1,\alpha\}$, the above algorithm gives a $(1+\alpha)$-approximation. An improved approximation for ternary valuations can be achieved using the following modification: When computing the stable integral matching, resolve ties 
in favour of the pairs $(m,w)$ with the highest $U(m,w)+V(m,w)$. The next lemma establishes an improved approximation ratio of $\max\{2,\alpha\}$.

\begin{lemma}
The modified algorithm for \SMC{} instances with ternary valuations in $\{0,1,\alpha\}$ satisfies 	$\W(\mu_1\setminus \mu^\text{opt}_1)  \geq \min\{\frac{1}{2},\frac{1}{\alpha}\} \W(\mu^\text{opt}_1\setminus \mu_1)$.
\end{lemma}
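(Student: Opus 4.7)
The plan is to adapt the witness-based argument of \Cref{lem:wit}, exploiting the ternary structure and the tie-breaking rule to tighten the per-pair ratio to $\min\{1/2, 1/\alpha\}$. For each pair $(m,w) \in \mu^\text{opt}_1 \setminus \mu_1$ I will assign a witness (an agent whose $\mu_1$-utility is attributed to $(m,w)$) so that the total charge covers $\min\{1/2, 1/\alpha\} \cdot (U(m,w) + V(m,w))$. Light pairs (sum $\leq \alpha$) and heavy pairs of types $(1,1)$ and $(\alpha, \alpha)$ are handled directly as in the previous proof: the former gives ratio $\geq 1/\alpha$ since any witness in a heavy pair of $\mu_1$ has utility $\geq 1$, and the latter gives ratio $1/2$ since stability provides a witness of utility $\min\{U(m,w), V(m,w)\}$.

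The delicate case is heavy pairs of type $(1, \alpha)$ (and symmetrically $(\alpha, 1)$), where the sum is $1+\alpha$. Stability yields either $v_w(\mu_1) \geq \alpha$---the favorable subcase, where $w$ acts as witness with utility $\alpha$, giving ratio $\alpha/(1+\alpha)$, which a short algebraic check confirms is at least $\min\{1/2, 1/\alpha\}$ for every $\alpha > 1$---or $u_m(\mu_1) \geq 1$ (the adversarial subcase). In the adversarial subcase, letting $w'$ denote $m$'s $\mu_1$-partner, if $U(m, w') = \alpha$ then $m$ itself is a witness with utility $\alpha$. Otherwise $U(m, w') = 1$, and the tie-breaking rule is invoked: interpreting it as stability of $\mu_1$ under \emph{refined} preferences (each agent ranks partners by primary valuation, with ties broken by the partner's valuation), $V(m, w') < \alpha$ would combine with $V(m', w) = 1 < \alpha = V(m, w)$ (where $m'$ is $w$'s $\mu_1$-partner) to make $(m, w)$ strictly block $\mu_1$ under the refined preferences---a contradiction. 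Hence $V(m, w') = \alpha$, so $(m, w')$ is itself a $(1, \alpha)$-heavy pair in $\mu_1 \setminus \mu^\text{opt}_1$ and $w'$'s $\mu_1$-utility equals $\alpha$.

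The main obstacle is the charging accounting. In the adversarial subcase, $w'$ is not an endpoint of $(m, w)$, so I will broaden the witness scheme to allow non-endpoint charging, with per-agent charge bounded by $u_a(\mu_1)$; the soundness of the resulting inequality then follows by summing the capacities. Distinct adversarial pairs $(m, w)$ produce distinct $w'$ (by uniqueness of $m$'s $\mu_1$-partner), so no double-use arises when $w'$ is charged by $(m, w)$ alone. When $w'$ is simultaneously the endpoint of another pair $(m'', w') \in \mu^\text{opt}_1 \setminus \mu_1$, this second pair falls into the favorable subcase because $v_{w'}(\mu_1) = \alpha$, and a careful split of $w'$'s capacity $\alpha$ between the two claimants---together with the auxiliary charges available from $m$ (utility $1$) and from $m''$'s $\mu_1$-utility---suffices to meet both requirements, as a short case analysis on $(m'', w')$'s type confirms. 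Summing over all pairs of $\mu^\text{opt}_1 \setminus \mu_1$ then yields $\W(\mu_1 \setminus \mu^\text{opt}_1) \geq \min\{1/2, 1/\alpha\} \cdot \W(\mu^\text{opt}_1 \setminus \mu_1)$.
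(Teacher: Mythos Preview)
Your approach diverges from the paper's and is substantially more complicated. The paper does \emph{not} assign a single witness agent per pair; instead it charges each $(m,w)\in\mu^\text{opt}_1\setminus\mu_1$ to its entire \emph{neighborhood}, i.e., the set of pairs of $\mu_1\setminus\mu^\text{opt}_1$ that share $m$ or $w$, and shows that this neighborhood contains a single pair of welfare at least $\min\{1,2/\alpha\}\cdot(U(m,w)+V(m,w))$. The tie-breaking rule is invoked once, uniformly: for any heavy $(m,w)$, stability under the refined preferences forces some neighbor pair to have sum $\ge U(m,w)+V(m,w)$; for a light $(m,w)$ (sum $\le\alpha$) any neighbor is heavy and hence has sum $\ge 2$. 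Since every pair of $\mu_1\setminus\mu^\text{opt}_1$ lies in at most two neighborhoods, a global division by $2$ yields the factor $\min\{1/2,1/\alpha\}$ directly---no case split on the type of the heavy pair, no non-endpoint charging, no conflict resolution.

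Your agent-based scheme can likely be pushed through, but as written it has a genuine gap in the accounting. Once you allow non-endpoint charging to $w'$, you must guarantee that the total load on every agent stays within its capacity, and your ``short case analysis'' does not do this. Concretely, in your adversarial subcase you charge $w'$ on behalf of $(m,w)$ while $(m'',w')$ may charge $w'$ as an endpoint; you then appeal to $m$'s capacity $1$ and $m''$'s $\mu_1$-utility as auxiliary resources. But $m''$ may itself be the target of a symmetric non-endpoint charge coming from an $(\alpha,1)$ adversarial pair whose woman is $m''$'s $\mu_1$-partner, and this cascade is never addressed. You also implicitly assume $w$ is matched in $\mu_1$ when you write ``$V(m',w)=1$''; the unmatched case needs a separate (easy) sentence. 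None of these issues arises in the paper's pair-based argument, which is why it is both shorter and cleaner: your per-type analysis of $(1,\alpha)$ pairs is precisely what the paper's single sentence ``a heavy pair $(m,w)$ in $\mu^\text{opt}_1\setminus\mu_1$ must have a pair of utility at least $U(m,w)+V(m,w)$ in its neighborhood'' accomplishes in one stroke.
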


\begin{proof}
	For a pair $(m,w)$ of matching $\mu^\text{opt}_1\setminus \mu_1$, we use the term {\em neighborhood} to refer to the pairs of $\mu_1\setminus \mu^\text{opt}_1$ that use agent $m$ or $w$. We will show that the total utility from pairs in the neighborhood of $(m,w)$ is at least $\min\{1,2/\alpha\} \left(U(m,w)+V(m,w)\right)$. Since each pair of $\mu_1\setminus \mu^\text{opt}_1$ can be in the neighborhood of at most two pairs of $\mu^\text{opt}_1\setminus \mu_1$, this will give us the desired inequality.
	
	Indeed, by the particular way we resolve ties in the ordinal preferences before computing the matching $\mu_1$, a heavy pair $(m,w)$ in $\mu^\text{opt}_1\setminus \mu_1$ must have a pair of utility at least $U(m,w)+V(m,w)$ in its neighborhood. A light pair $(m,w)$ has $U(m,w)+V(m,w)\leq \alpha$ and certainly has a heavy pair of utility at least $2$ in its neighborhood.
\end{proof}

The above discussion is summarized in the following statement.
\begin{theorem}\label{thm:alg-summary}
There is a polynomial-time algorithm which, given an \SMC{} instance $\I$ with an optimal matching $\mu^\text{opt}$, computes a stable integral matching $\mu$ with $\W(\mu)\geq \min\{\frac{1}{2},\frac{1}{\alpha}\} \W(\mu^\text{opt})$ if $\I$ has ternary valuations in $\{0,1,\alpha\}$, and $\W(\mu)\geq \left(1+\sigma_{\max}/\sigma_{\min}\right)^{-1}\W(\mu^\text{opt})$ in general, where $\sigma_{\max}$ and $\sigma_{\min}$ denote the highest and lowest non-zero valuation in $\I$, respectively.
\end{theorem}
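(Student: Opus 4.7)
The theorem is essentially a clean packaging of the algorithm and analysis developed in the preceding paragraphs, so my plan is to assemble the pieces into a formal proof. First I would make the algorithm explicit: (1) split the edges of the graph representation of $\I$ into heavy pairs (both $U$ and $V$ positive) and light pairs (exactly one of $U,V$ positive); (2) use Gale--Shapley on the heavy subgraph to compute a stable integral matching $\mu_1$, where in the ternary case I first refine the ordinal preferences by breaking ties among equal-ranked heavy neighbors in favor of pairs with larger $U+V$ (matching the modification discussed above \cref{lem:wit}); (3) compute a maximum-weight matching $\mu_2$ on the subgraph of light pairs restricted to agents unmatched by $\mu_1$; (4) return $\mu \coloneqq \mu_1 \cup \mu_2$. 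Polynomial runtime follows from \cref{prop:GaleShapley} and standard bipartite maximum-weight matching.

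Next I would verify that $\mu$ is a stable integral matching. Feasibility is immediate since $\mu_1$ and $\mu_2$ are disjoint on agents by construction. For stability, any blocking pair must be heavy (light pairs have one side with zero valuation and so cannot satisfy both strict inequalities in \cref{def:Stability}), and since $\mu_1$ is stable on the heavy subgraph, no heavy pair can block $\mu_1$; enlarging to $\mu \supseteq \mu_1$ only weakly increases agent utilities, so no pair blocks $\mu$.

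The welfare analysis proceeds by the decomposition already set up in the excerpt. I would first observe that $\mu^\text{opt}\setminus \mu^\text{opt}_1$ consists entirely of light pairs: indeed, if a heavy pair $(m,w) \in \mu^\text{opt}$ had both endpoints unmatched in $\mu_1$, then $(m,w)$ would be a blocking pair on the heavy subgraph, contradicting stability of $\mu_1$. Consequently the pairs in $\mu^\text{opt}\setminus\mu^\text{opt}_1$ are all light and involve only agents free with respect to $\mu_1$, hence constitute a feasible candidate matching in step (3), giving $\W(\mu_2) \geq \W(\mu^\text{opt}\setminus\mu^\text{opt}_1)$. Invoking \cref{lem:wit} in the general case (with $c \coloneqq (1+\sigma_{\max}/\sigma_{\min})^{-1}$) and the ternary-case lemma (with $c \coloneqq \min\{1/2, 1/\alpha\}$) in the form $\W(\mu_1\setminus \mu^\text{opt}_1) \geq c\,\W(\mu^\text{opt}_1\setminus\mu_1)$, the chain of inequalities displayed just before the theorem statement yields $\W(\mu) \geq c\,\W(\mu^\text{opt})$.

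The substantive obstacle, which both lemmas have already resolved, is the witness-charging argument for heavy pairs of $\mu^\text{opt}$ whose agents are occupied by $\mu_1$: one must use the stability of $\mu_1$ to produce, for each such pair, an agent in $\mu_1\setminus\mu^\text{opt}_1$ whose utility pays for a constant fraction of that pair's contribution to $\W(\mu^\text{opt})$. Since this step is already established and the ternary tie-breaking rule is precisely what reduces the bound from $1+\alpha$ to $\max\{2,\alpha\}$, the remaining work for the theorem itself is just the bookkeeping outlined above.
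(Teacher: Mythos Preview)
Your proposal is correct and follows essentially the same approach as the paper: the two-step algorithm ($\mu_1$ on heavy pairs via Gale--Shapley with the ternary tie-breaking refinement, then $\mu_2$ as a maximum-weight light matching on the remaining agents), the stability argument, and the welfare decomposition invoking \cref{lem:wit} and the ternary lemma are all exactly as in the paper's discussion preceding the theorem. Your explicit justification that $\mu^\text{opt}\setminus\mu^\text{opt}_1$ consists only of light pairs (because a heavy pair with both endpoints free would block $\mu_1$) is a nice clarification of a step the paper leaves implicit when it asserts $\W(\mu_2)\geq \W(\mu^\text{opt}\setminus\mu^\text{opt}_1)$ ``by definition of $\mu_2$.''
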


We conclude this section by considering approximate stability. For general valuations, we present a polynomial-time $1/\eps$-approximation algorithm for \OptEpsStab, which constructs an $\eps$-stable fractional matching with a small support by combining an optimal matching with a stable integral matching.

\begin{theorem}
 \label{thm:ApproxStable+ApproxOptimal}
 There is a polynomial-time algorithm that given any \SMC{} instance $\I = \langle M,W,U,V \rangle$ and any rational $\eps \in [0,1]$, computes a fractional matching $\mu$ that is $\eps$-stable for $\I$ such that $\W(\mu) \geq \eps \W(\mu^\text{opt})$, where $\mu^\text{opt}$ is an optimal matching for $\I$.
\end{theorem}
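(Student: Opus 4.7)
The plan is to take a convex combination of two easily-computable matchings: an (unconstrained) welfare-maximizing matching and a stable integral matching. Specifically, let $\mu^{\text{opt}}$ be an optimal (i.e., welfare-maximizing) matching, which can be computed in polynomial time via standard bipartite maximum weight matching, and let $\mu^{\text{s}}$ be a stable integral matching, which exists and is polynomial-time computable by \Cref{prop:GaleShapley}. Then I would output
\[
\mu \coloneqq \eps \cdot \mu^{\text{opt}} + (1-\eps) \cdot \mu^{\text{s}}.
\]
This is a valid fractional matching since it is a convex combination of two (complete) matchings, and its support has size at most two, so it can be represented and computed in polynomial time.

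The welfare bound is essentially immediate: by linearity,
\[
\W(\mu) = \eps \cdot \W(\mu^{\text{opt}}) + (1-\eps) \cdot \W(\mu^{\text{s}}) \geq \eps \cdot \W(\mu^{\text{opt}}),
\]
using non-negativity of welfare for the second term.

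For $\eps$-stability, fix any pair $(m,w) \in M \times W$. Since $\mu^{\text{s}}$ is stable, \Cref{def:Stability} gives that either $u_m(\mu^{\text{s}}) \geq U(m,w)$ or $v_w(\mu^{\text{s}}) \geq V(m,w)$. In the first case, again by linearity,
\[
u_m(\mu) = \eps \cdot u_m(\mu^{\text{opt}}) + (1-\eps) \cdot u_m(\mu^{\text{s}}) \geq (1-\eps) \cdot u_m(\mu^{\text{s}}) \geq (1-\eps) \cdot U(m,w),
\]
so the pair is $\eps$-stable by \Cref{def:Approx-Stability}. The second case is symmetric, yielding $v_w(\mu) \geq (1-\eps) V(m,w)$. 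Hence no pair is $\eps$-blocking, and $\mu$ is $\eps$-stable.

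There is no real obstacle here; the only mild subtlety is that $\mu^{\text{opt}}$ may be wildly unstable, but mixing with weight only $\eps$ against a stable matching $\mu^{\text{s}}$ is exactly what shrinks any potential blocking deficit by a factor of $1-\eps$. The construction also transparently shows why a small-support $\eps$-stable matching suffices for a $1/\eps$-approximation, which is the feature contrasted in the discussion after \Cref{thm:LowerBoundSupportSize}.
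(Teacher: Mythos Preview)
Your proof is correct and essentially identical to the paper's own argument: both define $\mu \coloneqq (1-\eps)\mu^{\text{s}} + \eps\,\mu^{\text{opt}}$, bound the welfare by linearity, and derive $\eps$-stability from stability of $\mu^{\text{s}}$ via $u_m(\mu)\geq (1-\eps)u_m(\mu^{\text{s}})$ (or the analogous bound for $v_w$).
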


\begin{proof}
Let $\mu^\textup{s}$ be any stable integral matching and $\mu^\text{opt}$ be an optimal matching for $\I$. Note that both $\mu^\textup{s}$ and $\mu^\text{opt}$ can be computed in polynomial time. We will show that $\mu \coloneqq (1-\eps) \mu^\textup{s} + \eps \mu^\text{opt}$ satisfies the desired properties. Indeed, $\W(\mu) = (1-\eps) \W(\mu^\textup{s}) + \eps \W(\mu^\text{opt}) \geq \eps \W(\mu^\text{opt})$. Furthermore, since $\mu^\textup{s}$ is stable, we have that for any man-woman pair $(m,w) \in M \times W$, either $u_m(\mu^\textup{s}) \geq U(m,w)$ or $v_w(\mu^\textup{s}) \geq V(m,w)$. The former condition implies that $u_m(\mu) \geq (1-\eps) u_m(\mu^\textup{s}) \geq (1-\eps) U(m,w)$, while the latter condition gives $v_w(\mu) \geq (1-\eps) V(m,w)$. Either way, the pair $(m,w)$ is $\eps$-stable.
\end{proof}

In particular, \Cref{thm:ApproxStable+ApproxOptimal} shows that a $\frac{1}{2}$-stable fractional matching with welfare at least half of that of an optimal fractional matching (and therefore, that of an optimal stable fractional matching) can be computed in polynomial time. In \cref{app:Half-stability}, we provide a slightly stronger welfare guarantee: There is a polynomial-time algorithm that computes a $\frac{1}{2}$-stable fractional matching with welfare at least that of an optimal (exactly) stable fractional matching. 
\section{Hardness of approximation}
\label{sec:IntractabilityResults}
In this section, we present our inapproximability statements, which are by far the technically most involved results in the paper. We present polynomial-time reductions which, given a 3SAT formula $\phi$ of a particular structure, construct \SMC{} instances that simulate the evaluation of $\phi$ for every variable assignment. The constructed \SMC{} instances consist of several gadgets including an {\em accumulator}. The simulation of the evaluation of $\phi$ by the \SMC{} instance is such that:
\begin{itemize}
	\item [(a)] when $\phi$ has a satisfying assignment, there is a stable (or $\eps$-stable) fractional matching where the contribution of the agents in the accumulator gadget to the welfare can be large and dominates the contribution from the remaining \SMC{} instance, and
	\item [(b)] when $\phi$ is not satisfiable, the contribution of the accumulator and, subsequently, the total welfare of any stable (or $\eps$-stable) fractional matching is very small.
\end{itemize}
Hence, distinguishing between \SMC{} instances with stable (or $\eps$-stable) fractional matchings of very high and very low welfare would allow us to decide 3SAT. We have two inapproximability statements: \cref{thm:strong-inapprox} for \OptStab\ and \cref{thm:strong-eps-inapprox} for \OptEpsStab.

\begin{theorem}\label{thm:strong-inapprox}
	For every constant $\delta>0$, it is NP-hard to approximate \OptStab{} for \SMC{} instances with ternary valuations in $\{0,1,\alpha\}$to within a factor of (i) $\alpha-1/2-\delta$ if $\alpha=\bigO(n)$, and (ii) $\Omega(n^{1-\delta})$ otherwise.
\end{theorem}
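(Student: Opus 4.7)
The plan is to reduce from a bounded-occurrence variant of 3SAT (e.g., 3-Occur-3-SAT) in order to keep the size of the constructed SMC instance polynomially controlled. Given a formula $\phi$, I would build an instance consisting of three types of gadgets: \emph{variable gadgets} that encode Boolean assignments by which edges carry positive weight in a stable matching; \emph{clause gadgets} that verify each clause and ``leak'' stability pressure toward the accumulator whenever some clause is unsatisfied; and an \emph{accumulator gadget} built along the lines of the chain construction used in the proof of \Cref{thm:gap} (\Cref{fig:gap}), which admits two regimes of welfare differing by a factor approaching $\alpha - 1/2$.

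The key construction is the accumulator. I would take a chain of length $k$ of the $(0,\alpha)$, $(\alpha,0)$, and $(1,1)$ edges from \Cref{fig:gap}, so that the fractional welfare it can attain in its ``open'' mode is $\approx 4k(\alpha - 1/2) - \bigO(\alpha)$, while any configuration that forces one rung to be locally integral collapses the welfare along the whole chain to $\bigO(k + \alpha)$ by the same propagation argument used in \Cref{thm:gap}. The critical modification is to make the accumulator's ability to enter the open mode \emph{conditional} on a satisfying assignment: I would attach the first rung to an ``output'' wire produced by the clause-checking apparatus, so that whenever some clause is unsatisfied, a $1$\textemdash$1$ edge near the head of the chain is forced to carry full weight, triggering the collapse.

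Next I would analyze the two directions. Completeness: given a satisfying assignment of $\phi$, I exhibit a stable fractional matching that sets each variable gadget accordingly, routes every clause output to the ``satisfied'' side, and runs the accumulator in the fractional construction of \Cref{thm:gap}, giving welfare at least $(\alpha - 1/2 - o(1)) \cdot 4k$ plus the small contribution of the auxiliary gadgets. Soundness: if $\phi$ is unsatisfiable, at least one clause gadget fires under every consistent extension, forcing the propagation in the accumulator and bounding the total welfare by $\bigO(k + \alpha + |\phi|)$. Choosing $k$ large compared to $|\phi|$ yields ratio $\alpha - 1/2 - \delta$, proving part (i) whenever $\alpha = \bigO(n)$. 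For part (ii), where $\alpha$ may be super-linear, I would scale $\alpha = n^{\Theta(1)}$ with $k = \poly(|\phi|)$, so $n = \poly(|\phi|)$ and the ratio becomes $\Omega(n^{1-\delta})$ by tuning the exponent.

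The hardest part will be the soundness analysis for genuinely \emph{fractional} matchings: an adversary can spread weight across many integral matchings in the support, and I must argue that every such spreading still induces enough weight on a $1$\textemdash$1$ edge near the accumulator's head to propagate the collapse through the chain. This requires carefully tracking how the stability constraint $u_m(\mu) \geq U(m,w)$ or $v_w(\mu) \geq V(m,w)$ cascades when agents further down the chain can only draw their required utility from pairs inside the chain. A secondary difficulty is matching the constant $\alpha - 1/2$ tightly, which means the variable and clause gadgets must themselves contribute only $o(k\alpha)$ welfare, so their size and valuation structure must be designed (e.g., using only $0$\textemdash$1$ and $1$\textemdash$0$ edges that contribute $\bigO(|\phi|)$) to avoid eroding the gap established by the accumulator.
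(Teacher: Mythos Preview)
Your plan matches the paper's proof almost exactly: reduction from bounded-occurrence 3SAT (the paper uses 2P2N-3SAT), variable and clause gadgets wired through connectors to an accumulator built from the chain in \Cref{fig:gap}, completeness via the fractional configuration of \Cref{thm:gap}, and soundness via cascading stability constraints (the paper's Claims~\ref{claim:variable}--\ref{claim:accum}). Two implementation points you should anticipate: first, the auxiliary gadgets cannot be built from $0$\textemdash$1$ and $1$\textemdash$0$ edges alone, since light edges impose no stability constraints whatsoever---the paper's variable and clause gadgets consist entirely of $1$\textemdash$1$ edges, and the variable--clause connectors carry $\alpha$-valued edges, so the auxiliary welfare is $\Theta(\alpha N)$ rather than $\bigO(|\phi|)$, and $k$ must be taken on the order of $\alpha^2 N/\delta$ to absorb it; second, the accumulator of \Cref{fig:gap} and the simple connectors only work for $\alpha\geq 2$, so for $\alpha\in(3/2,2)$ the paper introduces considerably more elaborate VC-connectors and a different accumulator (\Cref{fig:vc-connector}b,d and \Cref{fig:acc}b) to keep both the propagation argument and the $\alpha-1/2$ welfare calculation valid.
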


\begin{theorem}\label{thm:strong-eps-inapprox}
	For any constant $\eps \in (0,0.03]$ and $\delta>0$, it is NP-hard to approximate \OptEpsStab\ to within a factor of $1/\eps - \delta$.
\end{theorem}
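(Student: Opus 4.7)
The plan is to prove \cref{thm:strong-eps-inapprox} via a polynomial-time reduction from a suitable restricted variant of 3SAT, following the same high-level template already announced for \cref{thm:strong-inapprox}: given a formula $\phi$, we build an \SMC{} instance $\I_\phi$ composed of variable gadgets, clause gadgets, and a central accumulator gadget, with parameters calibrated so that the optimal $\eps$-stable social welfare differs by a multiplicative factor close to $1/\eps$ between satisfiable and unsatisfiable instances. The conceptual twist compared to \cref{thm:strong-inapprox} is that the amplification parameter is now the stability slack $1/\eps$ rather than the valuation magnitude $\alpha$, so the accumulator must be engineered so that its ``activated'' configuration relies on a constant-size set of booster pairs that can only be matched using the $\eps$ of slack afforded by approximate stability.

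First I would describe the accumulator as a chain-like gadget (a simplified relative of the one used in \cref{thm:gap} and in the proof of \cref{thm:strong-inapprox}) in which each ``unit'' can contribute welfare $\Theta(1/\eps)$ to the objective, provided a designated booster edge is matched with weight $\eps$ while the remaining $1-\eps$ weight is spent on a baseline matching that meets the stability threshold of every potentially blocking pair to within a $(1-\eps)$ factor. Next I would specify variable gadgets whose only two essentially $\eps$-stable configurations correspond to the literals being \emph{true} or \emph{false}, and clause gadgets which, when wired to the variable gadgets, allow the booster edges of the accumulator to be matched if and only if at least one literal is set to \emph{true}. In the satisfiable direction, one picks a satisfying assignment, takes the corresponding baseline matching (a stable integral matching of the ``consistent'' sub-instance) with weight $1-\eps$, and a booster-heavy matching with weight $\eps$; this mirrors the algorithm of \cref{thm:ApproxStable+ApproxOptimal} and certifies $\eps$-stability while yielding welfare $\Omega(n/\eps)$.

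For the unsatisfiability direction I would argue, by a case analysis on how the $\eps$ budget is distributed across variable and clause gadgets, that whenever no assignment satisfies $\phi$ at least one clause gadget is forced to ``spend'' its entire slack on local blocking-pair constraints, which in turn blocks the accumulator from firing its boosters at any non-trivial rate and caps the total welfare at $O(n)$. Combining the two directions gives a gap of $1/\eps - o(1)$; to reach the required $1/\eps - \delta$ for an \emph{arbitrary} constant $\delta > 0$, I would scale the accumulator by replicating it polynomially many times (in the spirit of the parameter $k$ in \cref{thm:gap}) so that lower-order contributions from the variable and clause gadgets become negligible relative to the accumulator's welfare.

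The main obstacle will be the unsatisfiable direction: unlike the exact-stability case, the adversary now has a continuous $\eps$ budget of slack that can be distributed freely across all blocking constraints, so one must rule out clever fractional mixtures that simulate a satisfying assignment by spreading slack across many clause gadgets. Overcoming this will require designing the variable and clause gadgets so that every ``consistency'' violation incurs a slack cost bounded away from zero, and then using a global averaging argument over clauses to conclude that total slack of at most $\eps$ cannot cover more than an $O(\eps)$ fraction of clauses; the restriction $\eps \leq 0.03$ in the statement suggests that this argument works only when $\eps$ is small enough to separate the per-clause slack costs from the available budget.
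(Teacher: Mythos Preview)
Your high-level template is right and matches the paper's, but the sketch misses the single technical idea that makes the $1/\eps$ bound go through, and your intuition for the unsatisfiability direction is off in a way that would lead you astray.

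The paper's accumulator is not a chain and is not replicated: it is a \emph{single} woman-heavy edge $(m_1,w_1)$ of value $\gamma$ (with $\gamma$ chosen large in terms of $N$, $\eps$, $\delta$), together with the balanced ``tine'' edges $(m_1,w^c)$, one per clause $c$. In the satisfiable case one sets $\mu(m_1,w_1)=1$ and stabilizes every tine via the CA-connector side, giving welfare $\ge\gamma$. In the unsatisfiable case the claim is that $\mu(m_1,w_1)\le\eps$, so the accumulator contributes only $\gamma\eps$; the rest of the instance contributes $O(N)$, and choosing $\gamma$ large enough yields the ratio $1/\eps-\delta$. There is no per-unit $\Theta(1/\eps)$ chain and no polynomial replication; all the amplification sits in the scalar $\gamma$.

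The point you are missing is \emph{why} $\mu(m_1,w_1)\le\eps$ is forced. With $\eps$-stability, a tine edge $(m_1,w^{c})$ is stabilized either by $u_{m_1}\ge 1-\eps$ (which immediately gives $\mu(m_1,w_1)\le\eps$) \emph{or} by $v_{w^{c}}\ge 1-\eps$, and the latter can be achieved by putting weight roughly $1-C\eps$ on that single tine edge (for some constant $C$ coming from the clause/connector analysis). So if only one clause fails, the adversary can stabilize its tine via the $w^{c}$-side and still place weight $\Theta(\eps)$---but with the wrong constant---on $(m_1,w_1)$, ruining the tight $1/\eps$ gap. The paper handles this by a preprocessing step you do not mention: it duplicates every variable and every clause, so that an unsatisfiable $\phi$ has at least \emph{two} unsatisfied clauses. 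Two failing tines cannot both be stabilized from the $w$-side, because that would require total tine weight $>1$ once $(3\beta^2+4)\eps<1/2$; this is exactly where $\eps\le 0.03$ (with $\beta=2(1-\eps)$) is used. Your ``global averaging'' picture---an $\eps$ budget spread across clauses---does not match the semantics of $\eps$-stability, which is a per-pair threshold, not a shared resource; the correct obstruction is the pigeonhole argument on $m_1$'s unit weight against two simultaneously failing tines.
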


We will prove \cref{thm:strong-inapprox} here; the proof of \cref{thm:strong-eps-inapprox}, which uses similar gadgets but is slightly more involved, appears in \cref{app:eps-stability}. Since the proof is long, we have divided it into three parts: the description of the reduction (\cref{sec:reduction}), technical claims with gadget properties (\cref{sec:gadgets}), and the proof of the inapproximability result (\cref{sec:proof}).

\subsection{The reduction}\label{sec:reduction}
In particular, we present a polynomial-time reduction from 2P2N-3SAT, the special case of 3SAT consisting of 3-CNF clauses in which every variable appears four times: twice as a positive literal and twice as a negative one. 2P2N-3SAT is known to be NP-hard~\citep{Yos05}. Our reduction takes as input an instance of 2P2N-3SAT consisting of $N$ (boolean) variables $x_1, x_2, \dots, x_N$, and a 3-CNF formula $\phi$ with $L=4N/3$ clauses $c_1,c_2,\dots,c_{L}$. Without loss of generality, we assume that each clause in $\phi$ consists of distinct literals.

Given the instance of 2P2N-3SAT, our reduction generates an instance $\I=\langle M,W,U,V \rangle$ of  \OptStab. As usual, we denote by $n$ the number of men (or women) in $\I$. We will use a positive integer parameter $k$ which will determine the size of $n$; in particular, $n=\bigO(N+k)$. We define $\I$ by referring to its graph representation, which consists of {\em variable gadgets}, {\em clause gadgets}, {\em variable-clause connectors}, an {\em accumulator}, and {\em clause-accumulator connectors}. For each gadget, we classify the edges (i.e., man-woman pairs and their valuations) into the following three types:
\begin{itemize}
	\item {\em man-heavy} edges $(m,w)$ with $U(m,w)=\alpha$ and $V(m,w)=0$,
	\item {\em woman-heavy} edges $(m,w)$ with $U(m,w)=0$ and $V(m,w)=\alpha$, and
	\item {\em balanced} edges $(m,w)$ with $U(m,w)=V(m,w)=1$.
\end{itemize}
Recall that any pair $(m,w)$ that does not appear as an edge in the graph representation has $U(m,w)=V(m,w)=0$.

The instance $\I$ has a variable gadget for every variable $x$, which consists of five men $m^{x}_1$, $m^{x}_2$, $e^x_1$, $e^x_2$, $e^x_3$, four women $w^{\overline{x}}_1$, $w^{\overline{x}}_2$, $f^x_1$, $f^x_2$ and the ten balanced edges
$(e^x_1,f^x_1)$, $(m^x_1,f^x_1)$, $(m^x_1,w^{\overline{x}}_1)$, $(e^x_3,w^{\overline{x}}_1)$, $(e^x_3,w^{\overline{x}}_2)$, $(m^x_2,w^{\overline{x}}_2)$,  $(m^x_2,f^x_2)$, $(e^x_2,f^x_2)$, $(m^x_1,w^{\overline{x}}_2)$, and $(m^x_2,w^{\overline{x}}_1)$, as shown in \cref{fig:variable-and-clause-gadget}a.

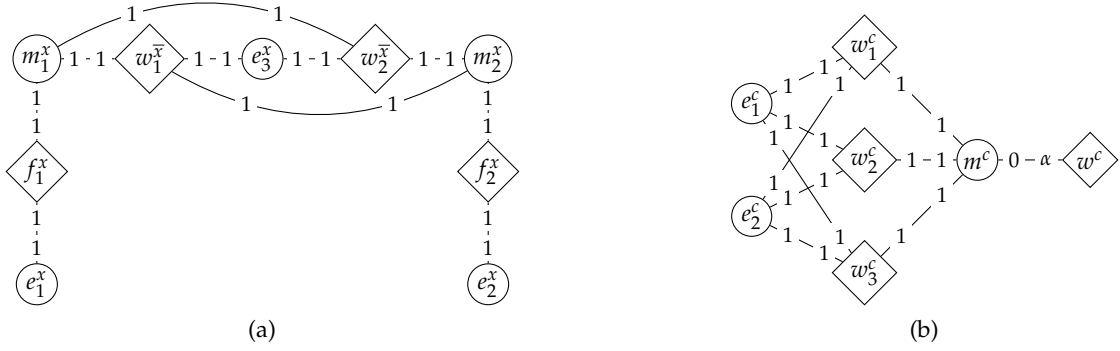
\begin{figure}[t]
	\footnotesize
	\centering
\subfloat[]{
	\begin{tikzpicture}[scale=0.75]
	\tikzset{man/.style = {shape=circle,draw,inner sep=1pt}}
	\tikzset{woman/.style = {shape=diamond,draw,inner sep=1pt}}
	\tikzset{edge/.style = {solid}}
	\tikzset{earedge/.style = {densely dotted}}
	\node[man]   (1) at (0,0)  {$e^x_1$};
	\node[woman] (2) at (0,2)  {$f^x_1$};
	\node[man]   (3) at (0,4)  {$m^x_1$};
	\node[woman] (4) at (2,4)  {$w^{\overline{x}}_1$};
	\node[man]   (5) at (4,4)  {$e^x_3$};
	\node[woman] (6) at (6,4)  {$w^{\overline{x}}_2$};
	\node[man]   (7) at (8,4)  {$m^x_2$};
	\node[woman] (8) at (8,2)  {$f^x_2$};
	\node[man]   (9) at (8,0)  {$e^x_2$};
	\draw[edge] (1) to node [near start,fill=white,inner sep=2pt] (122) {\scriptsize{$1$}} node [near end,fill=white,inner sep=2pt] (122) {\scriptsize{$1$}} (2);
	\draw[edge] (2) to node [near start,fill=white,inner sep=2pt] (223) {\scriptsize{$1$}} node [near end,fill=white,inner sep=2pt] (223) {\scriptsize{$1$}} (3);
	\draw[edge] (3) to node [near start,fill=white,inner sep=2pt] (324) {\scriptsize{$1$}} node [near end,fill=white,inner sep=2pt] (324) {\scriptsize{$1$}} (4);
	\draw[edge] (4) to node [near start,fill=white,inner sep=2pt] (425) {\scriptsize{$1$}} node [near end,fill=white,inner sep=2pt] (425) {\scriptsize{$1$}} (5);
	\draw[edge] (5) to node [near start,fill=white,inner sep=2pt] (526) {\scriptsize{$1$}} node [near end,fill=white,inner sep=2pt] (526) {\scriptsize{$1$}} (6);
	\draw[edge] (6) to node [near start,fill=white,inner sep=2pt] (627) {\scriptsize{$1$}} node [near end,fill=white,inner sep=2pt] (627) {\scriptsize{$1$}} (7);
	\draw[edge] (7) to node [near start,fill=white,inner sep=2pt] (728) {\scriptsize{$1$}} node [near end,fill=white,inner sep=2pt] (728) {\scriptsize{$1$}} (8);
	\draw[edge] (8) to node [near start,fill=white,inner sep=2pt] (829) {\scriptsize{$1$}} node [near end,fill=white,inner sep=2pt] (829) {\scriptsize{$1$}} (9);
	\draw[edge,bend left] (3) to node [near start,fill=white,inner sep=2pt] (326) {\scriptsize{$1$}} node [near end,fill=white,inner sep=2pt] (326) {\scriptsize{$1$}} (6);
	\draw[edge,bend right] (4) to node [near start,fill=white,inner sep=2pt] (427) {\scriptsize{$1$}} node [near end,fill=white,inner sep=2pt] (427) {\scriptsize{$1$}} (7);
	\end{tikzpicture}
}
\hspace{1in}
\subfloat[]{
	\begin{tikzpicture}[scale=0.75]
	\tikzset{man/.style = {shape=circle,draw,inner sep=1pt}}
	\tikzset{woman/.style = {shape=diamond,draw,inner sep=1pt}}
	\tikzset{edge/.style = {solid}}
	\tikzset{earedge/.style = {densely dotted}}
	\node[man]   (1) at (0,1)  {$e^c_2$};
	\node[man]   (2) at (0,3)  {$e^c_1$};
	\node[woman] (3) at (2,0)  {$w^c_3$};
	\node[woman] (4) at (2,2)  {$w^c_2$};
	\node[woman] (5) at (2,4)  {$w^c_1$};
	\node[man]   (6) at (4,2){$m^c$};
	\node[woman] (7) at (6,2)  {$w^c$};
	\draw[edge] (1) to node [near start,fill=white,inner sep=2pt] (123) {\scriptsize{$1$}} node [near end,fill=white,inner sep=2pt] (123) {\scriptsize{$1$}} (3);
	\draw[edge] (1) to node [near start,fill=white,inner sep=2pt] (124) {\scriptsize{$1$}} node [near end,fill=white,inner sep=2pt] (124) {\scriptsize{$1$}} (4);
	\draw[edge] (1) to node [very near start,fill=white,inner sep=2pt] (125) {\scriptsize{$1$}} node [very near end,fill=white,inner sep=2pt] (125) {\scriptsize{$1$}} (5);
	\draw[edge] (2) to node [very near start,fill=white,inner sep=2pt] (223) {\scriptsize{$1$}} node [very near end,fill=white,inner sep=2pt] (223) {\scriptsize{$1$}} (3);
	\draw[edge] (2) to node [near start,fill=white,inner sep=2pt] (224) {\scriptsize{$1$}} node [near end,fill=white,inner sep=2pt] (224) {\scriptsize{$1$}} (4);
	\draw[edge] (2) to node [near start,fill=white,inner sep=2pt] (225) {\scriptsize{$1$}} node [near end,fill=white,inner sep=2pt] (225) {\scriptsize{$1$}} (5);
	\draw[edge] (5) to node [near start,fill=white,inner sep=2pt] (526) {\scriptsize{$1$}} node [near end,fill=white,inner sep=2pt] (526) {\scriptsize{$1$}} (6);
	\draw[edge] (4) to node [near start,fill=white,inner sep=2pt] (426) {\scriptsize{$1$}} node [near end,fill=white,inner sep=2pt] (426) {\scriptsize{$1$}} (6);
	\draw[edge] (3) to node [near start,fill=white,inner sep=2pt] (326) {\scriptsize{$1$}} node [near end,fill=white,inner sep=2pt] (326) {\scriptsize{$1$}} (6);
	\draw[edge] (6) to node [near start,fill=white,inner sep=2pt] (627) {\scriptsize{$0$}} node [near end,fill=white,inner sep=2pt] (627) {\scriptsize{$\alpha$}} (7);
	\end{tikzpicture}
}
\caption{(a) The variable gadget corresponding to the variable $x$. (b) The clause gadget corresponding to the clause $c$ and its CA-connector $(m^c,w^c)$. 
\label{fig:variable-and-clause-gadget}}
\end{figure}

For every clause $c$, instance $\I$ has a clause gadget with three men $m^c$, $e^c_1$, $e^c_2$, three women $w^{c}_1$, $w^c_2$, $w^c_3$, and the nine balanced edges between them, as shown in \cref{fig:variable-and-clause-gadget}b.

For every appearance of a literal in a clause, there is a variable-clause connector (or {\em VC-connector}). The structure of VC-connectors depends on whether they correspond to positive or negative literals and the value of $\alpha$. In each case, we identify one edge of the VC-connector as the {\em input}, and either one or two edges as the {\em output}.

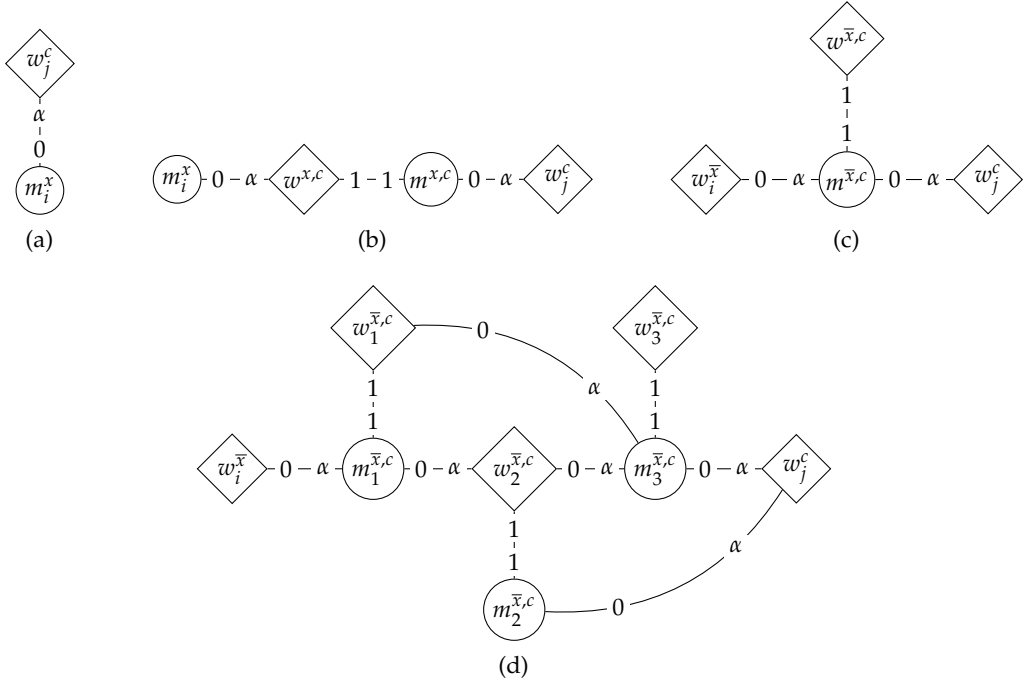
\begin{figure}[t]
\footnotesize
\centering
\subfloat[]{
	\begin{tikzpicture}[scale=0.75]
	\tikzset{man/.style = {shape=circle,draw,inner sep=1pt}}
	\tikzset{woman/.style = {shape=diamond,draw,inner sep=1pt}}
	\tikzset{edge/.style = {solid}}
	\tikzset{earedge/.style = {densely dotted}}
	\node[man]   (1) at (0,0)  {$m^x_i$};
	\node[woman] (2) at (0,2.25){$w^c_j$};
	\draw[edge] (1) to node [near start,fill=white,inner sep=2pt] (122) {\footnotesize{$0$}} node [near end,fill=white,inner sep=2pt] (122) {\footnotesize{$\alpha$}} (2);
	\end{tikzpicture}
}
\hspace{0.3in}
\subfloat[]{
	\begin{tikzpicture}[scale=0.75]
	\tikzset{man/.style = {shape=circle,draw,inner sep=1pt}}
	\tikzset{woman/.style = {shape=diamond,draw,inner sep=1pt}}
	\tikzset{edge/.style = {solid}}
	\tikzset{earedge/.style = {densely dotted}}
	\node[man]   (1) at (0,0)  {$m^x_i$};
	\node[woman] (2) at (2.25,0)  {$w^{x,c}$};
	\node[man]   (3) at (4.5,0)  {$m^{x,c}$};
	\node[woman] (4) at (6.75,0)  {$w^c_j$};
	\draw[edge] (1) to node [near start,fill=white,inner sep=2pt] (122) {\footnotesize{$0$}} node [near end,fill=white,inner sep=2pt] (122) {\footnotesize{$\alpha$}} (2);
	\draw[edge] (2) to node [near start,fill=white,inner sep=2pt] (223) {\footnotesize{$1$}} node [near end,fill=white,inner sep=2pt] (223) {\footnotesize{$1$}} (3);
	\draw[edge] (3) to node [near start,fill=white,inner sep=2pt] (324) {\footnotesize{$0$}} node [near end,fill=white,inner sep=2pt] (324) {\footnotesize{$\alpha$}} (4);
	\end{tikzpicture}
}
\hspace{0.3in}
\subfloat[]{
	\begin{tikzpicture}[scale=0.75]
	\tikzset{man/.style = {shape=circle,draw,inner sep=1pt}}
	\tikzset{woman/.style = {shape=diamond,draw,inner sep=1pt}}
	\tikzset{edge/.style = {solid}}
	\tikzset{earedge/.style = {densely dotted}}
	\node[woman] (1) at (0,0) {$w^{\overline{x}}_i$};
	\node[man] (2) at (2.5,0) {$m^{\overline{x},c}$};
	\node[woman] (3) at (5,0)  {$w^c_j$};
	\node[woman] (4) at (2.5,2.5) {$w^{\overline{x},c}$};
	\draw[edge] (1) to node [near start,fill=white,inner sep=2pt] (122) {\footnotesize{$0$}} node [near end,fill=white,inner sep=2pt] (122) {\footnotesize{$\alpha$}} (2);
	\draw[edge] (2) to node [near start,fill=white,inner sep=2pt] (223) {\footnotesize{$0$}} node [near end,fill=white,inner sep=2pt] (223) {\footnotesize{$\alpha$}} (3);
	\draw[edge] (2) to node [near start,fill=white,inner sep=2pt] (224) {\footnotesize{$1$}} node [near end,fill=white,inner sep=2pt] (224) {\footnotesize{$1$}} (4);
	\end{tikzpicture}
}
\hspace{0.3in}
\subfloat[]{
	\begin{tikzpicture}[scale=0.75]
	\tikzset{man/.style = {shape=circle,draw,inner sep=1pt}}
	\tikzset{woman/.style = {shape=diamond,draw,inner sep=1pt}}
	\tikzset{edge/.style = {solid}}
	\tikzset{earedge/.style = {densely dotted}}
	\node[woman] (1) at (0,0) {$w^{\overline{x}}_i$};
	\node[man] (2) at (2.5,0) {$m^{\overline{x},c}_1$};
	\node[woman] (3) at (5,0) {$w^{\overline{x},c}_2$};
	\node[man] (4) at (7.5,0) {$m^{\overline{x},c}_3$};
	\node[woman] (5) at (10,0){$w^c_j$};
	\node[woman] (6) at (2.5,2.5) {$w^{\overline{x},c}_1$};
	\node[woman] (7) at (7.5,2.5) {$w^{\overline{x},c}_3$};
	\node[man] (8) at (5,-2.5) {$m^{\overline{x},c}_2$};
	\draw[edge] (1) to node [near start,fill=white,inner sep=2pt] (122) {\footnotesize{$0$}} node [near end,fill=white,inner sep=2pt] (122) {\footnotesize{$\alpha$}} (2);
	\draw[edge] (2) to node [near start,fill=white,inner sep=2pt] (223) {\footnotesize{$0$}} node [near end,fill=white,inner sep=2pt] (223) {\footnotesize{$\alpha$}} (3);
	\draw[edge] (3) to node [near start,fill=white,inner sep=2pt] (324) {\footnotesize{$0$}} node [near end,fill=white,inner sep=2pt] (324) {\footnotesize{$\alpha$}} (4);
	\draw[edge] (4) to node [near start,fill=white,inner sep=2pt] (425) {\footnotesize{$0$}} node [near end,fill=white,inner sep=2pt] (425) {\footnotesize{$\alpha$}} (5);
	\draw[edge] (2) to node [near start,fill=white,inner sep=2pt] (226) {\footnotesize{$1$}} node [near end,fill=white,inner sep=2pt] (226) {\footnotesize{$1$}} (6);
	\draw[edge] (3) to node [near start,fill=white,inner sep=2pt] (328) {\footnotesize{$1$}} node [near end,fill=white,inner sep=2pt] (328) {\footnotesize{$1$}} (8);
	\draw[edge] (4) to node [near start,fill=white,inner sep=2pt] (427) {\footnotesize{$1$}} node [near end,fill=white,inner sep=2pt] (427) {\footnotesize{$1$}} (7);
	\draw[edge,bend left] (6) to node [near start,fill=white,inner sep=2pt] (624) {\footnotesize{$0$}} node [near end,fill=white,inner sep=2pt] (624) {\footnotesize{$\alpha$}} (4);
	\draw[edge,bend right] (8) to node [near start,fill=white,inner sep=2pt] (825) {\footnotesize{$0$}} node [near end,fill=white,inner sep=2pt] (825) {\footnotesize{$\alpha$}} (5);
	\end{tikzpicture}
}
\caption{VC-connectors corresponding to clause $c$ and positive literal $x$ for (a) $\alpha\geq 2$ and (b) $\alpha\in (3/2,2)$, and to clause $c$ and negative literal $\overline{x}$ for (c) $\alpha\geq 2$ and (d) $\alpha\in (3/2,2)$.\label{fig:vc-connector}}
\end{figure}

Specifically, for every positive literal $x$ whose $i$-th appearance ($i\in\{1,2\}$) is as the $j$-th literal ($j\in\{1,2,3\}$) of clause $c$, $\I$ has a VC-connector defined as follows:
\begin{itemize}
	\item When $\alpha\geq 2$, the VC-connector consists of a single woman-heavy edge between $m^x_i$ (from the variable gadget corresponding to variable $x$) and $w^c_j$ (from the clause gadget corresponding to clause $c$), as shown in \cref{fig:vc-connector}a. This edge is simultaneously the input and the output edge of the VC-connector.
	\item When $\alpha\in (3/2,2)$, the VC-connector consists of woman $w^{x,c}$, man $m^{x,c}$, the woman-heavy edges $(m^x_i,w^{x,c})$ and $(m^{x,c},w^c_j)$, and the balanced edge $(m^{x,c},w^{x,c})$, as shown in \cref{fig:vc-connector}b. Here, $(m^x_i,w^{x,c})$ is the input and $(m^{x,c},w^c_j)$ is the output edge.
	\end{itemize}
For every negative literal $\overline{x}$ whose $i$-th appearance ($i\in\{1,2\}$) is as the $j$-th literal ($j\in\{1,2,3\}$) of clause $c$, $\I$ has a VC-connector defined as follows:
\begin{itemize}
	\item When $\alpha\geq 2$, the VC-connector consists of man $m^{\overline{x},c}$, woman $w^{\overline{x},c}$, the man-heavy edge $(m^{\overline{x},c},w^{\overline{x}}_i)$, the balanced edge $(m^{\overline{x},c},w^{\overline{x},c})$, and the woman-heavy edge $(m^{\overline{x},c},w^c_j)$, as shown in \cref{fig:vc-connector}c. Here, $(m^{\overline{x},c},w^{\overline{x}}_i)$ is the input and $(m^{\overline{x},c},w^c_j)$ is the output edge.
	\item When $\alpha\in(3/2,2)$, the VC-connector consists of three men $m^{\overline{x},c}_1$, $m^{\overline{x},c}_2$, $m^{\overline{x},c}_3$, three women $w^{\overline{x},c}_1$, $w^{\overline{x},c}_2$, $w^{\overline{x},c}_3$, the man-heavy edges $(m^{\overline{x},c}_1,w^{\overline{x}}_i)$, $(m^{\overline{x},c}_3,w^{\overline{x},c}_1)$, $(m^{\overline{x},c}_3,w^{\overline{x},c}_2)$, the woman-heavy edges $(m^{\overline{x},c}_1,w^{\overline{x},c}_2)$, $(m^{\overline{x},c}_2,w^{c}_j)$, $(m^{\overline{x},c}_3,w^{c}_j)$, and the balanced edges $(m^{\overline{x},c}_1,w^{\overline{x},c}_1)$, $(m^{\overline{x},c}_2,w^{\overline{x},c}_2)$, $(m^{\overline{x},c}_3,w^{\overline{x},c}_3)$, as shown in \cref{fig:vc-connector}d. In this case, the VC-connector has one input edge $(m^{\overline{x},c}_1,w^{\overline{x}}_i)$ and two output edges $(m^{\overline{x},c}_2,w^{c}_j)$ and $(m^{\overline{x},c}_3,w^{c}_j)$.
\end{itemize}

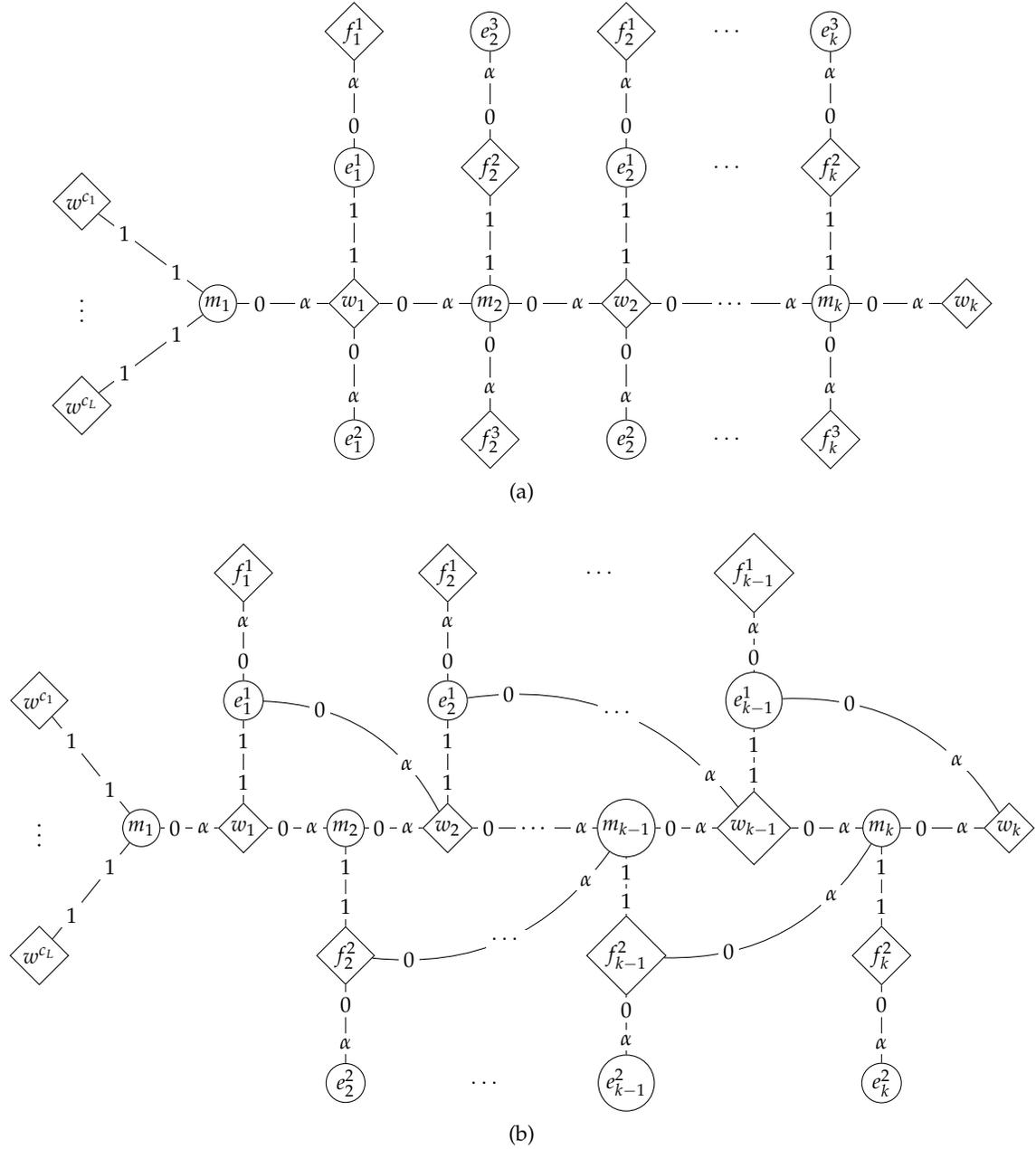
\begin{figure}
\footnotesize
\centering
\subfloat[]{
	\begin{tikzpicture}
	\tikzset{man/.style = {shape=circle,draw,inner sep=1pt}}
	\tikzset{woman/.style = {shape=diamond,draw,inner sep=1pt}}
	\tikzset{edge/.style = {solid}}
	\tikzset{earedge/.style = {densely dotted}}
	\node[woman] (1) at (0,1.5)  {$w^{c_1}$};
	\node[woman] (2) at (0,-1.5){$w^{c_{L}}$};
	\node at (0,0) {$\vdots$};
	\node[man] (3) at (2,0)  {$m_1$};
	\node[woman] (4) at (4,0)  {$w_1$};
	\node[man] (5) at (6,0)  {$m_2$};
	\node[man] (6) at (4,-2)  {$e_1^2$};
	\node[woman] (7) at (6,-2)  {$f_2^3$};
	\node[woman] (8) at (4,4)  {$f_1^1$};
	\node[man] (9) at (4,2)  {$e_1^1$};
	\node[man] (10) at (6,4)  {$e_2^3$};
	\node[woman] (11) at (6,2)  {$f_2^2$};
	\node[man] (12) at (11,0)  {$m_k$};
	\node[woman] (13) at (13,0)  {$w_k$};
	\node[man] (14) at (11,4)  {$e_k^3$};
	\node[woman] (15) at (11,2)  {$f_k^2$};
	\node[woman] (16) at (11,-2)  {$f_k^3$};
	\node[woman] (17) at (8,4)  {$f_2^1$};
	\node[man] (18) at (8,2)  {$e_2^1$};
	\node[woman] (19) at (8,0)  {$w_2$};
	\node[man] (20) at (8,-2)  {$e_2^2$};
	\node at (9.5,4) {$\cdots$};
	\node at (9.5,2) {$\cdots$};
	\node at (9.5,-2) {$\cdots$};
	\draw[edge] (1) to node [near start,fill=white,inner sep=2pt] (123) {\footnotesize{$1$}} node [near end,fill=white,inner sep=2pt] (123) {\footnotesize{$1$}} (3);
	\draw[edge] (2) to node [near start,fill=white,inner sep=2pt] (223) {\footnotesize{$1$}} node [near end,fill=white,inner sep=2pt] (223) {\footnotesize{$1$}} (3);
	\draw[edge] (3) to node [near start,fill=white,inner sep=2pt] (324) {\footnotesize{$0$}} node [near end,fill=white,inner sep=2pt] (324) {\footnotesize{$\alpha$}} (4);
	\draw[edge] (4) to node [near start,fill=white,inner sep=2pt] (425) {\footnotesize{$0$}} node [near end,fill=white,inner sep=2pt] (425) {\footnotesize{$\alpha$}} (5);
	\draw[edge] (4) to node [near start,fill=white,inner sep=2pt] (426) {\footnotesize{$0$}} node [near end,fill=white,inner sep=2pt] (426) {\footnotesize{$\alpha$}} (6);
	\draw[edge] (5) to node [near start,fill=white,inner sep=2pt] (527) {\footnotesize{$0$}} node [near end,fill=white,inner sep=2pt] (527) {\footnotesize{$\alpha$}} (7);
	\draw[edge] (9) to node [near start,fill=white,inner sep=2pt] (928) {\footnotesize{$0$}} node [near end,fill=white,inner sep=2pt] (928) {\footnotesize{$\alpha$}} (8);
	\draw[edge] (9) to node [near start,fill=white,inner sep=2pt] (924) {\footnotesize{$1$}} node [near end,fill=white,inner sep=2pt] (924) {\footnotesize{$1$}} (4);
	\draw[edge] (11) to node [near start,fill=white,inner sep=2pt] (11210) {\footnotesize{$0$}} node [near end,fill=white,inner sep=2pt] (11210) {\footnotesize{$\alpha$}} (10);
	\draw[edge] (5) to node [near start,fill=white,inner sep=2pt] (5211) {\footnotesize{$1$}} node [near end,fill=white,inner sep=2pt] (5211) {\footnotesize{$1$}} (11);
	\draw[edge] (5) to node [near start,fill=white,inner sep=2pt] (5219) {\footnotesize{$0$}} node [near end,fill=white,inner sep=2pt] (5219) {\footnotesize{$\alpha$}} (19);
		\draw[edge] (19) to node [very near start,fill=white,inner sep=2pt] (19212) {\footnotesize{$0$}} node [midway,fill=white,inner sep=2pt] (19212) {$\dots$} node [very near end,fill=white,inner sep=2pt] (19212) {\footnotesize{$\alpha$}} (12);	
	\draw[edge] (12) to node [near start,fill=white,inner sep=2pt] (12213) {\footnotesize{$0$}} node [near end,fill=white,inner sep=2pt] (12213) {\footnotesize{$\alpha$}} (13);
	\draw[edge] (15) to node [near start,fill=white,inner sep=2pt] (15214) {\footnotesize{$0$}} node [near end,fill=white,inner sep=2pt] (15214) {\footnotesize{$\alpha$}} (14);
	\draw[edge] (15) to node [near start,fill=white,inner sep=2pt] (15212) {\footnotesize{$1$}} node [near end,fill=white,inner sep=2pt] (15212) {\footnotesize{$1$}} (12);
	\draw[edge] (12) to node [near start,fill=white,inner sep=2pt] (12216) {\footnotesize{$0$}} node [near end,fill=white,inner sep=2pt] (12216) {\footnotesize{$\alpha$}} (16);
	\draw[edge] (17) to node [near start,fill=white,inner sep=2pt] (17218) {\footnotesize{$\alpha$}} node [near end,fill=white,inner sep=2pt] (17218) {\footnotesize{$0$}} (18);
	\draw[edge] (18) to node [near start,fill=white,inner sep=2pt] (18219) {\footnotesize{$1$}} node [near end,fill=white,inner sep=2pt] (18219) {\footnotesize{$1$}} (19);
	\draw[edge] (19) to node [near start,fill=white,inner sep=2pt] (19220) {\footnotesize{$0$}} node [near end,fill=white,inner sep=2pt] (19220) {\footnotesize{$\alpha$}} (20);
	\end{tikzpicture}
}
\hspace{0.1in}
\subfloat[]{
	\begin{tikzpicture}[scale=0.75]
	\tikzset{man/.style = {shape=circle,draw,inner sep=1pt}}
	\tikzset{woman/.style = {shape=diamond,draw,inner sep=1pt}}
	\tikzset{edge/.style = {solid}}
	\tikzset{earedge/.style = {densely dotted}}
	\node[woman] (1) at (0,2.5)  {$w^{c_1}$};
	\node[woman] (2) at (0,-2.5){$w^{c_{L}}$};
	\node at (0,0) {$\vdots$};
	\node[man] (3) at (2,0)  {$m_1$};
	\node[woman] (4) at (4,0)  {$w_1$};
	\node[man] (5) at (6,0)  {$m_2$};
	\node[woman] (8) at (4,5)  {$f_1^1$};
	\node[man] (9) at (4,2.5)  {$e_1^1$};
	\node[man] (10) at (6,-5)  {$e_2^2$};
	\node[woman] (11) at (6,-2.5)  {$f_2^2$};
	\node[man] (12) at (16.5,0)  {$m_k$};
	\node[woman] (13) at (19,0)  {$w_k$};
	\node[man] (14) at (16.5,-5)  {$e_k^2$};
	\node[woman] (15) at (16.5,-2.5)  {$f_k^2$};
	\node[woman] (17) at (8,0)  {$w_2$};
	\node[man] (18) at (8,2.5)  {$e_2^1$};
	\node[woman] (19) at (8,5)  {$f_2^1$};
	\node[man] (20) at (11.5,0)  {$m_{k-1}$};
	\node[woman] (21) at (11.5,-2.5)  {$f_{k-1}^2$};
	\node[man] (22) at (11.5,-5)  {$e_{k-1}^2$};
	\node[woman] (23) at (14,0)  {$w_{k-1}$};
	\node[man] (24) at (14,2.5)  {$e_{k-1}^1$};
	\node[woman] (25) at (14,5)  {$f_{k-1}^1$};
	\node at (11,5) {$\cdots$};
	\node at (8.75,-5) {$\cdots$};
	\draw[edge] (1) to node [near start,fill=white,inner sep=2pt] (123) {\footnotesize{$1$}} node [near end,fill=white,inner sep=2pt] (123) {\footnotesize{$1$}} (3);
	\draw[edge] (2) to node [near start,fill=white,inner sep=2pt] (223) {\footnotesize{$1$}} node [near end,fill=white,inner sep=2pt] (223) {\footnotesize{$1$}} (3);
	\draw[edge] (3) to node [near start,fill=white,inner sep=2pt] (324) {\footnotesize{$0$}} node [near end,fill=white,inner sep=2pt] (324) {\footnotesize{$\alpha$}} (4);
	\draw[edge] (4) to node [near start,fill=white,inner sep=2pt] (425) {\footnotesize{$0$}} node [near end,fill=white,inner sep=2pt] (425) {\footnotesize{$\alpha$}} (5);
	\draw[edge] (9) to node [near start,fill=white,inner sep=2pt] (928) {\footnotesize{$0$}} node [near end,fill=white,inner sep=2pt] (928) {\footnotesize{$\alpha$}} (8);
	\draw[edge] (9) to node [near start,fill=white,inner sep=2pt] (924) {\footnotesize{$1$}} node [near end,fill=white,inner sep=2pt] (924) {\footnotesize{$1$}} (4);
	\draw[edge] (11) to node [near start,fill=white,inner sep=2pt] (11210) {\footnotesize{$0$}} node [near end,fill=white,inner sep=2pt] (11210) {\footnotesize{$\alpha$}} (10);
	\draw[edge] (5) to node [near start,fill=white,inner sep=2pt] (5211) {\footnotesize{$1$}} node [near end,fill=white,inner sep=2pt] (5211) {\footnotesize{$1$}} (11);
	\draw[edge] (12) to node [near start,fill=white,inner sep=2pt] (12213) {\footnotesize{$0$}} node [near end,fill=white,inner sep=2pt] (12213) {\footnotesize{$\alpha$}} (13);
	\draw[edge] (15) to node [near start,fill=white,inner sep=2pt] (15214) {\footnotesize{$0$}} node [near end,fill=white,inner sep=2pt] (15214) {\footnotesize{$\alpha$}} (14);
	\draw[edge] (15) to node [near start,fill=white,inner sep=2pt] (15212) {\footnotesize{$1$}} node [near end,fill=white,inner sep=2pt] (15212) {\footnotesize{$1$}} (12);
	\draw[edge] (5) to node [near start,fill=white,inner sep=2pt] (5217) {\footnotesize{$0$}} node [near end,fill=white,inner sep=2pt] (5217) {\footnotesize{$\alpha$}} (17);
	\draw[edge,bend left] (9) to node [near start,fill=white,inner sep=2pt] (9217) {\footnotesize{$0$}} node [near end,fill=white,inner sep=2pt] (9217) {\footnotesize{$\alpha$}} (17);
	\draw[edge] (17) to node [near start,fill=white,inner sep=2pt] (17218) {\footnotesize{$1$}} node [near end,fill=white,inner sep=2pt] (17218) {\footnotesize{$1$}} (18);
	\draw[edge,postaction={decorate,decoration={text along path,text align=center}}] (17) to node [very near start,fill=white,inner sep=2pt] (17220) {\footnotesize{$0$}} node [midway,fill=white,inner sep=2pt] (17220) {$\dots$} node [very near end,fill=white,inner sep=2pt] (17220) {\footnotesize{$\alpha$}} (20);
	\draw[edge] (18) to node [near start,fill=white,inner sep=2pt] (18219) {\footnotesize{$0$}} node [near end,fill=white,inner sep=2pt] (18219) {\footnotesize{$\alpha$}} (19);
	\draw[edge, bend left,postaction={decorate,decoration={text along path,text align=center,text={}}}] (18) to node [very near start,fill=white,inner sep=2pt] (18223) {\footnotesize{$0$}}  node [midway,fill=white,inner sep=2pt] (18223) {$\cdots$} node [very near end,fill=white,inner sep=2pt] (18223) {\footnotesize{$\alpha$}} (23);
	\draw[edge, bend right,postaction={decorate,decoration={text along path,text align=center}}] (11) to node [very near start,fill=white,inner sep=2pt] (11220) {\footnotesize{$0$}} node [midway,fill=white,inner sep=2pt] (11220) {$\cdots$} node [very near end,fill=white,inner sep=2pt] (11220) {\footnotesize{$\alpha$}} (20);
	\draw[edge] (20) to node [near start,fill=white,inner sep=2pt] (20221) {\footnotesize{$1$}} node [near end,fill=white,inner sep=2pt] (20221) {\footnotesize{$1$}} (21);
	\draw[edge] (21) to node [near start,fill=white,inner sep=2pt] (21222) {\footnotesize{$0$}} node [near end,fill=white,inner sep=2pt] (21222) {\footnotesize{$\alpha$}} (22);
	\draw[edge] (20) to node [near start,fill=white,inner sep=2pt] (20223) {\footnotesize{$0$}} node [near end,fill=white,inner sep=2pt] (20223) {\footnotesize{$\alpha$}} (23);
	\draw[edge] (23) to node [near start,fill=white,inner sep=2pt] (23224) {\footnotesize{$1$}} node [near end,fill=white,inner sep=2pt] (23224) {\footnotesize{$1$}} (24);
	\draw[edge] (24) to node [near start,fill=white,inner sep=2pt] (24225) {\footnotesize{$0$}} node [near end,fill=white,inner sep=2pt] (24225) {\footnotesize{$\alpha$}} (25);
	\draw[edge] (23) to node [near start,fill=white,inner sep=2pt] (23212) {\footnotesize{$0$}} node [near end,fill=white,inner sep=2pt] (23212) {\footnotesize{$\alpha$}} (12);
	\draw[edge,bend left] (24) to node [near start,fill=white,inner sep=2pt] (24213) {\footnotesize{$0$}} node [near end,fill=white,inner sep=2pt] (24213) {\footnotesize{$\alpha$}} (13);
	\draw[edge,bend right] (21) to node [near start,fill=white,inner sep=2pt] (21212) {\footnotesize{$0$}} node [near end,fill=white,inner sep=2pt] (21212) {\footnotesize{$\alpha$}} (12);
	\end{tikzpicture}
}
	\caption{The accumulator for the cases (a) $\alpha\geq 2$ and (b) $\alpha\in (3/2,2)$.\label{fig:acc}}
\end{figure}

The accumulator (\Cref{fig:acc}) of instance $\I$ has different structure depending on the value of $\alpha$. Its size depends on the positive integer parameter $k$.
\begin{itemize}
	\item When $\alpha\geq 2$ (see \cref{fig:acc}a), the accumulator consists of man $m_i$ and woman $w_i$ for all $i \in \{1,\dots,k\}$, men $e^1_i$ and $e^2_i$ and woman $f^1_i$ for all $i \in \{1,\dots,k-1\}$, man $e^3_i$ and women $f^2_i$ and $f^3_i$ for all $i \in \{2,\dots,k\}$, and woman $w^c$ for every clause $c$ of $\phi$. The edges comprise of man-heavy edges $(m_i,w_{i-1})$ and $(e^3_i,f^2_i)$ for all $i \in \{2,\dots,k\}$ and $(e^2_i,w_i)$ for all $i \in \{1,\dots,k-1\}$, balanced edges $(m_1,w^c)$ for every clause $c$, which we call {\em tine} edges, $(e^1_i,w_i)$ for all $i \in \{1,\dots,k-1\}$, and $(m_i,f^2_i)$ for all $i \in \{2,\dots,k\}$, and woman-heavy edges $(m_i,w_i)$ for all $i \in \{1,\dots,k\}$, $(e^1_i,f^1_i)$ for all $i \in \{1,\dots,k-1\}$, and $(m_i,f^3_i)$ for all $i \in \{2,\dots,k\}$.
	\item When $\alpha \in (3/2,2)$ (see \cref{fig:acc}b), the accumulator consists of man $m_i$, woman $w_i$ for all $i \in \{1,\dots,k\}$, man $e^1_i$ and woman $f^1_i$ for all $i \in \{1,\dots,k-1\}$, man $e^2_i$ and woman $f^2_i$ for all $i \in \{2,\dots,k\}$,  and woman $w^c$ for every clause $c$ of $\phi$. The edges comprise of man-heavy edges $(m_i,w_{i-1})$ and $(e^2_i,f^2_i)$ for all $i \in \{2,\dots,k\}$ and $(m_i,f^2_{i-1})$ for all $i \in \{3,\dots,k\}$, the balanced edges $(m_1,w^c)$ for every clause $c$ (tine edges), $(e^1_i,w_i)$ for all $i \in \{1,\dots,k-1\}$ and $(m_i,f^2_i)$ for all $i \in \{2,\dots,k\}$, and the woman-heavy edges $(m_i,w_i)$ for all $i \in \{1,\dots,k\}$, and $(e^1_i,f^1_i)$ and $(e^1_i,w_{i+1})$ for all $i \in \{1,\dots,k-1\}$.
\end{itemize}

Finally, instance $\I$ has a clause-accumulator connector (or {\em CA-connector}) for every clause $c$ of $\phi$ consisting of the woman-heavy edge $(m^c,w^c)$ between the man $m^c$ (from the clause gadget corresponding to clause $c$) and woman $w^c$ (from the accumulator); see \cref{fig:variable-and-clause-gadget}b. Notice that the above construction has more women than men. To restore balance, we pad the instance with extra (isolated) men that neither value nor are valued by any other agent. 
This completes the construction of the reduced instance.

\subsection{Gadget properties}\label{sec:gadgets}
We will now prove several important properties (Claims~\ref{claim:variable}-\ref{claim:accum}) of our construction. 

\begin{claim}\label{claim:variable}
	For every variable $x$, a stable fractional matching $\mu$ satisfies at least one of the following:
	\begin{enumerate}
		\item[(1)] $\mu(m^x_1,w^{\overline{x}}_1)+\mu(m^x_1,w^{\overline{x}}_2)+\mu(m^x_1,f^x_1)=1$ and $\mu(m^x_2,w^{\overline{x}}_1)+\mu(m^x_2,w^{\overline{x}}_2)+\mu(m^x_2,f^x_2)=1$.
		\item[(2)] $\mu(m^x_1,w^{\overline{x}}_1)+\mu(m^x_2,w^{\overline{x}}_1)+\mu(e^x_3,w^{\overline{x}}_1)=1$ and $\mu(m^x_1,w^{\overline{x}}_2)+\mu(m^x_2,w^{\overline{x}}_2)+\mu(e^x_3,w^{\overline{x}}_2)=1$.
	\end{enumerate}
\end{claim}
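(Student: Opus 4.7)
The argument is by case analysis on whether $m^x_1$ and $m^x_2$ have utility $1$ in $\mu$.

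First, I would observe the two key ``utility-versus-$\mu$-weight'' identities that make the two conditions in the claim interpretable as statements about agents' utilities. Inspecting the edges incident to $m^x_1$ in the full instance, its only balanced edges are the three listed in condition~(1); any other incidence comes from a VC-connector for the positive literal $x$, and in every such VC-connector (both the $\alpha\geq 2$ case in Figure~5a and the $\alpha\in(3/2,2)$ case in Figure~5b) the edge at $m^x_1$ is woman-heavy, contributing $0$ to $u_{m^x_1}(\mu)$. Hence
\[
u_{m^x_1}(\mu) \;=\; \mu(m^x_1,w^{\overline{x}}_1)+\mu(m^x_1,w^{\overline{x}}_2)+\mu(m^x_1,f^x_1),
\]
and likewise for $m^x_2$. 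By the analogous inspection for $w^{\overline{x}}_i$ (the only non-gadget edges hitting $w^{\overline{x}}_i$ come from VC-connectors for the negative literal $\overline{x}$, and in both Figures~5c and~5d those are man-heavy, contributing $0$ to $v_{w^{\overline{x}}_i}(\mu)$),
\[
v_{w^{\overline{x}}_i}(\mu) \;=\; \mu(m^x_1,w^{\overline{x}}_i)+\mu(m^x_2,w^{\overline{x}}_i)+\mu(e^x_3,w^{\overline{x}}_i),\qquad i\in\{1,2\}.
\]
Thus condition~(1) is exactly ``$u_{m^x_1}(\mu)=u_{m^x_2}(\mu)=1$'' and condition~(2) is ``$v_{w^{\overline{x}}_1}(\mu)=v_{w^{\overline{x}}_2}(\mu)=1$''.

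Next, suppose condition~(1) fails. Then $u_{m^x_1}(\mu)<1$ or $u_{m^x_2}(\mu)<1$; by symmetry assume the former. Since $(m^x_1,w^{\overline{x}}_1)$ is a balanced edge with $U(m^x_1,w^{\overline{x}}_1)=V(m^x_1,w^{\overline{x}}_1)=1$, stability of $\mu$ on this pair combined with $u_{m^x_1}(\mu)<1$ forces $v_{w^{\overline{x}}_1}(\mu)\geq 1$, and feasibility of $\mu$ gives $v_{w^{\overline{x}}_1}(\mu)\leq 1$, so $v_{w^{\overline{x}}_1}(\mu)=1$. Applying the same argument to the balanced pair $(m^x_1,w^{\overline{x}}_2)$ yields $v_{w^{\overline{x}}_2}(\mu)=1$, which is exactly condition~(2). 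If instead $u_{m^x_2}(\mu)<1$, the pairs $(m^x_2,w^{\overline{x}}_1)$ and $(m^x_2,w^{\overline{x}}_2)$ give the same conclusion.

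There is no real obstacle here: the only nontrivial ingredient is the ``utility equals gadget-only sum'' identities, which amount to cataloguing the non-gadget edges incident to the five relevant agents and observing that all of them are heavy on the opposite side; once that bookkeeping is done, a single application of stability on each of the balanced edges $(m^x_1,w^{\overline{x}}_j)$ (or $(m^x_2,w^{\overline{x}}_j)$) delivers the dichotomy. The edges $(e^x_1,f^x_1)$, $(e^x_2,f^x_2)$, $(e^x_3,w^{\overline{x}}_j)$, and $(m^x_i,f^x_i)$ play no role in this particular claim (they will matter for subsequent gadget properties).
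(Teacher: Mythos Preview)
Your argument is correct and is essentially the paper's own proof, just with the implicit ``sum equals utility'' identities made explicit. The paper argues by contradiction---if both (1) and (2) fail then some $(m^x_i,w^{\overline{x}}_j)$ has both agents with utility below $1$ and is blocking---which is logically the same step as your direct ``if (1) fails then stability on $(m^x_i,w^{\overline{x}}_1)$ and $(m^x_i,w^{\overline{x}}_2)$ forces (2)''.
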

\begin{proof} 
Suppose, for contradiction, that for some $i,j \in \{1,2\}$, we have $\mu(m^x_i,w^{\overline{x}}_1)+\mu(m^x_i,w^{\overline{x}}_2)+\mu(m^x_i,f^x_i)<1$ and $\mu(m^x_1,w^{\overline{x}}_j)+\mu(m^x_2,w^{\overline{x}}_j)+\mu(e^x_3,w^{\overline{x}}_j)<1$. Then, both $m^x_i$ and $w^{\overline{x}}_j$ will have utility strictly less than $1$ in $\mu$, and thus the pair $(m^x_i,w^{\overline{x}}_j)$ would be blocking.
\end{proof}

We remark that the two conditions in the statement of \cref{claim:variable} affect the weight of the input edges of the VC-connectors that are attached to the variable gadget in any stable fractional matching. In particular, condition (1) implies that the weight assigned to the input edges of the VC-connectors that correspond to the two appearances of the positive literal $x$ in clauses must be $0$. To see why, observe that these input edges are incident to nodes $m^x_1$ and $m^x_2$, and the total weight of all edges incident to each of these nodes cannot exceed $1$. Condition (2) has a similar implication for the edges associated with the negative literal $\overline{x}$.

\begin{claim}\label{claim:vc-connector}
	Any stable fractional matching that assigns a weight of $0$ to the input edge of a VC-connector must assign a weight of $0$ to its output edge(s) as well.
\end{claim}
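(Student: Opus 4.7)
The plan is to do a case analysis on the type of VC-connector (positive vs.\ negative literal, combined with $\alpha\geq 2$ vs.\ $\alpha\in(3/2,2)$) and, in each case, to exploit the stability of a single ``key'' balanced edge inside the connector in order to pin its weight to exactly $1$. Once that is achieved, the matching-feasibility (degree) constraints at the endpoints of the key edge immediately force the output edges of the connector to be zero. The positive-literal case with $\alpha\geq 2$ is trivial since the input and output coincide. For a positive-literal connector with $\alpha\in(3/2,2)$, the agent $w^{x,c}$ has only two incident edges, namely the woman-heavy input and the internal balanced edge to $m^{x,c}$. Once the input weight is $0$, both $u_{m^{x,c}}(\mu)$ and $v_{w^{x,c}}(\mu)$ reduce to the weight of that balanced edge, so stability of the balanced pair (with $U=V=1$) pins it to $1$, and the degree constraint at $m^{x,c}$ then forces the output weight to $0$. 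The negative-literal case with $\alpha\geq 2$ is analogous: the man-heavy input contributes $0$ to $u_{m^{\overline{x},c}}$ and the woman-heavy output contributes $0$ as well, while $w^{\overline{x},c}$ has the balanced edge as its only incident edge, so the same argument applies.

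The most delicate case is the negative-literal connector with $\alpha\in(3/2,2)$, which I would handle by a three-step cascade along the balanced edges $(m^{\overline{x},c}_k,w^{\overline{x},c}_k)$ for $k=1,2,3$. The crucial structural fact that keeps the cascade going is that incident man-heavy edges contribute $0$ to the relevant woman's $V$-utility and woman-heavy edges contribute $0$ to the relevant man's $U$-utility. Consequently, whenever all non-balanced contributions to a given balanced pair have been zeroed by earlier steps, both sides of its stability inequality collapse to the balanced edge weight itself, pinning that edge to $1$. Starting from input weight $0$, stability of the first balanced pair forces $\mu(m^{\overline{x},c}_1,w^{\overline{x},c}_1)=1$; feasibility at $m^{\overline{x},c}_1$ and $w^{\overline{x},c}_1$ then zeros $\mu(m^{\overline{x},c}_1,w^{\overline{x},c}_2)$ and $\mu(m^{\overline{x},c}_3,w^{\overline{x},c}_1)$. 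This clears the way to pin $\mu(m^{\overline{x},c}_2,w^{\overline{x},c}_2)=1$, which simultaneously zeros the first output edge $\mu(m^{\overline{x},c}_2,w^c_j)$ and the weight $\mu(m^{\overline{x},c}_3,w^{\overline{x},c}_2)$. A third application of the same idea forces $\mu(m^{\overline{x},c}_3,w^{\overline{x},c}_3)=1$, and the degree constraint at $m^{\overline{x},c}_3$ then zeros the remaining output edge $\mu(m^{\overline{x},c}_3,w^c_j)$.

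No single step is conceptually hard; the main point to be careful about is the bookkeeping of which types of edges contribute where in each stability inequality, since the whole argument hinges on the asymmetry by which man-heavy and woman-heavy edges ``disappear'' from the stability constraint of the particular balanced pair being considered. Once that bookkeeping is in place, every step of the cascade is a direct application of \Cref{def:Stability} together with the degree constraints on $\mu$, so I do not anticipate any further obstacle beyond the careful setup.
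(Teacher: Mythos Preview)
Your proposal is correct and follows essentially the same approach as the paper's proof: a case analysis over the four connector types, with the positive-literal $\alpha\geq 2$ case trivial and each remaining case handled by pinning the internal balanced edge(s) to weight $1$ via stability (using that man-heavy and woman-heavy edges drop out of the relevant utility), then applying the degree constraint to zero the output(s). The three-step cascade you describe for the negative-literal $\alpha\in(3/2,2)$ connector matches the paper's argument step for step.
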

\begin{proof} 
For $\alpha\geq 2$, the claim holds trivially for VC-connectors corresponding to positive literals (\cref{fig:vc-connector}a).  Consider a VC-connector corresponding to a negative literal $\overline{x}$ and a clause $c$ containing it (\cref{fig:vc-connector}c). Observe that, besides the input edge $(m^{\overline{x},c},w^{\overline{x}}_i)$, the edge $(m^{\overline{x},c},w^{\overline{x},c})$ is the only balanced or man-heavy edge that is incident to man $m^{\overline{x},c}$ and the only balanced (or woman-heavy) edge incident to woman $w^{\overline{x},c}$. Hence, stability of the edge $(m^{\overline{x},c},w^{\overline{x},c})$ requires a weight of $1$ assigned to it when the weight assigned to input edge $(m^{\overline{x},c},w^{\overline{x}}_i)$ is $0$. Then, the output edge $(m^{\overline{x},c},w^{c}_j)$, which is also incident to the node $m^{\overline{x},c}$, must have a weight of $0$ as well.

We now consider the case $\alpha\in(3/2,2)$. First consider a VC-connector corresponding to a positive literal $x$ and a clause $c$ containing it (\cref{fig:vc-connector}b). The edge $(m^{x,c},w^{x,c})$ is the only balanced or man-heavy edge that is incident to man $m^{x,c}$ and, besides the input edge $(m^{x}_i,w^{x,c})$, the only balanced (or woman-heavy) edge incident to woman $w^{x,c}$. Hence, stability of edge $(m^{x,c},w^{x,c})$ requires a weight of $1$ assigned to it when the weight assigned to edge $(m^{x}_i,w^{x,c})$ is $0$. Then, the output edge $(m^{x,c},w^c_j)$, which is also incident to node $m^{x,c}$, must have a weight of $0$ as well.

Finally, consider a VC-connector corresponding to a negative literal $\overline{x}$ and a clause $c$ containing it (\cref{fig:vc-connector}d). Observe that $(m^{\overline{x},c}_1,w^{\overline{x},c}_1)$ is the only balanced or woman-heavy edge that is incident to woman $w^{\overline{x},c}_1$ and, besides the input edge $(m^{\overline{x},c}_1,w^{\overline{x}}_i)$, the only balanced or man-heavy edge incident to man $m^{\overline{x},c}_1$. Also,  $(m^{\overline{x},c}_2,w^{\overline{x},c}_2)$ is the only balanced or man-heavy edge that is incident to man $m^{\overline{x},c}_2$ and, besides edge $(m^{\overline{x},c}_1,w^{\overline{x},c}_2)$, the only balanced or woman-heavy edge to woman $w^{\overline{x},c}_2$. Furthermore, $(m^{\overline{x},c}_3,w^{\overline{x},c}_3)$ is the only balanced or man-heavy edge that is incident to woman $w^{\overline{x},c}_3$ and, besides edges $(m^{\overline{x},c}_3,w^{\overline{x},c}_1)$ and $(m^{\overline{x},c}_3,w^{\overline{x},c}_2)$, the only balanced or man-heavy edge incident to man $m^{\overline{x},c}_3$.

Hence, stability of edge $(m^{\overline{x},c}_1,w^{\overline{x},c}_1)$ requires a weight of $1$ assigned to it when the weight assigned to edge $(m^{\overline{x},c}_1,w^{\overline{x}}_i)$ is $0$. Then, edges $(m^{\overline{x},c}_1,w^{\overline{x},c}_2)$ and $(m^{\overline{x},c}_3,w^{\overline{x},c}_1)$ must have a weight of $0$. Then, stability of edge $(m^{\overline{x},c}_2,w^{\overline{x},c}_2)$ requires a weight of $1$ assigned to it and the edge  $(m^{\overline{x},c}_3,w^{\overline{x},c}_2)$ and the output edge $(m^{\overline{x},c}_2,w^c_j)$  must have a weight of $0$. Then, stability of edge $(m^{\overline{x},c}_3,w^{\overline{x},c}_3)$ requires a weight of $1$ assigned to it. Hence, the output edge $(m^{\overline{x},c}_3,w^{c}_j)$ must have a weight of $0$ as well.
\end{proof}

\begin{claim}\label{claim:clause}
	Any stable fractional matching that assigns a weight of $0$ to all output edges of the VC-connectors of clause $c$ must assign a weight of $0$ to the CA-connector of clause $c$ as well.	
\end{claim}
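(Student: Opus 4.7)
The plan is to argue by contradiction, assuming that some stable fractional matching $\mu$ assigns all-zero weight to the output edges of the VC-connectors incident to clause $c$, yet $\mu(m^c,w^c)>0$. The core idea is a simple double-counting argument over the $3\times 3$ sub-matching formed by the men $\{e^c_1,e^c_2,m^c\}$ and the women $\{w^c_1,w^c_2,w^c_3\}$ of the clause gadget.

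First I would observe that $(m^c,w^c)$ is woman-heavy, so it does not contribute to $u_{m^c}(\mu)$. Moreover, the only positive-valuation edges incident to $m^c$ are $(m^c,w^c_1)$, $(m^c,w^c_2)$, $(m^c,w^c_3)$, $(m^c,w^c)$, so feasibility of $m^c$ gives
\begin{align*}
u_{m^c}(\mu) \;=\; \sum_{j=1}^{3}\mu(m^c,w^c_j) \;\leq\; 1-\mu(m^c,w^c) \;<\; 1.
\end{align*}
Next, using the assumption that every output edge of a VC-connector of clause $c$ (which are exactly the woman-heavy edges ending at $w^c_1,w^c_2,w^c_3$) has weight $0$, the only edges that contribute to $v_{w^c_j}(\mu)$ are the three balanced edges $(e^c_1,w^c_j)$, $(e^c_2,w^c_j)$, $(m^c,w^c_j)$, so $v_{w^c_j}(\mu)=\mu(e^c_1,w^c_j)+\mu(e^c_2,w^c_j)+\mu(m^c,w^c_j)\leq 1$ by feasibility of $w^c_j$.

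Then I would invoke stability on each balanced pair $(m^c,w^c_j)$. Since $u_{m^c}(\mu)<1=U(m^c,w^c_j)$, stability forces $v_{w^c_j}(\mu)\geq V(m^c,w^c_j)=1$, and combined with the upper bound above this yields $v_{w^c_j}(\mu)=1$ for every $j\in\{1,2,3\}$. Summing over $j$ and swapping the order of summation gives
\begin{align*}
3 \;=\; \sum_{j=1}^{3} v_{w^c_j}(\mu) \;=\; u_{e^c_1}(\mu)+u_{e^c_2}(\mu)+u_{m^c}(\mu) \;\leq\; 1+1+(1-\mu(m^c,w^c)) \;<\; 3,
\end{align*}
where the second equality uses that $e^c_1$ and $e^c_2$ only have edges to $w^c_1,w^c_2,w^c_3$. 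This contradiction shows that $\mu(m^c,w^c)=0$, as required.

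I do not anticipate a serious obstacle; the only delicate step is checking that under the weight-zero hypothesis on the VC-connector outputs, no other edges contribute to $v_{w^c_j}(\mu)$ or to $u_{e^c_1}(\mu),u_{e^c_2}(\mu)$. This just amounts to reading off the incidences from Figure~\ref{fig:variable-and-clause-gadget}b and noting that $e^c_1$ and $e^c_2$ appear nowhere else in the construction.
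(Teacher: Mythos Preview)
Your proof is correct and follows essentially the same approach as the paper: assume $\mu(m^c,w^c)>0$, use stability of the balanced pairs $(m^c,w^c_j)$ to force $v_{w^c_j}(\mu)\geq 1$ for all $j$, and then derive a feasibility contradiction by double-counting the total weight on the nine balanced edges of the clause gadget. The only cosmetic difference is that you phrase the final contradiction via utilities ($u_{e^c_1}+u_{e^c_2}+u_{m^c}<3$) whereas the paper phrases it directly via edge weights; since all the relevant valuations equal~$1$, these are the same computation.
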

\begin{proof} 
Let $\ell_1$, $\ell_2$, and $\ell_3$ be the literals of clause $c$. Consider, for the sake of contradiction, a stable fractional matching that assigns (1) a weight of $0$ to all output edges of the VC-connectors corresponding to literals $\ell_i$ and clause $c$, and (2) strictly positive weight to edge $(m^c,w^c)$ of the CA-connector for clause $c$. Note that condition (2) implies that the total weight on the edges $(m^c,w^c_1)$, $(m^c,w^c_2)$ and $(m^c,w^c_3)$ is strictly smaller than $1$. Since these are the only balanced or man-heavy edges incident to man $m^c$, the stability of these edges is guaranteed by a utility of (at least) $1$ for each of the agents $w^c_1$, $w^c_2$, and $w^c_3$.

Note that besides the output edges of the VC-connectors, the edges $(e^c_1,w^c_i)$, $(e^c_2,w^c_i)$, and $(m^c,w^c_i)$ are the only balanced or woman-heavy edges incident to agent $w^c_i$ for all $i\in\{1,2,3\}$. Along with condition (1), this implies that the weight assigned to these three edges is at least $1$. Hence, the total weight on the nine edges of the clause gadget is at least $3$, i.e., strictly more than $2$ for the six edges incident to men $e^c_1$ and $e^c_2$, violating the definition of a fractional matching.
\end{proof}

\begin{claim}\label{claim:accum}
	Any stable fractional matching that assigns a weight of $0$ to some CA-connector must assign a total weight of $1$ to the tine edges and a weight of $1$ to every balanced edge of the accumulator.
\end{claim}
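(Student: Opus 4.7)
The plan is to start from the hypothesis $\mu(m^c, w^c) = 0$ and propagate constraints through the chain-like structure of the accumulator, using the stability of successive balanced edges. First I would show that the tine edges saturate $m_1$. Since $w^c$'s only non-CA-connector incident edge is the tine $(m_1, w^c)$, we have $v_{w^c}(\mu) = \mu(m_1, w^c)$. The tine is balanced, so its stability gives either $v_{w^c}(\mu) \geq 1$ or $u_{m_1}(\mu) \geq 1$. But $u_{m_1}(\mu) = \sum_{c'} \mu(m_1, w^{c'})$, because the only other incident edge of $m_1$ in the accumulator is the woman-heavy $(m_1, w_1)$. Both branches force $\sum_{c'} \mu(m_1, w^{c'}) \geq 1$, which combined with feasibility at $m_1$ yields $\sum_{c'} \mu(m_1, w^{c'}) = 1$ and, as a consequence, $\mu(m_1, w_1) = 0$.

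Next I would prove by induction on $i \in \{1, \dots, k-1\}$ that $\mu(e_i^1, w_i) = 1$, $\mu(m_{i+1}, f_{i+1}^2) = 1$, and $\mu(m_{i+1}, w_{i+1}) = 0$, starting from the base $\mu(m_1, w_1) = 0$. At each step I would invoke stability of the balanced edge $(e_i^1, w_i)$: the only incident edge of $e_i^1$ with positive $U$-valuation is $(e_i^1, w_i)$ itself (all its other incident edges are woman-heavy), and among $w_i$'s incident edges only $(m_i, w_i)$ and $(e_i^1, w_i)$ have positive $V$-valuation, with the former having weight zero by the inductive hypothesis. Stability then collapses to $\mu(e_i^1, w_i) \geq 1$, hence equality, and feasibility at $w_i$ zeros out the man-heavy edge $(m_{i+1}, w_i)$. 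Applying the same argument to the balanced edge $(m_{i+1}, f_{i+1}^2)$---whose only positive-valuation contributors on either side reduce to the edge itself once the man-heavy edges incident to $f_{i+1}^2$ and the zeroed-out $\mu(m_{i+1}, w_i)$ are accounted for---yields $\mu(m_{i+1}, f_{i+1}^2) = 1$, and feasibility at $m_{i+1}$ closes the step by forcing $\mu(m_{i+1}, w_{i+1}) = 0$.

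The case $\alpha \in (3/2, 2)$ is handled analogously on the corresponding accumulator; the additional woman-heavy edges $(e_i^1, w_{i+1})$ and the man-heavy edges $(m_i, f_{i-1}^2)$ that appear there do not disturb the induction because they contribute zero to the $u$- or $v$-quantities appearing in the stability check of each balanced edge, and the propagation automatically zeros them out via the feasibility deductions at each vertex. The main obstacle is purely bookkeeping: at each induction step one must carefully enumerate the edges incident to both endpoints of the balanced edge under consideration and verify that it is the sole contributor on both sides of the stability inequality, with every other potential contributor already set to zero either by the assumption on the CA-connector, the previous induction step, or feasibility. Once this enumeration is laid out cleanly for each of the two accumulator designs, both conclusions of the claim follow.
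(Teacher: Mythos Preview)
Your proposal is correct and follows essentially the same approach as the paper: first use stability of the tine edge $(m_1,w^{c})$ together with $\mu(m^c,w^c)=0$ to force the tine edges to saturate $m_1$ (hence $\mu(m_1,w_1)=0$), and then propagate inductively along the chain by alternately invoking stability of $(e^1_i,w_i)$ and $(m_{i+1},f^2_{i+1})$ and feasibility at the shared endpoints. The only cosmetic difference is that for the $\alpha\in(3/2,2)$ accumulator the paper explicitly carries the extra zero conditions $\mu(e^1_{i-1},w_i)=0$ and $\mu(m_{i+1},f^2_i)=0$ in the induction hypothesis, whereas you derive them on the fly from feasibility at $e^1_{i-1}$ and $f^2_i$ in the preceding step; both presentations are equivalent.
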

\begin{proof} 
	Assume that a weight of $0$ has been assigned to the edge $(m^{c'},w^{c'})$ of the CA-connector corresponding to some clause $c'$. Since this is the only woman-heavy edge that is incident to agent $w^{c'}$ and there is no man-heavy edge incident to agent $m_1$, stability on the edge $(m_1,w^{c'})$ requires that the total weight of the tine edges $(m_1,w^c)$ (for every clause $c$) is (at least) $1$. Hence, the weight of the edge $(m_1,w_1)$ is $0$. We will complete the proof by distinguishing between the two different accumulator structures, depending on whether $\alpha\geq 2$ or $\alpha\in (3/2,2)$.

When $\alpha\geq 2$, it suffices to show that for all $i \in \{1,\dots,k-1\}$, if the weight of edge $(m_i,w_i)$ is $0$, then the weight of the balanced edges $(e^1_i,w_i)$ and $(m_{i+1},f^2_{i+1})$ is $1$ and the weight of edge $(m_{i+1},w_{i+1})$ is $0$. Indeed, observe that, edge $(e^1_i,w_i)$ is the only balanced or man-heavy edge incident to man $e^1_i$ and, besides edge $(m_i,w_i)$, the only balanced or woman-heavy edge incident to woman $w_i$. Hence, the balanced edge $(e^1_i,w_i)$ must have a weight  of $1$ and the edge $(m_{i+1},w_i)$ a weight of $0$.

Then, edge $(m_{i+1},f^2_{i+1})$ is the only balanced or woman-heavy edge incident to woman $f^2_{i+1}$ and, besides edge $(m_{i+1},w_i)$, the only balanced or man-heavy edge incident to man $m_{i+1}$. Hence, the balanced edge $(m_{i+1},f^2_{i+1})$ must have a weight of $1$ and the edge $(m_{i+1},w_{i+1})$ a weight of $0$.

When $\alpha\in (3/2,2)$, it suffices to show that for all $i \in \{1,\dots,k-1\}$, if the weight of the edge $(m_i,w_i)$ (and, if they exist, the edges $(e^1_{i-1},w_i)$ and $(m_{i+1},f^2_i)$) is $0$, then the weight of the balanced edges $(e^1_i,w_i)$ and $(m_{i+1},f^2_{i+1})$ is $1$, and the weight of the edges $(m_{i+1},w_{i+1})$ and $(e^1_i,w_{i+1})$ (and, if it exists, the edge $(m_{i+2},f^2_{i+1})$) is $0$. Indeed, observe that the edge $(e^1_i,w_i)$ is the only balanced or man-heavy edge incident to man $e^1_i$. Furthermore, in addition to the edge $(m_i,w_i)$ (and, if it exists, the edge $(e^1_{i-1},w_i)$), it is the only balanced or woman-heavy edge incident to woman $w_i$. Hence, the balanced edge $(e^1_i,w_i)$ must have a weight of $1$, and the edge $(m_{i+1},w_i)$ (and, if it exists, the edge $(e^1_i,w_{i+1})$) must have a weight of $0$. Then, the edge $(m_{i+1},f^2_{i+1})$ is, besides $(m_{i+1},f^2_i)$ and $(m_{i+1},w_i)$, the only balanced or man-heavy edge incident to man $m_{i+1}$, and the only balanced or woman-heavy edge incident to woman $f^2_{i+1}$. Hence, the balanced edge $(m_{i+1},f^2_{i+1})$ must have a weight of $1$, and the edge $(m_{i+1},w_{i+1})$ (and, if it exists, the edge $(m_{i+2},f^2_{i+1})$) must have a weight of $0$.
\end{proof}

\subsection{Proof of inapproximability}\label{sec:proof}
\begin{lemma}\label{lem:unsat}
	If formula $\phi$ is not satisfiable, then any stable fractional matching of $\I$ has welfare at most $80\alpha N+4(k-1)$.
\end{lemma}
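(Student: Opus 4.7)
The plan is to cascade the four gadget claims (Claims~\ref{claim:variable}--\ref{claim:accum}) in the following way. From a stable fractional matching $\mu$, I will extract a Boolean assignment $\tau$ for $\phi$, use unsatisfiability to locate a clause that $\tau$ falsifies, propagate zero weights through the VC-, clause-, and CA-gadgets, and conclude that the accumulator is forced into a rigid ``flooded'' configuration whose contribution to welfare is only $4(k-1) + O(1)$. The remaining welfare will then be bounded crudely by counting non-accumulator agents.

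For each variable $x$, \Cref{claim:variable} guarantees that condition~(1) or condition~(2) holds under $\mu$. I set $\tau(x) = \textup{false}$ if condition~(1) holds, and $\tau(x) = \textup{true}$ otherwise (so condition~(2) must hold in that case). Under condition~(1), the weights on $m^x_1$ and $m^x_2$ are fully absorbed by the three balanced edges inside the variable gadget, so \emph{every} edge from $m^x_i$ to a VC-connector---in particular the input edge of the VC-connector of each positive appearance of $x$---has weight $0$ in $\mu$. Symmetrically, condition~(2) forces weight $0$ on the input edges of the VC-connectors for the negative appearances of $x$. Hence, whenever a literal evaluates to \textup{false} under $\tau$, the input of its VC-connector carries weight $0$. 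Since $\phi$ is unsatisfiable, some clause $c^*$ is falsified by $\tau$, so all three input edges of the VC-connectors attached to $c^*$ have weight $0$. \Cref{claim:vc-connector} propagates zero weight to their output edges; \Cref{claim:clause} yields $\mu(m^{c^*}, w^{c^*}) = 0$; and \Cref{claim:accum} finally forces the tine edges to carry total weight $1$ and every balanced edge of the accumulator to have weight $1$.

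In this flooded state, feasibility forces every remaining (man- or woman-heavy) accumulator edge to weight $0$, because its endpoints are already saturated by the balanced edges. A direct count then shows that the accumulator agents other than the $w^c$'s contribute at most $4(k-1) + 1$ to the welfare: one unit each from $m_1$, from $e^1_i$ and $w_i$ for $i \leq k-1$, and from $m_i$ and $f^2_i$ for $i \geq 2$. Outside the accumulator, every agent has utility at most $\alpha$, and the number of such agents is $O(N)$ ($9$ per variable gadget, $6$ per clause gadget, and at most $6$ per VC-connector in the hardest case $\alpha \in (3/2,2)$). The $w^c$'s themselves contribute $\sum_c v_{w^c} = \sum_c \mu(m_1, w^c) + \alpha \sum_c \mu(m^c, w^c) \leq 1 + \alpha L = O(\alpha N)$. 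Summing, the non-$4(k-1)$ portion of the welfare fits well within $80\alpha N$, yielding $\W(\mu) \leq 80\alpha N + 4(k-1)$.

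The hard part will be the first implication chain: translating \Cref{claim:variable} into a genuine Boolean assignment and tracing zero-weight propagation through all three types of connectors so as to trigger \Cref{claim:accum} at a falsified clause. Once the flooded configuration is established, the welfare bound is a routine counting argument with slack deliberately built into the target constant $80$.
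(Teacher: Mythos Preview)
Your proposal is correct and reaches the same conclusion as the paper, but the route to finding a zero-weight CA-connector is genuinely different. The paper argues by contradiction: it assumes every CA-connector has positive weight, and then, clause by clause, picks a literal whose VC-connector output is positive and sets that literal true; Claims~\ref{claim:vc-connector} and~\ref{claim:variable} are then invoked to show this clause-driven assignment is \emph{consistent}, yielding a satisfying assignment and a contradiction. You instead build the assignment variable by variable directly from the dichotomy in \Cref{claim:variable}, and then use unsatisfiability to locate a falsified clause and push zeros forward through Claims~\ref{claim:vc-connector} and~\ref{claim:clause}. Your approach is a bit more direct (no consistency check is needed, since the assignment is well-defined by construction), while the paper's contrapositive makes the ``satisfying assignment $\Leftrightarrow$ high welfare'' intuition slightly more explicit. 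The welfare accounting is also organized differently---the paper tallies edge values gadget by gadget, you sum agent utilities and handle the $w^c$'s separately---but both fit comfortably under the deliberately slack constant $80\alpha N$, and your observation that saturation by the balanced edges forces all heavy accumulator edges to weight~$0$ is exactly what the paper uses implicitly when it says the accumulator contributes $4k-2$.
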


\begin{proof} We will first show that if $\phi$ is not satisfiable, then any stable fractional matching of $\I$ assigns weight $0$ to some CA-connector. For the sake of contradiction, consider a stable fractional matching that assigns a strictly positive weight to all CA-connectors. We will construct a truth assignment for the formula $\phi$ (contradicting the assumption of the lemma) by repeating the following process for every clause $c$ of $\phi$: Let $\ell$ be a literal that appears in $c$ such that the output edge(s) of the VC-connector, that corresponds to the appearance of $\ell$ in $c$, have strictly positive total weight. By \cref{claim:clause}, such a literal must exist. We set $\ell$ to $1$ (true). For every variable that has not received a value in this way, we arbitrarily set it to $1$.
	
	The above assignment satisfies all the clauses. To show that it is also valid, we need to argue that there is no variable $x$ such that both literals $x$ and $\overline{x}$ have been set to $1$. Assume, to the contrary, that literal $x$ is set to $1$ due to its appearance in a clause $c_1$, and literal $\overline{x}$ is set to $1$ due to its appearance in a different clause $c_2$. Thus, in the above assignment, the output edge(s) of the VC-connector between the literal $x$ and the clause $c_1$, as well as the VC-connector between the literal $\overline{x}$ and the clause $c_2$ have strictly positive (total) weight. By \cref{claim:vc-connector}, the input edges of both VC-connectors also have strictly positive weight. Let $i_1, i_2 \in \{1,2\}$ be such that the $i_1$-th appearance of $x$ is in the clause $c_1$ and the $i_2$-th appearance of $\overline{x}$ is in the clause $c_2$. Therefore, the said input edges are incident to the nodes $m^x_{i_1}$ and $w^{\overline{x}}_{i_2}$. Using \cref{claim:variable}, we get that the total weight on the edges incident to one of $m^x_{i_1}$ or $w^{\overline{x}}_{i_2}$ exceeds $1$, contradicting feasibility. Thus, the above assignment must be valid, which, in turn, implies that any stable fractional matching assigns weight $0$ to some CA-connector.
	
By \cref{claim:accum}, the contribution of the accumulator to the welfare is exactly $4k-2$ ($2$ from the tine edges plus $2$ from each balanced edge). Let us now consider the contribution of the edges that do not belong to the accumulator. This comprises of
		\begin{itemize}
			\item a total value of $20$ for the ten balanced edges of each of the $N$ variable gadgets,
			\item a total value of $\alpha$ (respectively, $2+2\alpha$) for the edges of each of the $2N$ VC-connectors corresponding to a positive literal when $\alpha\geq 2$ (respectively, $\alpha\in(3/2,2)$),
			\item a total value of $2+2\alpha$ (respectively, $6+6\alpha$) for the edges of each of the $2N$ VC-connectors corresponding to a negative literal when $\alpha\geq 2$ (respectively, $\alpha\in(3/2,2)$),
			\item a total value of $18+\alpha$ for the nine balanced edges of each of the $4N/3$ clause gadgets and their corresponding CA-connectors.
		\end{itemize}	
		It can be easily seen that $80\alpha N-2$ is a (loose) upper bound on the total value from these edges.
\end{proof}

\begin{lemma}\label{lem:sat}
	If $\phi$ is satisfiable, then there exists a stable fractional matching of $\I$ with welfare at least $4(k-1)(\alpha-1/2)$.
\end{lemma}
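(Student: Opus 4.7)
The strategy is to construct, from a satisfying assignment $\tau$ of $\phi$, a stable fractional matching $\mu$ whose welfare comes overwhelmingly from the accumulator. For $\alpha\geq 2$, the accumulator is essentially a copy of the gap instance used in the proof of \Cref{thm:gap} with tine edges $(m_1,w^c)$ appended, so the cascade matching from that proof can achieve welfare close to $4k(\alpha-1/2)$ as long as every CA-connector can be activated; the role of $\tau$ is precisely to guarantee this activation, while every other gadget will contribute non-negative welfare.

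For each variable $x$, I would orient the variable gadget according to $\tau(x)$. When $\tau(x)=\text{true}$, set $\mu(e^x_1,f^x_1)=\mu(e^x_2,f^x_2)=1$, $\mu(e^x_3,w^{\overline{x}}_1)=\mu(e^x_3,w^{\overline{x}}_2)=1/2$, and $\mu(m^x_1,w^{\overline{x}}_1)=\mu(m^x_2,w^{\overline{x}}_2)=1/2$; this realizes condition~(2) of \Cref{claim:variable}, leaves slack $1/2\geq 1/\alpha$ on each $m^x_i$ for VC-connector flow, and makes every balanced edge of the gadget stable via $v_{w^{\overline{x}}_i}(\mu)=v_{f^x_i}(\mu)=1$ and $u_{e^x_j}(\mu)=1$. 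The mirror construction handles $\tau(x)=\text{false}$ via condition~(1). For each clause $c$, I would choose one satisfied literal $\ell^*$ (at position $j^*$) and push $1/\alpha$ units of flow through its VC-connector so that the output edge(s) incident to $w^c_{j^*}$ carry total weight $1/\alpha$; in the negative-literal case, the man-heavy input edge delivers $u_{m^{\overline{x},c}}(\mu)\geq 1$, so the internal balanced edge is stable with zero weight. Every other VC-connector would be set to its inactive configuration (internal balanced edge at weight $1$, all other edges at $0$), which is trivially stable by the argument in \Cref{claim:vc-connector}.

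I would complete each clause gadget by setting $\mu(e^c_1,w^c_{j_1})=\mu(e^c_2,w^c_{j_2})=1$ on the two unsatisfied positions, $\mu(m^c,w^c_{j^*})=1-1/\alpha$, and $\mu(m^c,w^c)=1/\alpha$ on the CA-connector; these choices yield $v_{w^c_j}(\mu)\geq 1$ for each $j\in\{1,2,3\}$ and $v_{w^c}(\mu)\geq 1$, so every balanced edge of the clause and the CA-connector is stable. For the accumulator I would use the cascade $\mu(m_i,w_i)=\mu(m_{i+1},w_i)=1/\alpha$, $\mu(e^2_i,w_i)=\mu(m_i,f^3_i)=1-2/\alpha$, $\mu(e^1_i,f^1_i)=\mu(e^3_i,f^2_i)=1$, together with $\mu(m_1,w^c)=(1-1/\alpha)/L$ on every tine edge (which exhausts $m_1$'s capacity). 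The tine edges are stable because $v_{w^c}(\mu)\geq \alpha\cdot 1/\alpha=1$ from the activated CA-connector, while the balanced edges $(e^1_i,w_i)$ and $(m_i,f^2_i)$ are stable via $v_{w_i}(\mu)\geq 1$ and $u_{m_i}(\mu)\geq 1$ respectively, exactly as in the proof of \Cref{thm:gap}.

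Summing the accumulator contributions gives $k + 2(k-1)(\alpha-2) + (k-1) + 2(k-1)\alpha + 2(1-1/\alpha) = 4(k-1)(\alpha-1/2) + 3 - 2/\alpha$, which strictly exceeds $4(k-1)(\alpha-1/2)$; since the variable gadgets, VC-connectors, and clause gadgets also contribute non-negative welfare, $\W(\mu)\geq 4(k-1)(\alpha-1/2)$. The main obstacle will be the case $\alpha\in(3/2,2)$, where the negative-literal VC-connectors contain three internal balanced edges (\Cref{fig:vc-connector}d) and the accumulator contains the additional edges $(m_i,f^2_{i-1})$ and $(e^1_i,w_{i+1})$ (\Cref{fig:acc}b). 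The same template still applies, but the variable-gadget weights must be re-tuned so that $m^x_i$ can absorb a $1/\alpha>1/2$ VC-connector flow while keeping the gadget stable, and a modified cascade must be chosen to exploit the richer accumulator graph; the stability verification becomes substantially more intricate, but the accumulator welfare should still simplify to the same $4(k-1)(\alpha-1/2)$ bound plus lower-order terms.
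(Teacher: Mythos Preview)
For $\alpha\geq 2$ your construction is a valid variant of the paper's: you route weight $1/\alpha$ (rather than $1/2$) through the active VC-connector, split $m^c$ between $w^c_{j^*}$ and $w^c$, and put positive weight on the tine edges, but the stability and welfare checks go through exactly as you indicate. The paper instead sets the CA-connector weight to $1$, all tine weights to $0$, and pushes $1/2$ through the VC-connector; either choice works.

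The gap is in the case $\alpha\in(3/2,2)$. Your proposed fix---``re-tune the variable gadget so that $m^x_i$ can absorb a $1/\alpha>1/2$ VC-connector flow''---is infeasible. When $\tau(x)$ is true, stability of the four edges $(m^x_i,w^{\overline{x}}_j)$ forces $v_{w^{\overline{x}}_j}(\mu)\geq 1$ for $j=1,2$, i.e.\ the balanced variable-gadget edges incident to each $w^{\overline{x}}_j$ must sum to at least $1$. Adding these two constraints and bounding by the capacities of $m^x_1$, $m^x_2$, $e^x_3$ gives $2\leq 2(1-1/\alpha)+1=3-2/\alpha$, which fails for $\alpha<2$. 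Hence if both positive appearances of $x$ are active (which cannot be avoided when $x$ is the unique true literal in two clauses), there is no way to free up $1/\alpha$ units at both $m^x_1$ and $m^x_2$ while keeping the gadget stable.

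The paper's resolution is not to re-tune the variable gadget at all: it keeps the input weight at $1/2$ and exploits the extra intermediate node of the $\alpha<2$ VC-connector to amplify this into output $1/\alpha$ at $w^c_{j^*}$ (for positive literals, $\mu(m^x_i,w^{x,c})=1/2$, $\mu(m^{x,c},w^{x,c})=1-\alpha/2$, $\mu(m^{x,c},w^c_j)=1/\alpha$; the balanced edge is stabilised by $v_{w^{x,c}}=\alpha\cdot\tfrac12+(1-\tfrac{\alpha}{2})=1$). The negative-literal VC-connector and the accumulator cascade for $\alpha\in(3/2,2)$ require similarly delicate explicit weights; your ``a modified cascade must be chosen'' does not supply them, and given that your guiding intuition for the VC side was off, this part cannot be left as a routine exercise.
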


\begin{proof} Starting from a satisfying assignment for $\phi$, we will construct a stable fractional matching $\mu$ in which the welfare of the accumulator gadget is at least $4(k-1)(\alpha-1/2)$.

\paragraph{Variable gadgets.}
For the edges of the variable gadget of the variable $x$, $\mu$ is defined as:
	 \begin{itemize}
	 	\item If $x$ is true, then $\mu(m^x_1,w^{\overline{x}}_1)=\mu(e^x_3,w^{\overline{x}}_1)=\mu(e^x_3,w^{\overline{x}}_2)=\mu(m^x_2,w^{\overline{x}}_2)=1/2$, $\mu(e^x_1,f^x_1)=\mu(e^x_2,f^x_2)=1$, and
	 	the remaining edges have weight $0$.
	 	\item If $x$ is false, then $\mu(e^x_3,w^{\overline{x}}_1)=\mu(e^x_3,w^{\overline{x}}_2)=1/2$, $\mu(m^x_1,f^x_1)=\mu(m^x_2,f^x_2)=1$, and the remaining edges have weight $0$.	 	
	 \end{itemize}
	
\paragraph{Clause gadgets and CA-connectors.} For each clause, select one of the true literals (tie-break arbitrarily) and call it {\em active}. Note that each clause has an active literal in a satisfying assignment. Consider the clause $c$, and let $\ell_i$ be its active literal for some $i\in \{1,2,3\}$. Also, let $i_1, i_2 \in \{1,2,3\} \setminus \{i\}$ denote the other two indices. Set $\mu(e^c_1,w^c_{i_1})=\mu(e^c_2,w^c_{i_2})=1$, and set the weight of the remaining balanced edges to $0$.
 Assign a weight of $1$ to the CA-connector, i.e., $\mu(m^c,w^c)=1$.
	
\paragraph{VC-connectors.} For every non-active VC-connector, set the weight of its balanced edges (if any) to $1$ and the weight of the remaining edges to $0$. For every active VC-connector corresponding to the $i$-th appearance of the positive literal $x$ as the $j$-th literal of clause $c$ ($i\in\{1,2\}$, $j\in \{1,2,3\}$), the weights of its edges are as follows:
	\begin{itemize}
		\item When $\alpha\geq 2$, we set $\mu(m^x_i,w^c_j)=1/2$.
		\item When $\alpha\in (3/2,2)$, we set $\mu(m^x_i,w^{x,c})=1/2$, $\mu(m^{x,c},w^{x,c})=1-\alpha/2$, and $\mu(m^{x,c},w^c_j)=1/\alpha$.
	\end{itemize}

	For every active VC-connector corresponding to the $i$-th appearance of the negative literal $x$ as the $j$-th literal of clause $c$ ($i\in\{1,2\}$, $j\in \{1,2,3\}$), the weights of its edges are as follows:
	\begin{itemize}
		\item When $\alpha\geq 2$, we set $\mu(m^{\overline{x},c},w^{\overline{x}}_i)=\mu(m^{\overline{x},c},w^c_j)=1/2$ and $\mu(m^{\overline{x},c},w^{\overline{x},c})=0$.
		\item When $\alpha\in (3/2,2)$, we set $\mu(m^{\overline{x},c}_1,w^{\overline{x}}_i)=1/2$, $\mu(m^{\overline{x},c}_1,w^{\overline{x},c}_1)=1-\alpha/2$, $\mu(m^{\overline{x},c}_1, w^{\overline{x},c}_2)=(\alpha-1)/2$, $\mu(m^{\overline{x},c}_2,w^{\overline{x},c}_2)=1-(\alpha^2-\alpha)/2$, $\mu(m^{\overline{x},c}_2, w^{c}_j)=2/\alpha-1$, $\mu(m^{\overline{x},c}_3,w^{\overline{x},c}_1)=1/\alpha$, $\mu(m^{\overline{x},c}_3,w^{\overline{x},c}_2)=\mu(m^{\overline{x},c}_3,w^{\overline{x},c}_3)=0$, $\mu(m^{\overline{x},c}_3,w^c_j)=1-1/\alpha$.
	\end{itemize}

\paragraph{Accumulator.} We set $\mu(m_1,w^c)=0$ for every tine edge $(m_1,w^c)$ of the accumulator. Furthermore:
	\begin{itemize}
		\item When $\alpha\geq 2$, we set $\mu(m_i,w_i)=1/\alpha$ for all $i \in \{1,\dots,k\}$, $\mu(e^2_i,w_i)=1-2/\alpha$, $\mu(m_{i+1},w_i)=1/\alpha$, $\mu(e^1_i,f^1_i)=1$, $\mu(e^1_i,w_i)=0$ for all $i \in \{1,\dots,k-1\}$, $\mu(m_i,f^2_i)=0$, $\mu(m_i,f^3_i)=1-2/\alpha$, and $\mu(e^3_i,f^2_i)=1$ for all $i \in \{2,\dots,k\}$. Among these, any edge with a positive weight is either man- or woman-heavy, and hence, its contribution to the social welfare is $\alpha$ times its weight. It can be verified that the total contribution is $4(k-1)(\alpha-1/2)+1$.

		\item When $\alpha\in (3/2,2)$, we set $\mu(m_1,w_1)=1/\alpha$, $\mu(m_2,w_2)=\alpha+1/\alpha-2$,  $\mu(m_i,w_i)=1-1/\alpha$ for all $i \in \{3,\dots,k\}$, $\mu(m_{i+1},w_i)=1-1/\alpha$ for all $i \in \{1,\dots,k-1\}$, $\mu(e^1_i,w_i)=0$ for all $i \in \{1,\dots,k-1\}$, $\mu(m_2,f^2_2)=2-\alpha$, $\mu(m_i,f^2_i)=0$ for all $i \in \{3,\dots,k\}$, $\mu(e^1_1,f^1_1)=\alpha-1$, $\mu(e^2_2,f^2_2)=\alpha-2/\alpha$, $\mu(e^2_k,f^2_k)=1$, $\mu(e^1_i,f^1_i)=2-2/\alpha$ for all $i \in \{2,\dots,k-1\}$, $\mu(e^2_i,f^2_i)=2-2/\alpha$ for all $i \in \{3,\dots,k-1\}$, $\mu(e^1_1,w_2)=2-\alpha$, $\mu(e^1_i,w_{i+1})=2/\alpha-1$ for all $i \in \{2,\dots,k-1\}$, and $\mu(m_{i+1},f^2_i)=2/\alpha-1$ for all $i \in \{2,\dots,k-1\}$. Except for the balanced edge $(m_2,f^2_2)$, every edge with a positive weight among the ones listed above is either man- or woman-heavy, and hence, its contribution to the social welfare is $\alpha$ times its weight. It can be verified that the total contribution in this case is $4(k-1)(\alpha-1/2)+2\alpha^2-7\alpha+7$.
	\end{itemize}
	In each case, the accumulator contributes at least $4(k-1)(\alpha-1/2)$ to the social welfare, as desired.

The feasibility of $\mu$ can be verified by inspection. To see why $\mu$ is stable, note that we only need to check for the balanced edges, as the man- or woman-heavy edges and the remaining pairs do not impose any constraints on stability. For the balanced edges, stability is established by the following series of observations (we will use the term `stabilized by' to denote that an agent's utility is at least $1$): The variable gadget for the variable $x$ (\cref{fig:variable-and-clause-gadget}a) is stabilized by the agents $f_1^x$, $f_2^x$, $e_3^x$ along with $m_1^x$, $m_2^x$ (if $x$ is true) or $w_1^{\overline{x}}$, $w_2^{\overline{x}}$ (if $x$ is false). The clause gadget for clause $c$ (\cref{fig:variable-and-clause-gadget}b) with active index $i$ (and non-active indices $i_1$ and $i_2$) is stabilized by the agents $e^c_1$, $e^c_2$, $w_i^c$, $w_{i_1}^c$, $w_{i_2}^c$; in particular, the edge $(m^c,w_i^c)$ is stabilized by $w_i^c$ because an active literal triggers the woman-heavy edge in the VC-connector. A VC connector is stabilized by $w^{x,c}$ (\cref{fig:vc-connector}b), $m^{\overline{x},c}$ (\cref{fig:vc-connector}c), or $m_1^{\overline{x},c}$, $w_2^{\overline{x},c}$, and $m_3^{\overline{x},c}$ (\cref{fig:vc-connector}d). Finally, the tine edges in the accumulator (\Cref{fig:acc}) are stabilized by $w^{c_1},\dots,w^{c_L}$ (because we trigger the CA-connector), and the remaining balanced edges are stabilized by $w_i$'s and $m_i$'s except for $m_1$. Overall, $\mu$ is a feasible stable fractional matching.
\end{proof}

We are ready to prove \cref{thm:strong-inapprox}. If $\alpha <N^{1+1/\delta}$, we use our construction with any $k$ satisfying $k-1\geq \frac{20\alpha N(\alpha-1/2-\delta)}{\delta}$. It is easy to verify that the reduction is polynomial-time. Furthermore, from \cref{lem:unsat}, we know that the welfare of $\mu$ when $\phi$ is not satisfiable is at most
$$80\alpha N+4(k-1) \leq \frac{4(k-1)\delta}{\alpha-1/2-\delta}+4(k-1)=\frac{4(k-1)(\alpha-1/2)}{\alpha-1/2-\delta}.$$
This number is at least $\alpha-1/2-\delta$ times smaller than the welfare of $\mu$ when $\phi$ is satisfiable (\cref{lem:sat}). This establishes the inapproximability bound in part (i) of \Cref{thm:strong-inapprox}.

If $\alpha\geq N^{1+1/\delta}$, we use our construction with $k=N^{1+1/\delta}$. Once again, the reduction is polynomial-time, and the instance $\I$ has $n=\Theta(N^{1+1/\delta})$ men and women. Observe that $\alpha = \Omega(n)$, $k= \Theta(n)$, and $N=\bigO(n^\delta)$. Hence, the welfare of $\mu$ when $\phi$ is not satisfiable is at most $$80\alpha N+4(k-1) \leq 80\alpha N + 4N^{1+1/\delta} \leq 84\alpha N = \bigO(\alpha n^{\delta}).$$
On the other hand, the welfare of $\mu$ when $\phi$ is satisfiable is at least $4(k-1)(\alpha-1/2)$, i.e., $\Omega(\alpha n)$. This establishes the bound in part (ii), and with it, completes the proof of \Cref{thm:strong-inapprox}. 

\section{Concluding Remarks}
We studied stable fractional matchings in a cardinal model and provided a number of computational and structural results. Going forward, it would be very interesting to resolve the complexity of \OptStab{} for the case when agents have \emph{strict} cardinal preferences (i.e., the no-ties case).\footnote{This problem has recently been resolved in the work of \cite{CRS20fractional}.} It would also be very interesting to see if stronger inapproximability results can be obtained for more general matching models, such as stable roommates \citep{I85efficient}. Another relevant direction is to consider a stronger solution concept with blocking coalitions of any size (also known as the \emph{core}), wherein the deviating agents can form a fractional matching among themselves to achieve a higher utility for each member of the coalition.

\section*{Acknowledgments.}
We are grateful to Elliot Anshelevich for bringing the work of \cite{DMS17computational} to our attention, to Argyrios Deligkas for sharing with us the full version of their paper \citep{DMS17computational}, to Haris Aziz for pointing us to the work of \cite{M13stability}, and to Jiehua Chen, Sanjukta Roy, and Manuel Sorge for helpful correspondence. Many thanks to the anonymous reviewers for their thoughtful comments and suggestions. 
Ioannis Caragiannis and Aris Filos-Ratsikas acknowledge partial support from COST Action CA15210. Rohit Vaish acknowledges support from ONR\#{}N00014-17-1-2621 while he was affiliated with Rensselaer Polytechnic Institute, and is currently supported by project no. RTI4001 of the Department of Atomic Energy, Government of India. Part of this work was done while Rohit Vaish was supported by the Prof. R Narasimhan postdoctoral award. 

\bibliography{References}
\bibliographystyle{plainnat}

\newpage{}
\appendix
\section*{APPENDIX}
\section{Comparison with other notions of stability}
\label{sec:Stability_Notions}

In this section, we will discuss other notions of stability of fractional matchings that have been studied for ordinal preferences. For a detailed overview of these and other notions, we refer the reader to the work of \cite{AK19random}.

Given an \SMC{} instance $\I =$ $\langle M, W, U, V \rangle$ with cardinal preferences, we can define an ordinal instance $\I^{\text{ord}} = \langle M, W, \succeq \rangle$ where $\succeq \coloneqq (\succeq_{m_1},\dots,\succeq_{m_n},\succeq_{w_1},\dots,\succeq_{w_n})$ is a preference profile consisting of the ordinal preferences of the agents specified as weak total orders. Specifically, for every $i \in \{1,\dots,n\}$, $\succeq_{m_i}$ and $\succeq_{w_i}$ are weak total orders over the sets $W$ and $M$, respectively, such that
$U(m_i,w) \geq U(m_i,w') \Leftrightarrow w \succeq_{m_i} w'$ and
$V(m,w_i) \geq V(m',w_i) \Leftrightarrow m \succeq_{w_i} m'$. We will write $w \succ_m w'$ if $w \succeq_m w'$ but not $w' \succeq_m w$. The relation $m \succ_w m'$ is analogously defined.

Below we will discuss three notions of stability for fractional matchings---\emph{strong stability}~\citep{RRV93stable}, \emph{fractional stability}~\citep{V89linear}, and \emph{ex-post stability}~\citep{RRV93stable}---that were originally formulated in the context of strict ordinal preferences, and were subsequently studied for weak preferences by \cite{AK19random}.

The first notion called \emph{strong stability} asserts that no pair of man and woman should positively engage with agents they like less than each other.

\begin{definition}[Strong stability~\citep{RRV93stable}]
\label{def:Strong_Stability}
A fractional matching $\mu$ is \emph{strongly stable} if there are no $m,m' \in M$ and $w,w' \in W$ such that $\mu(m,w')>0$, $\mu(m',w)>0$, $w \succ_m w'$, and $m \succ_w m'$.
\end{definition}

Next, recall that a fractional matching can be interpreted as a lottery (or a probability distribution) over integral matchings. Under this interpretation, \emph{ex-post stability} requires that every realization of this lottery should be stable.

\begin{definition}[Ex-post stability~\citep{RRV93stable}]
\label{def:Ex-post_Stability}
A fractional matching $\mu$ is \emph{ex-post stable} if it can be expressed as a convex combination of integral stable matchings.
\end{definition}

The third notion, \emph{fractional stability}, formalizes the idea that ``if man $m$ is matched with someone less preferred than woman $w$, then $w$ should be matched with someone more preferred than $m$'' as a linear constraint.

\begin{definition}[Fractional stability~\citep{V89linear}]
\label{def:Fractional_Stability}
A fractional matching $\mu$ is \emph{fractionally stable} if for every $(m,w) \in M \times W$,
$$\sum_{w' \in W: \, w' \succeq_{m} w} \mu(m,w') + \sum_{m' \in M: \, m' \succeq_{w} m} \mu(m',w) - \mu(m,w) \geq 1.$$
\end{definition}

Notice that an integral matching satisfies the \emph{ordinal} notions of strong (\Cref{def:Strong_Stability}), ex-post (\Cref{def:Ex-post_Stability}), or fractional (\Cref{def:Fractional_Stability}) stability in $\I^{\text{ord}}$ if and only if it satisfies the \emph{cardinal} notion of stability (\Cref{def:Stability}) in $\I$. Thus, the distinction between these notions is meaningful only for fractional matchings. The following observations describe the relationship between these notions (also refer to \Cref{fig:Stability_relations}). We recall that any integral or fractional matching is assumed to be complete unless stated otherwise.

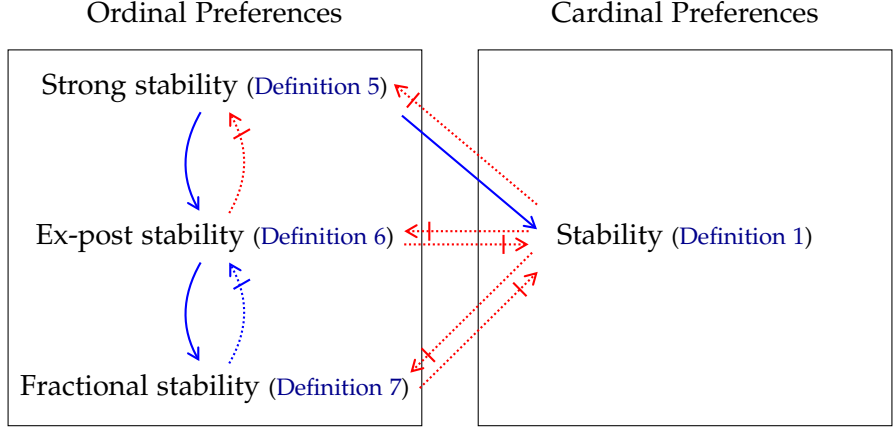
\begin{figure}
\begin{center}
\begin{tikzpicture}
\tikzstyle{myedge}=[->,>=angle 60, shorten >=1pt,draw]
%
%
%
%
\tikzset{
myedgecancel/.style={>=angle 60,densely dotted,draw,-{Bar[sep=4]>}}
}
	\draw[draw=black] (0.25,1.5) rectangle ++(5.5,5);
	\draw[draw=black] (6.5,1.5) rectangle ++(5.5,5);
	\node (strong) at (3,6) {Strong stability \footnotesize{(\Cref{def:Strong_Stability})}};
    \node (expost) at (3,4) {Ex-post stability \footnotesize{(\Cref{def:Ex-post_Stability})}};
	\node (fractional) at (3,2) {Fractional stability \footnotesize{(\Cref{def:Fractional_Stability})}};
	\node (stability) at (9.25,4) {Stability \footnotesize{(\Cref{def:Stability})}};
	\node (ordinal) at (3,7) {Ordinal Preferences};
	\node (cardinal) at (9.25,7) {Cardinal Preferences};
	\draw[myedge, thick, blue, bend right] (strong) to (expost);
	\draw[myedgecancel, thick, red, bend right] (expost) to (strong);
	\draw[myedge, thick, blue, bend right] (expost) to (fractional);
	\draw[myedgecancel, thick, blue, bend right] (fractional) to (expost);
	\draw[myedge, thick, blue, shorten <= 2pt, shorten >= 5pt] (strong.south east) to (stability.west);
	\draw[myedgecancel, thick, red, shorten <= 5pt, shorten >= -2pt] (stability.north west) to (strong.east);
	\draw[myedgecancel, thick, red, shorten <= 8pt, shorten >= -5pt] (stability.west) to (fractional.north east);
	\draw[myedgecancel, thick, red, shorten >= 5pt] (fractional.east) to (stability.south west);
	\draw[myedgecancel, thick, red, shorten <= 7pt] ([yshift=2.5pt]stability.west) to ([yshift=2.5pt]expost.east);
	\draw[myedgecancel, thick, red, shorten >= 7pt] ([yshift=-2.5pt]expost.east) to ([yshift=-2.5pt]stability.west);
\end{tikzpicture}
\end{center}	
\caption{Relationship between stability notions for fractional matchings under ordinal and cardinal preferences. A solid (respectively, dotted) arrow indicates that the implication holds (respectively, does not hold) in that direction. The implications for strict and weak preferences are denoted by red and blue arrows, respectively.}
\label{fig:Stability_relations}
\end{figure}

\begin{itemize}
	\item It is known that strong stability implies ex-post stability even under weak ordinal preferences, but the converse is not true even under strict ordinal preferences~\citep{RRV93stable,AK19random}. Additionally, for strict preferences, ex-post stability is known to be equivalent to fractional stability~\citep{V89linear,R92characterization,RRV93stable,TS98geometry}, but for weak preferences, the former is strictly stronger~\citep{AK19random}.
	\item A stable matching (\Cref{def:Stability}) may not be fractionally stable (\Cref{def:Fractional_Stability}) even under strict preferences. To see this, consider the fractional matching $\mu$ in \Cref{eg:Motivating_Example}, which was shown to be stable. For the pair $(m_1,w_3)$, we have
	$$\mu(m_1,w_3) + \sum_{w' : w' \>_{m_1} w_3} \mu(m_1,w') + \sum_{m' : m' \>_{w_3} \> m_1} \mu(m',w_3) = 0 + 0 + 0.5 < 1,$$
	implying that $\mu$ is not fractionally stable (and thus also not ex-post/strongly stable).
	
	\begin{remark}
		It is interesting to note the welfare implication of the above observation. Since fractionally stable matchings are equivalent to ex-post stable matchings under strict preferences, and since all integral stable matchings in \Cref{eg:Motivating_Example} have social welfare of $7$, it follows that the social welfare of any fractionally stable matching is also equal to $7$. On the other hand, there exists a stable matching $\mu$ with a strictly higher welfare. This illustrates that fractionally stable (and, in particular, strongly stable) matchings can be strictly suboptimal in the cardinal model, further justifying the need to study the computational aspects of optimal stable matchings.		\label{rem:Strong_Stability_Suboptimal}
		\end{remark}
	
	\item Given an \SMC{} instance $\I$, a strongly stable (\Cref{def:Strong_Stability}) matching in the corresponding ordinal instance $\I^{\text{ord}}$ can be shown to be stable (\Cref{def:Stability}) in the original instance $\I$. Indeed, if $\mu$ is strongly stable in $\I^{\text{ord}}$, then for any pair $(m,w) \in M \times W$, we have that either $\sum_{w' : w \succ_m w'} \mu(m,w') = 0$ or $\sum_{m' : m \succ_w m'} \mu(m',w) = 0$. Suppose, without loss of generality, that the former holds. Then, the utility of man $m$ in $\I$ is given by
	\begin{align*}
		u_m(\mu) &= U(m,w)\mu(m,w) + \sum_{w' \neq w \, : \, w' \succeq_m w} U(m,w')\mu(m,w')\\
		& \geq U(m,w) \left( \mu(m,w) + \sum_{w' \neq w \, : \, w' \succeq_m w} \mu(m,w') \right)\\
		& = U(m,w),
	\end{align*}
	as desired.
	\item As mentioned previously in \Cref{rem:Nonconvexity_Strict}, \cite{NN20study} have observed that the set of stable matchings is non-convex even under strict cardinal preferences. Their counterexample involves the convex combination of integral stable matchings, and therefore also establishes that an ex-post stable (and hence also fractionally stable) matching may not be stable.
\end{itemize}

\section{Non-convexity of strong stability may not imply non-convexity of stability}
\label{sec:Non-convexity_Implication}
In this section, we show that non-convexity of strong stability (\Cref{def:Strong_Stability}) might not imply the same for stability (\Cref{def:Stability}), even though, as observed in \Cref{sec:Stability_Notions}, the former is a strictly stronger notion. To this end, we revisit the counterexample used by \cite{AK19random} in establishing the non-convexity of the set of strongly stable matchings. Recall from \Cref{sec:Stability_Notions} that a fractional matching $\mu$ is strongly stable if there are no $m,m' \in M$ and $w,w' \in W$ such that $\mu(m,w')>0$, $\mu(m',w)>0$, $w \succ_m w'$, and $m \succ_w m'$.

The counterexample of \citet[Theorem 1]{AK19random}, which in turn is adapted from \citep[Example 2]{RRV93stable}, consists of three men $m_1,m_2,m_3$ and three women $w_1,w_2,w_3$ with the following ordinal preferences:
\begin{align*}
	m_1 : w_1 \succ w_2 \succ w_3 && w_1 : m_2 \succ m_3 \succ m_1\\
	m_2 : w_2 \succ w_3 \succ w_1 && w_2 : m_3 \succ m_1 \succ m_2\\
	m_3 : w_3 \succ w_1 \succ w_2 && w_3 : m_1 \succ m_2 \succ m_3.
\end{align*}
Consider the integral matchings $\mu^{(1)}$, $\mu^{(2)}$, and $\mu^{(3)}$ defined as follows:
\begin{align*}
	\mu^{(1)} \coloneqq \{(m_1,w_1),(m_2,w_2),(m_3,w_3)\},&\\
	\mu^{(2)} \coloneqq \{(m_1,w_3),(m_2,w_1),(m_3,w_2)\},& \text{ and }\\
	\mu^{(3)} \coloneqq \{(m_1,w_2),(m_2,w_3),(m_3,w_1)\}.&
\end{align*}
Notice that $\mu^{(1)}$, $\mu^{(2)}$, and $\mu^{(3)}$ are strongly stable (and therefore also stable). Also note that the fractional matching $\mu \coloneqq \frac{1}{3} \mu^{(1)} + \frac{1}{3} \mu^{(2)} + \frac{1}{3} \mu^{(3)}$ violates strong stability for the agents $m_1,m_2$ and $w_2,w_3$ since $\mu(m_1,w_3) > 0$, $\mu(m_2,w_2) > 0$, $w_2 \, \>_{m_1} \, w_3$ and $m_1 \, \>_{w_2} \, m_2$. That is, the convex combination of strongly stable integral matchings $\mu^{(1)}$, $\mu^{(2)}$, and $\mu^{(3)}$ is not strongly stable.

Let us now consider an \SMC{} instance $\I = \langle M,W,U,V \rangle$ for the same set of agents, i.e., $M = \{m_1,m_2,m_3\}$, $W = \{w_1,w_2,w_3\}$, and the following valuations:
$$U = \begin{bmatrix}
2 & 1 & 0\\
0 & 2 & 1\\
1 & 0 & 2
\end{bmatrix} \text{ and } V = \begin{bmatrix}
0 & 1 & 2\\
2 & 0 & 1\\
1 & 2 & 0
\end{bmatrix}.$$
Observe that the valuations in $\I$ are consistent with the aforementioned ordinal preferences. It is easy to see that the utility of each agent in $\mu$ is equal to $1$, and that $\mu$ is stable for $\I$.

\section[Hardness for approximate stability]{Hardness for $\eps$-stability}\label{app:eps-stability}
The proof of \cref{thm:strong-eps-inapprox} follows along very similar lines to the proof of \cref{thm:strong-inapprox}, again using a reduction from 2P2N-3SAT.

\subsection{The reduction}
Starting from an instance of 2P2N-3SAT, we first preprocess and augment it in the following way. For each variable of the original instance, we create a copy-variable and, for each clause of the original instance, we create a copy-clause that contains the copy-variables corresponding to the variables of the original clause. Each variable and its copy are {\em coupled variables} and, similarly, each clause and its copy are {\em coupled clauses}.

Let the modified input consist of $N$ (boolean) variables $x_1, x_2,\dots, x_N$, and a 3-CNF formula $\phi$ with $L=4N/3$ clauses $c_1$, $c_2$,\dots, $c_{4N/3}$. Note that if $\phi$ is not satisfiable, then, due to the instance augmentation, there exist at least two clauses that are not satisfied.

We now proceed to describe the instance $\I=\langle M,W,U,V \rangle$ of  \OptEpsStab\ whose graph representation consists of variable gadgets, clause gadgets, VC-connectors, an accumulator, and CA-connectors. We remark that $\I$ is not an instance with ternary valuations as we use valuations from the set $\{0,1,\beta,\gamma\}$. We set $\beta = 2(1-\eps)$, and observe that, when $\eps < 0.03$, we have that $(3\beta^2+4)\eps < 1/2$; the value of $\gamma$ will be set later. Again, we denote by $n$ the number of men (or women) in $\I$; the following reduction is such that $n= \bigO(N)$.

For each gadget, we classify the edges into the following three types:
\begin{itemize}
\item {\em balanced} edges $(m,w)$ with $U(m,w) = V(m,w) = 1$,
\item {\em man-heavy} edges $(m,w)$ with $U(m,w)>0$ and $V(m,w)=0$, and
\item {\em woman-heavy} edges $(m,w)$ with $U(m,w)=0$ and $V(m,w)>0$.
\end{itemize}
Any other pair $(m,w)$ that does not appear as an edge in the graph representation has $U(m,w)=V(m,w)=0$.

The instance $\I$ contains a variable gadget for every variable $x$, which consists of five men $m^{x}_1$, $m^{x}_2$, $e^x_1$, $e^x_2$, $e^x_3$, four women $w^{\overline{x}}_1$, $w^{\overline{x}}_2$, $f^x_1$, $f^x_2$ and the ten balanced edges $(e^x_1,f^x_1)$, $(m^x_1,f^x_1)$, $(m^x_1,w^{\overline{x}}_1)$, $(e^x_3,w^{\overline{x}}_1)$, $(e^x_3,w^{\overline{x}}_2)$, $(m^x_2,w^{\overline{x}}_2)$,  $(m^x_2,f^x_2)$, $(e^x_2,f^x_2)$, $(m^x_1,w^{\overline{x}}_2)$, and $(m^x_2,w^{\overline{x}}_1)$ (see \cref{fig:eps:variable-and-clause-gadget}a).

For every clause $c$, instance $\I$ has a clause gadget with three men $m^c$, $e^c_1$, $e^c_2$, three women $w^{c}_1$, $w^c_2$, $w^c_3$, and the nine balanced edges between them (see \cref{fig:eps:variable-and-clause-gadget}b).

\begin{figure}
	\footnotesize
	\centering
\subfloat[]{
	\begin{tikzpicture}[scale=0.75]
	\tikzset{man/.style = {shape=circle,draw,inner sep=1pt}}
	\tikzset{woman/.style = {shape=diamond,draw,inner sep=1pt}}
	\tikzset{edge/.style = {solid}}
	\tikzset{earedge/.style = {densely dotted}}
	\node[man]   (1) at (0,0)  {$e^x_1$};
	\node[woman] (2) at (0,2)  {$f^x_1$};
	\node[man]   (3) at (0,4)  {$m^x_1$};
	\node[woman] (4) at (2,4)  {$w^{\overline{x}}_1$};
	\node[man]   (5) at (4,4)  {$e^x_3$};
	\node[woman] (6) at (6,4)  {$w^{\overline{x}}_2$};
	\node[man]   (7) at (8,4)  {$m^x_2$};
	\node[woman] (8) at (8,2)  {$f^x_2$};
	\node[man]   (9) at (8,0)  {$e^x_2$};
	\draw[edge] (1) to node [near start,fill=white,inner sep=2pt] (122) {\scriptsize{$1$}} node [near end,fill=white,inner sep=2pt] (122) {\scriptsize{$1$}} (2);
	\draw[edge] (2) to node [near start,fill=white,inner sep=2pt] (223) {\scriptsize{$1$}} node [near end,fill=white,inner sep=2pt] (223) {\scriptsize{$1$}} (3);
	\draw[edge] (3) to node [near start,fill=white,inner sep=2pt] (324) {\scriptsize{$1$}} node [near end,fill=white,inner sep=2pt] (324) {\scriptsize{$1$}} (4);
	\draw[edge] (4) to node [near start,fill=white,inner sep=2pt] (425) {\scriptsize{$1$}} node [near end,fill=white,inner sep=2pt] (425) {\scriptsize{$1$}} (5);
	\draw[edge] (5) to node [near start,fill=white,inner sep=2pt] (526) {\scriptsize{$1$}} node [near end,fill=white,inner sep=2pt] (526) {\scriptsize{$1$}} (6);
	\draw[edge] (6) to node [near start,fill=white,inner sep=2pt] (627) {\scriptsize{$1$}} node [near end,fill=white,inner sep=2pt] (627) {\scriptsize{$1$}} (7);
	\draw[edge] (7) to node [near start,fill=white,inner sep=2pt] (728) {\scriptsize{$1$}} node [near end,fill=white,inner sep=2pt] (728) {\scriptsize{$1$}} (8);
	\draw[edge] (8) to node [near start,fill=white,inner sep=2pt] (829) {\scriptsize{$1$}} node [near end,fill=white,inner sep=2pt] (829) {\scriptsize{$1$}} (9);
	\draw[edge,bend left] (3) to node [near start,fill=white,inner sep=2pt] (326) {\scriptsize{$1$}} node [near end,fill=white,inner sep=2pt] (326) {\scriptsize{$1$}} (6);
	\draw[edge,bend right] (4) to node [near start,fill=white,inner sep=2pt] (427) {\scriptsize{$1$}} node [near end,fill=white,inner sep=2pt] (427) {\scriptsize{$1$}} (7);
	\end{tikzpicture}
}
\hspace{1in}
\subfloat[]{
	\begin{tikzpicture}[scale=0.75]
	\tikzset{man/.style = {shape=circle,draw,inner sep=1pt}}
	\tikzset{woman/.style = {shape=diamond,draw,inner sep=1pt}}
	\tikzset{edge/.style = {solid}}
	\tikzset{earedge/.style = {densely dotted}}
	\node[man]   (1) at (0,1)  {$e^c_2$};
	\node[man]   (2) at (0,3)  {$e^c_1$};
	\node[woman] (3) at (2,0)  {$w^c_3$};
	\node[woman] (4) at (2,2)  {$w^c_2$};
	\node[woman] (5) at (2,4)  {$w^c_1$};
	\node[man]   (6) at (4,2){$m^c$};
	\node[woman] (7) at (6,2)  {$w^c$};
	\draw[edge] (1) to node [near start,fill=white,inner sep=2pt] (123) {\scriptsize{$1$}} node [near end,fill=white,inner sep=2pt] (123) {\scriptsize{$1$}} (3);
	\draw[edge] (1) to node [near start,fill=white,inner sep=2pt] (124) {\scriptsize{$1$}} node [near end,fill=white,inner sep=2pt] (124) {\scriptsize{$1$}} (4);
	\draw[edge] (1) to node [very near start,fill=white,inner sep=2pt] (125) {\scriptsize{$1$}} node [very near end,fill=white,inner sep=2pt] (125) {\scriptsize{$1$}} (5);
	\draw[edge] (2) to node [very near start,fill=white,inner sep=2pt] (223) {\scriptsize{$1$}} node [very near end,fill=white,inner sep=2pt] (223) {\scriptsize{$1$}} (3);
	\draw[edge] (2) to node [near start,fill=white,inner sep=2pt] (224) {\scriptsize{$1$}} node [near end,fill=white,inner sep=2pt] (224) {\scriptsize{$1$}} (4);
	\draw[edge] (2) to node [near start,fill=white,inner sep=2pt] (225) {\scriptsize{$1$}} node [near end,fill=white,inner sep=2pt] (225) {\scriptsize{$1$}} (5);
	\draw[edge] (5) to node [near start,fill=white,inner sep=2pt] (526) {\scriptsize{$1$}} node [near end,fill=white,inner sep=2pt] (526) {\scriptsize{$1$}} (6);
	\draw[edge] (4) to node [near start,fill=white,inner sep=2pt] (426) {\scriptsize{$1$}} node [near end,fill=white,inner sep=2pt] (426) {\scriptsize{$1$}} (6);
	\draw[edge] (3) to node [near start,fill=white,inner sep=2pt] (326) {\scriptsize{$1$}} node [near end,fill=white,inner sep=2pt] (326) {\scriptsize{$1$}} (6);
	\draw[edge] (6) to node [near start,fill=white,inner sep=2pt] (627) {\scriptsize{$0$}} node [near end,fill=white,inner sep=2pt] (627) {\scriptsize{$1$}} (7);
	\end{tikzpicture}
}
\caption{(a) The variable gadget corresponding to the variable $x$. (b) The clause gadget corresponding to the clause $c$ and its CA-connector $(m^c,w^c)$.\label{fig:eps:variable-and-clause-gadget}}
\end{figure}
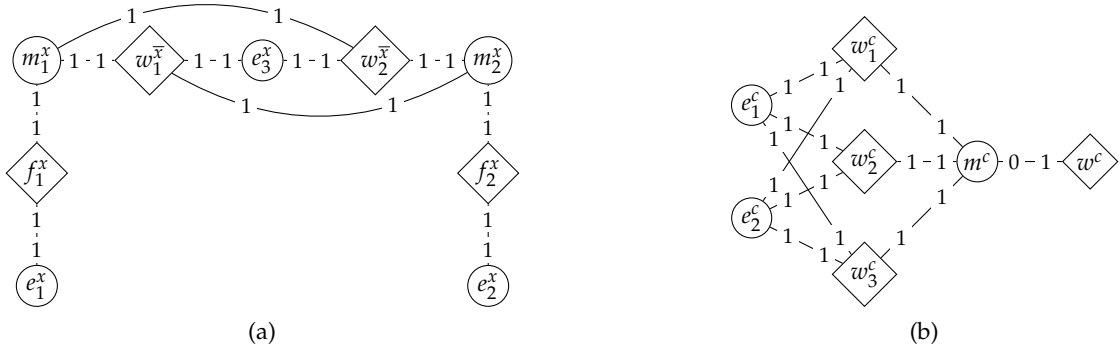

For every appearance of a literal in a clause, there is a variable-clause connector (or VC-connector) whose structure depends on whether it corresponds to a positive or a negative literal. In particular, for every positive literal $x$ whose $i$-th appearance ($i \in \{1,2\}$) is as the $j$-th literal ($j \in \{1,2,3\}$) of clause $c$, instance $\I$ has a VC-connector that consists of a single woman-heavy edge between $m^x_i$ (from the variable gadget corresponding to variable $x$) and $w^c_j$ (from the clause gadget corresponding to clause $c$) such that $U(m^x_i,w^c_j)=0$ and $V(m^x_i,w^c_j)=\beta$ (see \cref{fig:eps:vc-connector}a). This edge is simultaneously the input and output edge of the VC-connector.

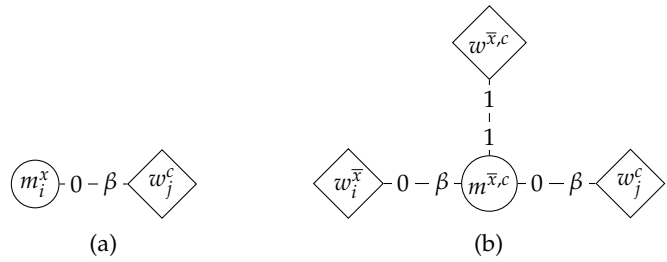
\begin{figure}
\footnotesize
\centering
\subfloat[]{
	\begin{tikzpicture}[scale=0.75]
	\tikzset{man/.style = {shape=circle,draw,inner sep=1pt}}
	\tikzset{woman/.style = {shape=diamond,draw,inner sep=1pt}}
	\tikzset{edge/.style = {solid}}
	\tikzset{earedge/.style = {densely dotted}}
	\node[man]   (1) at (0,0)  {$m^x_i$};
	\node[woman] (2) at (2.25,0){$w^c_j$};
	\draw[edge] (1) to node [near start,fill=white,inner sep=2pt] (122) {\footnotesize{$0$}} node [near end,fill=white,inner sep=2pt] (122) {\footnotesize{$\beta$}} (2);
	\end{tikzpicture}
}
\hspace{0.5in}
\subfloat[]{
	\begin{tikzpicture}[scale=0.75]
	\tikzset{man/.style = {shape=circle,draw,inner sep=1pt}}
	\tikzset{woman/.style = {shape=diamond,draw,inner sep=1pt}}
	\tikzset{edge/.style = {solid}}
	\tikzset{earedge/.style = {densely dotted}}
	\node[woman] (1) at (0,0) {$w^{\overline{x}}_i$};
	\node[man] (2) at (2.5,0) {$m^{\overline{x},c}$};
	\node[woman] (3) at (5,0)  {$w^c_j$};
	\node[woman] (4) at (2.5,2.5) {$w^{\overline{x},c}$};
	\draw[edge] (1) to node [near start,fill=white,inner sep=2pt] (122) {\footnotesize{$0$}} node [near end,fill=white,inner sep=2pt] (122) {\footnotesize{$\beta$}} (2);
	\draw[edge] (2) to node [near start,fill=white,inner sep=2pt] (223) {\footnotesize{$0$}} node [near end,fill=white,inner sep=2pt] (223) {\footnotesize{$\beta$}} (3);
	\draw[edge] (2) to node [near start,fill=white,inner sep=2pt] (224) {\footnotesize{$1$}} node [near end,fill=white,inner sep=2pt] (224) {\footnotesize{$1$}} (4);
	\end{tikzpicture}
}
\caption{VC-connectors corresponding to (a) clause $c$ and positive literal $x$, and (b) clause $c$ and negative literal $\overline{x}$.\label{fig:eps:vc-connector}}
\end{figure}

Similarly, for every negative literal $\overline{x}$ whose $i$-th appearance ($i \in \{1,2\}$) is as the $j$-th literal ($j \in \{1,2,3\}$) of clause $c$, instance $\I$ has a VC-connector that consists of man $m^{\overline{x},c}$, woman $w^{\overline{x},c}$, the man-heavy edge $(m^{\overline{x},c},w^{\overline{x}}_i)$ with $U(m^{\overline{x},c},w^{\overline{x}}_i)=\beta$ and $V(m^{\overline{x},c},w^{\overline{x}}_i)=0$, the balanced edge $(m^{\overline{x},c},w^{\overline{x},c})$, and the woman-heavy edge $(m^{\overline{x},c},w^c_j)$ with $U(m^{\overline{x},c},w^c_j)=0$ and $V(m^{\overline{x},c},w^c_j)=\beta$ (see \cref{fig:eps:vc-connector}b). Here, $(m^{\overline{x},c},w^{\overline{x}}_i)$ is the input and $(m^{\overline{x},c},w^c_j)$ is the output edge.

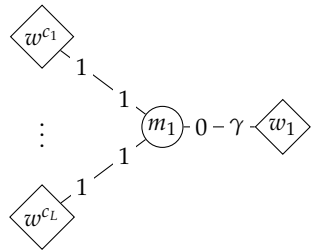
\begin{figure}
\footnotesize
\centering
%
	\begin{tikzpicture}[scale=0.8]
	\tikzset{man/.style = {shape=circle,draw,inner sep=1pt}}
	\tikzset{woman/.style = {shape=diamond,draw,inner sep=1pt}}
	\tikzset{edge/.style = {solid}}
	\tikzset{earedge/.style = {densely dotted}}
	\node[woman] (1) at (0,1.5)  {$w^{c_1}$};
	\node[woman] (2) at (0,-1.5){$w^{c_{L}}$};
	\node at (0,0) {$\vdots$};
	\node[man] (3) at (2,0)  {$m_1$};
	\node[woman] (4) at (4,0)  {$w_1$};
	\draw[edge] (1) to node [near start,fill=white,inner sep=2pt] (123) {\footnotesize{$1$}} node [near end,fill=white,inner sep=2pt] (123) {\footnotesize{$1$}} (3);
	\draw[edge] (2) to node [near start,fill=white,inner sep=2pt] (223) {\footnotesize{$1$}} node [near end,fill=white,inner sep=2pt] (223) {\footnotesize{$1$}} (3);
	\draw[edge] (3) to node [near start,fill=white,inner sep=2pt] (324) {\footnotesize{$0$}} node [near end,fill=white,inner sep=2pt] (324) {\footnotesize{$\gamma$}} (4);
	\end{tikzpicture}
%
	\caption{The accumulator.\label{fig:eps:acc}}
\end{figure}

The accumulator gadget consists of man $m_1$, woman $w_1$ and woman $w^c$ for every clause $c$ of $\phi$. In addition, it contains the balanced edges $(m_1,w^c)$ for every clause $c$, which we once again call \emph{tine} edges and the woman-heavy edge $(m_1,w_1)$ with $U(m_1,w_1)=0$ and $V(m_1,w_1)=\gamma$ (see \cref{fig:eps:acc}).

Finally, instance $\I$ contains a clause-accumulator connector (or CA-connector) for every clause $c$ consisting of the woman-heavy edge $(m^c,w^c)$ between the man $m^c$ (from the clause gadget corresponding to clause $c$) and woman $w^c$ (from the accumulator) so that $U(m^c,w^c) = 0$ and $V(m^c,w^c) =1$ (see \cref{fig:eps:variable-and-clause-gadget}b). Any other edge $(m,w)$ is such that $U(m,w) = V(m,w) = 0$.

Notice that the above construction has more women than men. To restore balance, we pad the instance with extra (isolated) men that neither value nor are valued by any other agent. 
This completes the construction of the reduced instance.

\subsection{Gadget properties}
We will now prove several important properties (Claims~\ref{claim:eps-variable}-\ref{claim:eps-accum}) of our construction.
\begin{claim}\label{claim:eps-variable}
	For every variable $x$, an $\eps$-stable fractional matching $\mu$ satisfies at least one of the following conditions:
	\begin{enumerate}
		\item[(1)] $\mu(m^x_1,w^{\overline{x}}_1)+\mu(m^x_1,w^{\overline{x}}_2)+\mu(m^x_1,f^x_1) \geq 1-\eps$ and $\mu(m^x_2,w^{\overline{x}}_1)+\mu(m^x_2,w^{\overline{x}}_2)+\mu(m^x_2,f^x_2) \geq 1-\eps$.
		\item[(2)] $\mu(m^x_1,w^{\overline{x}}_1)+\mu(m^x_2,w^{\overline{x}}_1)+\mu(e^x_3,w^{\overline{x}}_1) \geq 1-\eps$ and $\mu(m^x_1,w^{\overline{x}}_2)+\mu(m^x_2,w^{\overline{x}}_2)+\mu(e^x_3,w^{\overline{x}}_2)\geq 1-\eps$.
	\end{enumerate}
\end{claim}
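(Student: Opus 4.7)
The plan is to mirror the argument used for \cref{claim:variable}, adjusting the numerical thresholds from $1$ to $1-\eps$ in accordance with \cref{def:Approx-Stability}. The key observation is that every pair $(m^x_i, w^{\overline{x}}_j)$ with $i,j\in\{1,2\}$ is a balanced edge, so $U(m^x_i,w^{\overline{x}}_j) = V(m^x_i,w^{\overline{x}}_j) = 1$. For $\eps$-stability on such a pair, it suffices that \emph{either} $u_{m^x_i}(\mu) \geq 1-\eps$ \emph{or} $v_{w^{\overline{x}}_j}(\mu) \geq 1-\eps$ holds.

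I would first identify exactly which edges contribute to $u_{m^x_i}(\mu)$ and $v_{w^{\overline{x}}_j}(\mu)$. Inspecting the variable gadget (\Cref{fig:eps:variable-and-clause-gadget}a), the only balanced edges incident to $m^x_i$ are $(m^x_i, w^{\overline{x}}_1)$, $(m^x_i, w^{\overline{x}}_2)$, and $(m^x_i, f^x_i)$; moreover, the only other edges incident to $m^x_i$ in the entire instance are the woman-heavy input edges of VC-connectors corresponding to the positive literal $x$ (see \Cref{fig:eps:vc-connector}a), and these contribute $0$ to $m^x_i$'s utility. Hence $u_{m^x_i}(\mu) = \mu(m^x_i, w^{\overline{x}}_1) + \mu(m^x_i, w^{\overline{x}}_2) + \mu(m^x_i, f^x_i)$. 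A symmetric analysis on the woman side shows that $v_{w^{\overline{x}}_j}(\mu) = \mu(m^x_1, w^{\overline{x}}_j) + \mu(m^x_2, w^{\overline{x}}_j) + \mu(e^x_3, w^{\overline{x}}_j)$, since the only additional edges at $w^{\overline{x}}_j$ are man-heavy VC-connector inputs for the negative literal $\overline{x}$ (see \Cref{fig:eps:vc-connector}b), contributing $0$ to her utility.

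The proof then concludes by contradiction. Assume that both (1) and (2) fail. Failure of (1) yields some index $i \in \{1,2\}$ with $u_{m^x_i}(\mu) < 1-\eps$; failure of (2) yields some $j \in \{1,2\}$ with $v_{w^{\overline{x}}_j}(\mu) < 1-\eps$. For this particular pair $(m^x_i, w^{\overline{x}}_j)$, both $u_{m^x_i}(\mu) < (1-\eps)\cdot U(m^x_i,w^{\overline{x}}_j)$ and $v_{w^{\overline{x}}_j}(\mu) < (1-\eps)\cdot V(m^x_i,w^{\overline{x}}_j)$ hold, making it an $\eps$-blocking pair and contradicting the $\eps$-stability of $\mu$.

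I do not anticipate a serious obstacle: the claim is essentially a translation of \cref{claim:variable} to the approximate-stability setting, and the only subtle point is confirming that VC-connector edges contribute nothing to the relevant utilities, so the characterizations of $u_{m^x_i}(\mu)$ and $v_{w^{\overline{x}}_j}(\mu)$ are exactly the expressions appearing in conditions (1) and (2). Once that bookkeeping is done, the contradiction follows directly from \cref{def:Approx-Stability}.
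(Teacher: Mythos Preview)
Your proposal is correct and follows essentially the same approach as the paper: assume both conditions fail, obtain indices $i,j$ with $u_{m^x_i}(\mu)<1-\eps$ and $v_{w^{\overline{x}}_j}(\mu)<1-\eps$, and conclude that $(m^x_i,w^{\overline{x}}_j)$ is $\eps$-blocking. Your explicit bookkeeping that the VC-connector edges contribute nothing to these utilities is exactly the content of the paper's remark that ``instance $\I$ contains no man-heavy edge $(m^x_i,w)$ and no woman-heavy edge $(m,w^{\overline{x}}_j)$.''
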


\begin{proof}
	Assume otherwise that, for some $i,j \in \{1,2\}$, $\mu(m^x_i,w^{\overline{x}}_1)+\mu(m^x_i,w^{\overline{x}}_2)+\mu(m^x_i,f^x_i)<1-\eps$ and $\mu(m^x_1,w^{\overline{x}}_j)+\mu(m^x_2,w^{\overline{x}}_j)+\mu(e^x_3,w^{\overline{x}}_j)<1-\eps$. Then, since instance $\I$ contains no man-heavy edge $(m^x_i,w)$ and no woman-heavy edge $(m,w^{\overline{x}}_j)$, both $m^x_i$ and $w^{\overline{x}}_j$ have utility less than $1-\eps$ in $\mu$ and hence the pair $(m^x_i,w^{\overline{x}}_j)$ is $\eps$-blocking---a contradiction.
\end{proof}

As before, the two conditions in the statement of \cref{claim:eps-variable} affect the weight of the input edges of the VC-connectors that are attached to the variable gadget in any $\eps$-stable fractional matching. Thus, condition (1) implies that the weight assigned to each input edge of the VC-connectors that correspond to the two appearances of the positive literal $x$ in clauses must be at most $\eps$. To see why, observe that these input edges are incident to nodes $m^x_1$ and $m^x_2$, and the total weight of all edges incident to each of these nodes cannot exceed $1$. Similarly, condition (2) implies that the weight assigned to each input edge of the VC-connectors that correspond to the two appearances of the negative literal $\overline{x}$ in clauses must be at most $\eps$.

\begin{claim}\label{claim:eps-vc-connector}
	Any $\eps$-stable fractional matching that assigns a weight of at most $\eps$ to the input edge of a VC-connector must assign a weight of at most $\beta \eps$ to its output edge as well.
\end{claim}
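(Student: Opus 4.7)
The plan is to handle the two VC-connector structures (positive literal in Figure~\ref{fig:eps:vc-connector}a, negative literal in Figure~\ref{fig:eps:vc-connector}b) separately, and in each case exploit the fact that the ``internal'' agents of the gadget have very restricted neighborhoods so that $\eps$-stability of the balanced edge forces a large weight on it, which in turn squeezes out the output edge by feasibility.

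The positive literal case is essentially immediate: the input and output edges coincide, and since $\beta = 2(1-\eps) \geq 1$ for the range of $\eps$ we consider, the same $\leq \eps$ bound that holds for the input carries over as a $\leq \beta\eps$ bound for the output. The negative literal case is where the real content lies. I would look at the balanced edge $(m^{\overline{x},c}, w^{\overline{x},c})$ and apply the $\eps$-stability condition at this pair. First I would note that $w^{\overline{x},c}$ has only one neighbor with nonzero valuation, namely $m^{\overline{x},c}$, so $v_{w^{\overline{x},c}}(\mu) = \mu(m^{\overline{x},c}, w^{\overline{x},c})$; similarly $m^{\overline{x},c}$ has only three neighbors, giving $u_{m^{\overline{x},c}}(\mu) = \beta\,\mu(m^{\overline{x},c}, w^{\overline{x}}_i) + \mu(m^{\overline{x},c}, w^{\overline{x},c})$ (the output edge contributes nothing to the man's utility).

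The $\eps$-stability of $(m^{\overline{x},c}, w^{\overline{x},c})$ forces either $v_{w^{\overline{x},c}}(\mu) \geq 1-\eps$ or $u_{m^{\overline{x},c}}(\mu) \geq 1-\eps$. In the first subcase, $\mu(m^{\overline{x},c}, w^{\overline{x},c}) \geq 1-\eps$, and feasibility at $m^{\overline{x},c}$ then bounds the output edge weight by $1 - (1-\eps) = \eps \leq \beta\eps$. In the second subcase, writing $t \coloneqq \mu(m^{\overline{x},c}, w^{\overline{x}}_i) \leq \eps$ and combining $\beta t + \mu(m^{\overline{x},c}, w^{\overline{x},c}) \geq 1-\eps$ with feasibility $\mu(m^{\overline{x},c}, w^c_j) \leq 1 - t - \mu(m^{\overline{x},c}, w^{\overline{x},c})$ yields
\[
\mu(m^{\overline{x},c}, w^c_j) \;\leq\; 1 - t - (1 - \eps - \beta t) \;=\; \eps + (\beta - 1)t \;\leq\; \eps + (\beta - 1)\eps \;=\; \beta\eps,
\]
which is exactly the desired bound. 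Combining the two subcases completes the negative-literal case.

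The main (minor) obstacle is keeping the bookkeeping tight in the second subcase: the naive bound $\mu(m^{\overline{x},c}, w^{\overline{x},c}) \geq 1 - (1+\beta)\eps$ alone would give $(1+\beta)\eps$ instead of $\beta\eps$, and one only recovers $\beta\eps$ by using $t$ simultaneously on both sides of the feasibility inequality. Beyond that, the proof is a routine case split, mirroring the structure of \Cref{claim:vc-connector} from the exact stability reduction but with slack $\eps$ and scale $\beta$ propagated through.
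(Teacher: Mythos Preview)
Your proof is correct and essentially identical to the paper's. The only cosmetic difference is that you split the $\eps$-stability disjunction into two explicit subcases, whereas the paper observes in one line that either branch forces $\mu(m^{\overline{x},c},w^{\overline{x},c}) \geq 1-\eps-\beta\zeta$ (your second subcase subsumes your first, since $1-\eps \geq 1-\eps-\beta\zeta$), and then carries out exactly your feasibility computation $\eps+(\beta-1)\zeta \leq \beta\eps$.
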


\begin{proof}
	The claim holds trivially for VC-connectors corresponding to positive literals.  Consider a VC-connector corresponding to a negative literal $\overline{x}$ and a clause $c$ containing it, and let $\zeta\leq \eps$ be the weight assigned to the input edge $(m^{\overline{x},c},w^{\overline{x}}_i)$. Observe that, besides the input edge $(m^{\overline{x},c},w^{\overline{x}}_i)$, the edge $(m^{\overline{x},c},w^{\overline{x},c})$ is the only balanced or man-heavy edge that is incident to man $m^{\overline{x},c}$ and the only balanced (or woman-heavy) edge incident to woman $w^{\overline{x},c}$. Hence, $\eps$-stability of edge $(m^{\overline{x},c},w^{\overline{x},c})$ requires a weight of at least $1-\eps-\beta \zeta$ assigned to it. Then, the output edge, which is also incident to node $m^{\overline{x},c}$, must have a weight of at most $\eps+(\beta-1)\zeta \leq \beta \eps$.
\end{proof}

\begin{claim}\label{claim:eps-clause}
	Any $\eps$-stable fractional matching that assigns a weight of at most $\beta \eps$ to each output edge of the VC-connectors corresponding to clause $c$ must assign a weight of at most $3(\beta^2+1)\eps$ to the CA-connector of clause $c$ as well.	
\end{claim}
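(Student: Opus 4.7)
\textbf{Proof plan for Claim~\ref{claim:eps-clause}.} The strategy is to show that a large weight on the CA-connector would force each clause-woman $w^c_1, w^c_2, w^c_3$ to collect almost all of her utility from the clause-internal balanced edges, and then to derive a contradiction by summing these weights at the three men of the clause gadget.

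Let $\zeta := \mu(m^c, w^c)$. The trivial case $\zeta \leq \eps$ handles itself since $\beta \geq 1$ implies $\eps \leq 3(\beta^2+1)\eps$, so I will focus on $\zeta > \eps$. The key observation is that $m^c$ has no man-heavy edges at all: the only balanced or man-heavy pairs incident to $m^c$ are $(m^c, w^c_1), (m^c, w^c_2), (m^c, w^c_3)$, while $(m^c, w^c)$ is woman-heavy. Hence feasibility at $m^c$ gives
\[
u_{m^c}(\mu) \;=\; \sum_{j=1}^{3} \mu(m^c, w^c_j) \;\leq\; 1 - \zeta \;<\; 1-\eps.
\]
Consequently, the $\eps$-stability of each balanced pair $(m^c, w^c_j)$ (valuation $1$ on both sides) must be witnessed on the woman's side, i.e.\ $v_{w^c_j}(\mu) \geq 1-\eps$ for every $j \in \{1,2,3\}$.

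Next I will decompose $v_{w^c_j}(\mu)$ into its contributions. Besides the three internal balanced edges $(e^c_1, w^c_j), (e^c_2, w^c_j), (m^c, w^c_j)$, the only other edge giving positive valuation to $w^c_j$ is the unique output edge of the VC-connector attached to $w^c_j$; this is a woman-heavy edge of valuation $\beta$ on $w^c_j$'s side, and its weight is at most $\beta\eps$ by hypothesis. Therefore
\[
\mu(e^c_1, w^c_j) + \mu(e^c_2, w^c_j) + \mu(m^c, w^c_j) \;\geq\; (1-\eps) \;-\; \beta\cdot(\beta\eps) \;=\; 1 - (\beta^2+1)\eps.
\]

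Finally, I will sum this inequality over $j \in \{1,2,3\}$ and invoke the feasibility bounds $\sum_j \mu(e^c_1, w^c_j)\leq 1$, $\sum_j \mu(e^c_2, w^c_j)\leq 1$, and $\sum_j \mu(m^c, w^c_j) \leq 1-\zeta$. The left-hand side is then at most $3-\zeta$, so
\[
3 - \zeta \;\geq\; 3\bigl(1 - (\beta^2+1)\eps\bigr),
\]
which rearranges to $\zeta \leq 3(\beta^2+1)\eps$, as claimed. The only real obstacle is a careful verification that no edges outside the clause gadget and the VC-connector outputs contribute to the utilities $u_{m^c}$ and $v_{w^c_j}$, which amounts to reading off the construction in Section~\ref{sec:reduction} (and is the analogue of the corresponding check in the proof of Claim~\ref{claim:clause}).
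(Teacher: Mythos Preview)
Your proposal is correct and follows essentially the same route as the paper's proof: use the small utility of $m^c$ to force $v_{w^c_j}(\mu)\geq 1-\eps$ for each $j$, bound the VC-connector contribution to $v_{w^c_j}$ by $\beta^2\eps$, and then sum the resulting lower bounds against feasibility at the three clause-gadget men. The only cosmetic difference is that the paper argues by contradiction (assuming $\zeta>3(\beta^2+1)\eps$ and deriving total weight~$>2$ at $e^c_1,e^c_2$), whereas you keep $\zeta$ as a free parameter and read off the bound $\zeta\leq 3(\beta^2+1)\eps$ directly from the summed inequality; your version is arguably cleaner since it avoids the auxiliary check that $3-3(\beta^2+1)\eps>2$.
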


\begin{proof}
	Let $\ell_1$, $\ell_2$, and $\ell_3$ be the literals of clause $c$. Consider, for the sake of contradiction, an $\eps$-stable fractional matching $\mu$ that assigns (1) a weight of at most $\beta \eps$ to each output edge of the VC-connectors corresponding to literals $\ell_i$ and clause $c$, and (2) a weight of more than $3(\beta^2+1)\eps$ to edge $(m^c,w^c)$ of the CA-connector for clause $c$. Note that condition (2) implies that the total weight on the edges $(m^c,w^c_1)$, $(m^c,w^c_2)$ and $(m^c,w^c_3)$ is strictly smaller than $1-3(\beta^2+1)\eps$. Since these are the only balanced or man-heavy edges incident to man $m^c$, the $\eps$-stability of these edges is guaranteed by a utility of (at least) $1-\eps$ for each of the agents $w^c_1$, $w^c_2$, and $w^c_3$.

Note that besides the output edges of the VC-connectors, the edges $(e^c_1,w^c_i)$, $(e^c_2,w^c_i)$, and $(m^c,w^c_i)$ are the only balanced or woman-heavy edges incident to agent $w^c_i$ for all $i\in\{1,2,3\}$. Along with condition (1), this implies that the weight assigned to these three edges is at least $1-(\beta^2+1)\eps$. Hence, the total weight on the nine edges of the clause gadget is at least $3-3(\beta^2+1)\eps$, which, by the definition of $\beta$, is strictly more than $2$ for the six edges incident to men $e^c_1$ and $e^c_2$, violating the definition of a fractional matching.
\end{proof}

\begin{claim}\label{claim:eps-accum}
		Any $\eps$-stable fractional matching that assigns a weight of at most $3(\beta^2+1)\eps$ to at least two CA-connectors must assign a total weight of $1-\eps$ to the tine edges and a weight of at most $\eps$ to the edge $(m_1,w_1)$ of the accumulator.
\end{claim}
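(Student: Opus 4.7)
The plan is to leverage the very simple structure of the $\eps$-accumulator: the only edges incident to $m_1$ are the tine edges $(m_1,w^c)$ and the woman-heavy edge $(m_1,w_1)$, and the only edges incident to each $w^c$ are the tine edge $(m_1,w^c)$ and the CA-connector $(m^c,w^c)$. Writing $T \coloneqq \sum_c \mu(m_1,w^c)$ for the total tine weight and $Z \coloneqq \mu(m_1,w_1)$, feasibility at $m_1$ gives $T + Z \leq 1$, and since the tine edges are balanced while $(m_1,w_1)$ is woman-heavy with $U(m_1,w_1)=0$, we have the key identity $u_{m_1}(\mu) = T$. So it suffices to show $T \geq 1-\eps$; then $Z \leq \eps$ follows immediately.

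To force $T \geq 1-\eps$, I will use the hypothesis that at least two CA-connectors, say for clauses $c'$ and $c''$, have weight at most $3(\beta^2+1)\eps$. The utility of $w^{c'}$ is $v_{w^{c'}}(\mu) = \mu(m^{c'},w^{c'}) + \mu(m_1,w^{c'}) \leq 3(\beta^2+1)\eps + \mu(m_1,w^{c'})$, and analogously for $w^{c''}$. Now apply $\eps$-stability to the balanced pair $(m_1,w^{c'})$: either $u_{m_1}(\mu) \geq 1-\eps$, in which case we are done, or $v_{w^{c'}}(\mu) \geq 1-\eps$, which forces
\[
\mu(m_1,w^{c'}) \;\geq\; 1-\eps - 3(\beta^2+1)\eps \;=\; 1 - (3\beta^2+4)\eps \;>\; \tfrac{1}{2},
\]
using the assumption $(3\beta^2+4)\eps < 1/2$ stated at the start of the reduction.

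Running the same dichotomy for $(m_1,w^{c''})$, the only way to avoid $u_{m_1}(\mu) \geq 1-\eps$ is to have $\mu(m_1,w^{c''}) > 1/2$ as well. But then
\[
\mu(m_1,w^{c'}) + \mu(m_1,w^{c''}) \;>\; 1,
\]
contradicting the feasibility constraint $\sum_{w \in W}\mu(m_1,w) \leq 1$. Therefore we must be in the first branch, i.e., $T = u_{m_1}(\mu) \geq 1-\eps$, which is the desired lower bound on the tine-edge weight. Combined with $T+Z \leq 1$, this gives $Z \leq \eps$, completing the proof.

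The only potentially delicate point is the quantitative choice of the threshold $3(\beta^2+1)\eps$ in the statement: it must be small enough (relative to $\eps$ and the $1/2$ bound on $(3\beta^2+4)\eps$) to make the contradiction in the two-clause step go through. This is exactly where the preprocessing in the reduction that produced \emph{two} unsatisfied coupled clauses (rather than just one) is used: having two distinct clauses $c', c''$ both triggering ``large $\mu(m_1,w^c)$'' is essential, since one such clause alone cannot exceed the feasibility bound at $m_1$.
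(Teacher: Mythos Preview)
Your proof is correct and follows essentially the same approach as the paper's: both arguments use $\eps$-stability of the two tine edges $(m_1,w^{c'})$ and $(m_1,w^{c''})$, observe that satisfying the ``woman side'' of both would force $\mu(m_1,w^{c'}),\mu(m_1,w^{c''}) > 1/2$ (via the bound $(3\beta^2+4)\eps < 1/2$) and hence violate feasibility at $m_1$, so the ``man side'' $u_{m_1}(\mu)=T\geq 1-\eps$ must hold. Your write-up just makes the dichotomy and the arithmetic more explicit than the paper's terse version.
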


\begin{proof}
	Assume that a weight of at most $3(\beta^2+1)\eps$ has been assigned to the edges $(m^{c_1},w^{c_1})$ and $(m^{c_2},w^{c_2})$ of the CA-connectors corresponding to some clauses $c_1$ and $c_2$. Since these are the only edges for which agents $w^{c_1}$ and $w^{c_2}$ have positive value, and there is no man-heavy edge incident to agent $m_1$, stability on the edges $(m_1,w^{c_1})$ and $(m_1, w^{c_2})$ requires that the total weight of the tine edges $(m_1,w^c)$ (for every clause $c$) is (at least) $1-\eps$. Indeed, since by the definition of $\beta$ we have $(3\beta^2+4)\eps <1/2$, it is not possible to guarantee stability of the edges $(m_1,w^{c_1})$ and $(m_1, w^{c_2})$ by assigning weight at least $1-(3\beta^2+4)\eps$ to each of them so that both $w^{c_1}$ and $w^{c_2}$ have utility at least $1-\eps$. Hence, the weight of the edge $(m_1,w_1)$ of the accumulator is at most $\eps$.
\end{proof}

\subsection{Proof of inapproximability}
\begin{lemma}\label{lem:eps-unsat}
	If formula $\phi$ is not satisfiable, then any $\eps$-stable fractional matching of $\I$ has welfare at most $56\beta N+\gamma\eps$.
\end{lemma}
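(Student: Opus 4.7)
The plan is to mirror the structure of the proof of \cref{lem:unsat}, substituting the $\eps$-stable gadget claims (Claims~\ref{claim:eps-variable}--\ref{claim:eps-accum}) for their exact counterparts. The central task is to first show that, when $\phi$ is unsatisfiable, any $\eps$-stable fractional matching $\mu$ must assign weight at most $3(\beta^2+1)\eps$ to at least \emph{two} CA-connectors. Once this is established, \cref{claim:eps-accum} gives that the tine edges of the accumulator carry total weight at least $1-\eps$ and the edge $(m_1,w_1)$ has weight at most $\eps$, so the accumulator contributes at most $2 + \gamma\eps$ to $\W(\mu)$ (the tine edges are balanced with max contribution $2$, and $(m_1,w_1)$ contributes at most $\gamma\eps$).

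To prove the two-CA-connector bound, I would argue by contrapositive and construct a truth assignment for $\phi$. Suppose, for contradiction, that at least $L-1$ CA-connectors $(m^c,w^c)$ have weight strictly greater than $3(\beta^2+1)\eps$. For each such clause $c$, the contrapositive of \cref{claim:eps-clause} gives an output edge of a VC-connector of $c$ with weight exceeding $\beta\eps$; pick a literal $\ell$ whose VC-connector carries such an output, and declare $\ell$ true. By the contrapositive of \cref{claim:eps-vc-connector}, the corresponding input edge has weight exceeding $\eps$. Variables not reached in this process are set arbitrarily. I then verify the assignment is conflict-free: if both $x$ and $\overline{x}$ were set to true (via clauses $c_1,c_2$ and appearances $i_1,i_2$), then both input edges are incident to $m^x_{i_1}$ and $w^{\overline{x}}_{i_2}$ and carry weight exceeding $\eps$; neither condition of \cref{claim:eps-variable} can hold---condition~(1) would force the $m^x_{i_1}$-incident input edge to have weight at most $\eps$, while condition~(2) would do the same for the $w^{\overline{x}}_{i_2}$-incident input edge. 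The resulting valid assignment satisfies at least $L-1$ clauses of $\phi$, contradicting the fact that, by the coupling of clauses in the preprocessing step, any assignment must leave at least two clauses of $\phi$ unsatisfied when the underlying 2P2N-3SAT instance is unsatisfiable.

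The final step is to bound the welfare contribution of the edges outside the accumulator by summing loose per-gadget maxima: each of the $N$ variable gadgets contributes at most $20$ (its ten balanced edges); each of the $2N$ positive-literal VC-connectors contributes at most $\beta$; each of the $2N$ negative-literal VC-connectors contributes at most $2\beta+2$; each of the $L=4N/3$ clause gadgets contributes at most $18$; and each of the $L$ CA-connectors contributes at most $1$. Summing yields $(148/3)N + 6\beta N$, which is comfortably below $56\beta N - 2$ since $\beta = 2(1-\eps) \geq 1.94$ implies $50\beta N - (148/3)N \geq 2$ for any $N \geq 1$. Combining with the accumulator bound of $2 + \gamma\eps$ yields $\W(\mu) \leq 56\beta N + \gamma\eps$, as required. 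The main technical obstacle is the contrapositive chain propagating the ``large CA-connector weight'' backwards through VC-connectors to the variable gadgets; this requires carefully tracking the slack parameters ($3(\beta^2+1)\eps$, $\beta\eps$, and $\eps$ at successive stages), a chain that closes precisely because the assumption $\eps \leq 0.03$ guarantees $(3\beta^2+4)\eps < 1/2$ (which is already built into \cref{claim:eps-clause} and \cref{claim:eps-accum}).
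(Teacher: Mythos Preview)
Your proposal is correct and follows essentially the same route as the paper: propagate large CA-connector weight backward through \cref{claim:eps-clause} and \cref{claim:eps-vc-connector} to obtain input edges of weight exceeding~$\eps$, derive a consistent partial assignment via \cref{claim:eps-variable}, and then bound the accumulator and non-accumulator contributions separately. The one substantive difference is in how you exploit the clause coupling: the paper builds an assignment satisfying all clauses except possibly one and then argues that this last clause can be repaired using its coupled partner, whereas you invoke the coupling upfront to observe that unsatisfiability of the original 2P2N-3SAT instance forces every assignment to the augmented~$\phi$ to leave at least two clauses unsatisfied, so an assignment with at most one unsatisfied clause is already a contradiction---this sidesteps any need to check that the repair step does not disturb previously satisfied clauses, and is arguably the cleaner way to close the argument.
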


\begin{proof}
	We will first show that if $\phi$ is not satisfiable, then any $\eps$-stable fractional matching of $\I$ assigns weight at most $3(\beta^2+1)\eps$ to at least two CA-connectors. For the sake of contradiction, consider an $\eps$-stable fractional matching that assigns weight at most $3(\beta^2+1)\eps$ to at most one CA-connector; let $c$ be the relevant clause, if such a clause exists. We will construct a truth assignment for formula $\phi$ (contradicting the assumption of the lemma) by repeating the following process for every clause $c'\neq c$ of $\phi$. Let $\ell$ be a literal that appears in $c'$ such that the output edge of the VC-connector, that corresponds to the appearance of $\ell$ in $c'$, has weight greater than $\beta \eps$. Recall that such a literal exists by \cref{claim:eps-clause}. We set $\ell$ to $1$ (true). For every variable that has not received a value in this way, we arbitrarily set it to $1$.
	
	The above assignment satisfies all clauses except possibly for clause $c$. Since clause $c$ is coupled with another clause $\hat{c}$ that is satisfied, it suffices to assign one of the variables appearing in $c$ the same value as its corresponding coupled variable appearing in $\hat{c}$. To show that the assignment is also valid, we need to argue that there is no variable $x$ such that both literals $x$ and $\overline{x}$ have been set to $1$. Assume otherwise that this is the case. Furthermore, assume that literal $x$ was set to $1$ due to its appearance in a clause $c_1$, and that this is its $i_1$-th appearance (with $i_1\in\{1,2\}$). Also, literal $\overline{x}$ was set to $1$ due to its appearance in a different clause $c_2$, where $\overline{x}$ makes its $i_2$-th appearance (again, $i_2\in\{1,2\}$). Hence, the output edge of the VC-connector that corresponds to literal $x$ and clause $c_1$ (respectively, the VC-connector that corresponds to literal $\overline{x}$ and clause $c_2$) has weight greater than $\beta \eps$. Then, by \cref{claim:eps-vc-connector}, the input edges of both VC-connectors have weight greater than $\eps$. As these input edges are incident to nodes $m^x_{i_1}$ and $w^{\overline{x}}_{i_2}$, \cref{claim:eps-variable} yields that the total weight in the edges incident to some of the nodes $m^x_{i_1}$ and $w^{\overline{x}}_{i_2}$ is strictly higher than $1$, contradicting the definition of a fractional matching.
	
	Since any $\eps$-stable fractional matching assigns a weight of at most $3(\beta^2+1)\eps$ to at least two CA-connectors, by \cref{claim:eps-accum}, the contribution of the accumulator to the welfare is at most $2(1-\eps)+\gamma \eps$ ($2(1-\eps)$ from the tine edges plus $\gamma \eps$ from the accumulator). The upper bound follows by considering the sum of valuations of all agents for edges that do not belong to the accumulator. This sum consists of
		\begin{itemize}
			\item total value of $20$ for the ten balanced edges of each of the $N$ variable gadgets,
			\item total value of $\beta$ for the edges of each of the $2N$ VC-connectors corresponding to a positive literal,
			\item total value of $2+2\beta$ for the edges of each of the $2N$ VC-connectors corresponding to a negative literal,
			\item total value of $19$ for the nine balanced edges of each of the $4N/3$ clause gadgets and their corresponding CA-connectors.
		\end{itemize}	
		It can be easily seen that $56\beta N-2(1-\eps)$ is a (loose) upper bound on the total value from these edges.
\end{proof}

\begin{lemma}\label{lem:eps-sat}
	If $\phi$ is satisfiable, then there exists an $\eps$-stable fractional matching on $\I$ that has welfare at least $\gamma$.
\end{lemma}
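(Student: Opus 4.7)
The plan is to reuse the fractional-matching construction from Lemma \ref{lem:sat}, with one key simplification made possible by the much simpler accumulator in this reduction: since the accumulator here consists only of $m_1$, $w_1$, the women $w^c$, the tine edges, and the woman-heavy edge $(m_1,w_1)$ of value $\gamma$, I will set $\mu(m_1,w_1)=1$ and all tine edges to zero. This already contributes $\gamma$ to the social welfare, matching the lemma's bound exactly; all other gadget contributions are non-negative, so $\W(\mu)\geq \gamma$ will follow immediately.

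Starting from a satisfying assignment of $\phi$ and selecting one true literal per clause as \emph{active}, I would define $\mu$ on the remaining gadgets essentially as in Lemma \ref{lem:sat}: the variable gadget of $x$ uses the same $1/2$-weight pattern depending on whether $x$ is true or false; each clause gadget saturates its two non-active positions through $e^c_1$ and $e^c_2$ with weight $1$ each; every CA-connector is assigned weight $\mu(m^c,w^c)=1$; every non-active VC-connector places weight $1$ on its balanced edge (present only for negative literals); and every active VC-connector propagates weight $1/2$ along its input and output edges. Feasibility of $\mu$ follows directly by inspecting each agent's incident-weight sum.

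The substantive step is the verification of $\eps$-stability. Since man-heavy and woman-heavy edges impose no stability constraint (the threshold $(1-\eps)\cdot 0=0$ is trivial), only the balanced edges matter, and for each such edge of weight zero it suffices to exhibit an endpoint with utility at least $1-\eps$. Most endpoints are in fact saturated at utility $1$, exactly as in Lemma \ref{lem:sat}: $e^x_1,e^x_2,e^x_3,f^x_1,f^x_2$ together with the appropriate side of each variable gadget; $e^c_1$ and $e^c_2$ inside clause gadgets; and the accumulator women $w^c$, which in turn stabilize the tine edges. The one place where the $\eps$-slack is genuinely used is at the clause edges $(m^c,w^c_j)$ for the active index $j$: with $\beta=2(1-\eps)$ and VC-connector output weight $1/2$, one gets $v_{w^c_j}=\beta/2=1-\eps$, which meets the $\eps$-stability threshold with equality; the analogous computation $u_{m^{\overline{x},c}}=\beta/2=1-\eps$ stabilizes the balanced edge of an active negative VC-connector.

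The main obstacle I expect is the tight interaction between active VC-connectors and the variable gadget: when a true variable has both of its positive appearances active, each $m^x_i$ must spend exactly $1/2$ of its capacity inside the variable gadget and $1/2$ on its VC-connector output, with no slack remaining. This is precisely the balance for which the construction of Lemma \ref{lem:sat} was engineered; inheriting that weight pattern, and observing that the $\eps$-stability thresholds at the ``tight'' edges land exactly at $1-\eps$ under our choice $\beta=2(1-\eps)$, is what makes the argument go through while leaving $m_1$ entirely free to devote its full unit of capacity to $(m_1,w_1)$.
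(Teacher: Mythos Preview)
Your proposal is correct and follows essentially the same approach as the paper's proof: you build $\mu$ from a satisfying assignment exactly as in Lemma~\ref{lem:sat}, set $\mu(m_1,w_1)=1$ in the simplified accumulator to secure welfare $\gamma$, and verify $\eps$-stability via the same agent-by-agent check, with the tight cases $v_{w^c_j}=u_{m^{\overline{x},c}}=\beta/2=1-\eps$ being precisely where the choice $\beta=2(1-\eps)$ is used.
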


\begin{proof}
Consider an assignment of boolean values to the variables that satisfies $\phi$. We construct an $\eps$-stable fractional matching $\mu$ in $\I$ so that the contribution of the accumulator gadget to the welfare is at least $\gamma$.
	
\paragraph{Variable gadgets.} The weights on the edges of the variable gadget corresponding to variable $x$ are:
	 \begin{itemize}
	 	\item $\mu(m^x_1,w^{\overline{x}}_1)=\mu(e^x_3,w^{\overline{x}}_1)=\mu(e^x_3,w^{\overline{x}}_2)=\mu(m^x_2,w^{\overline{x}}_2)=1/2$, $\mu(e^x_1,f^x_1)=\mu(e^x_2,f^x_2)=1$, and $\mu(m^x_1,f^x_1)=\mu(m^x_1,w^{\overline{x}}_2)=\mu(m^x_2,f^x_2)=\mu(m^x_2,w^{\overline{x}}_1)=0$ if $x$ is true, and
	 	\item $\mu(e^x_3,w^{\overline{x}}_1)=\mu(e^x_3,w^{\overline{x}}_2)=1/2$, $\mu(m^x_1,f^x_1)=\mu(m^x_2,f^x_2)=1$, $\mu(m^x_1,w^{\overline{x}}_1)=\mu(m^x_2,w^{\overline{x}}_2)=\mu(m^x_1,w^{\overline{x}}_2)=\mu(m^x_2,w^{\overline{x}}_1)=\mu(e^x_1,f^x_1)=\mu(e^x_2,f^x_2)=0$ if $x$ is false.
	 \end{itemize}
	
\paragraph{Clause gadgets and CA-connectors.} For every clause we select arbitrarily one of the true literals of the clause and call it {\em active}; since the assignment satisfies $\phi$, there is certainly such a literal. Consider clause $c$ and let $\ell_i$ (with $i\in \{1,2,3\}$) be its active literal; let $i_1$ and $i_2$ be the indices from $\{1,2,3\}$ than are different than $i$. We set $\mu(e^c_1,w^c_{i_1})=\mu(e^c_2,w^c_{i_2})=1$ and $\mu(e^c_1,w^c_i)=\mu(e^c_1,w^c_{i_2})=\mu(e^c_2,w^c_i)=\mu(e^c_2,w^c_{i_1})=\mu(m^c,w^c_i)=\mu(m^c,w^c_{i_1})=\mu(m^c,w^c_{i_2})=0$. We also assign a weight of $1$ to the CA-connector corresponding to $c$, i.e., $\mu(m^c,w^c)=1$.
	
\paragraph{VC-connectors.} For every non-active VC-connector, we set the weight of its balanced edge, if it exists, to $1$ and the weight of the remaining edges to $0$. For every active VC-connector corresponding to the $i$-th appearance of the positive literal $x$ as the $j$-th literal of clause $c$ ($i\in\{1,2\}$, $j\in \{1,2,3\}$), we set $\mu(m^x_i,w^c_j)=1/2$.

	For every active VC-connector corresponding to the $i$-th appearance of the negative literal $x$ as the $j$-th literal of clause $c$ ($i\in\{1,2\}$, $j\in \{1,2,3\}$), we set the weights of its edges as follows: $\mu(m^{\overline{x},c},w^{\overline{x}}_i)=\mu(m^{\overline{x},c},w^c_j)=1/2$ and $\mu(m^{\overline{x},c},w^{\overline{x},c})=0$.

\paragraph{Accumulator.} We set $\mu(m_1,w^c)=0$ for every tine edge $(m_1,w^c)$ of the accumulator. Furthermore, we set $\mu(m_1,w_1)=1$. So, the contribution of the accumulator to the social welfare is $\gamma$, as desired.
	
It can be easily verified that the total weight of the edges that are incident to any node is at most $1$. Hence, $\mu$ is a valid fractional matching. Regarding stability, it suffices to verify that either the man or the woman of a balanced pair has a utility of at least $1-\eps$. Note that we only need to check for the balanced edges, as the man- or woman-heavy edges and the remaining pairs do not impose any constraints on stability. For the balanced edges, stability is established by the following series of observations (we will use the term `stabilized by' to denote that an agent's utility is at least $1-\eps$): The variable gadget for the variable $x$ (\cref{fig:eps:variable-and-clause-gadget}a) is stabilized by the agents $f_1^x$, $f_2^x$, $e_3^x$ along with $m_1^x$, $m_2^x$ (if $x$ is true) or $w_1^{\overline{x}}$, $w_2^{\overline{x}}$ (if $x$ is false). The clause gadget for clause $c$ (\cref{fig:eps:variable-and-clause-gadget}b) with active index $i$ (and non-active indices $i_1$ and $i_2$) is stabilized by the agents $e^c_1$, $e^c_2$, $w_i^c$, $w_{i_1}^c$, $w_{i_2}^c$; in particular, the edge $(m^c,w_i^c)$ is stabilized by $w_i^c$ because an active literal triggers the woman-heavy edge in the VC-connector. A VC connector is stabilized by $m^{\overline{x},c}$ (\cref{fig:eps:vc-connector}b). Finally, the tine edges in the accumulator (\Cref{fig:eps:acc}) are stabilized by $w^{c_1},\dots,w^{c_L}$ (because we trigger the CA-connector). Overall, $\mu$ is a feasible stable fractional matching.
\end{proof}

We are ready to prove \cref{thm:strong-eps-inapprox}. We select a value of $\gamma$ such that $\gamma \geq \frac{56\beta N(1/\eps-\delta)}{\eps \delta}$. By \cref{lem:eps-unsat}, the welfare of $\mu$ if $\phi$ was not satisfiable would be at most
$$56\beta N+\gamma \eps \leq \frac{\gamma \eps \delta}{1/\eps-\delta} + \gamma\eps =\frac{\gamma}{1/\eps - \delta}.$$
By \cref{lem:eps-sat}, we have that the welfare of $\mu$ if $\phi$ was not satisfiable would be at least $1/\eps -\delta$ times smaller than the welfare $\I$ could have if $\phi$ was satisfiable.

\section[Approximation algorithm for approximate stability]{Approximation Algorithm for $\frac{1}{2}$-stability}\label{app:Half-stability}

We will now provide a polynomial-time algorithm that computes a $\frac{1}{2}$-stable fractional matching with welfare at least that of an optimal (exactly) stable fractional matching. Notice that unlike \cref{thm:alg-summary} and \cref{thm:ApproxStable+ApproxOptimal}, where the quality of the computed matching is compared to the optimal matching $\mu^\text{opt}$, the guarantee in \cref{thm:OptimalHalfStablePolytime} is considerably weaker.

\begin{theorem}
 \label{thm:OptimalHalfStablePolytime}
 Let $\I$ be an \SMC{} instance and $\mu^*$ be an optimal stable fractional matching for $\I$. Then, a $\frac{1}{2}$-stable fractional matching $\mu$ that satisfies $\W(\mu) \geq \W(\mu^*)$ can be computed in polynomial time.
\end{theorem}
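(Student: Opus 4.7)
The plan is to solve the natural linear programming relaxation of the mixed-integer program~\ref{Prog:OPT-Stab}, obtained by replacing the integrality constraint~(\ref{MILP-OPT-Stab:Integral_Constraint}) with $y(m,w)\in[0,1]$. I will call this LP ``LP-Half''; it has $O(n^2)$ variables and constraints and is therefore solvable in polynomial time. The algorithm will solve LP-Half and return the $\mu$-component of any optimal solution $(\mu,y)$.

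The first thing to verify is that every LP-Half-feasible $\mu$ is $\frac{1}{2}$-stable. Fix a pair $(m,w)$ and first assume $U(m,w),V(m,w)>0$. The relaxed versions of constraints~(\ref{Prog-OPT-Stab:Stability1}) and~(\ref{Prog-OPT-Stab:Stability2}) give $\frac{u_m(\mu)}{U(m,w)}\geq y(m,w)$ and $\frac{v_w(\mu)}{V(m,w)}\geq 1-y(m,w)$, so taking the maximum yields
\[
\max\!\left(\frac{u_m(\mu)}{U(m,w)},\,\frac{v_w(\mu)}{V(m,w)}\right)\;\geq\;\max(y(m,w),\,1-y(m,w))\;\geq\;\frac{1}{2}.
\]
Hence at least one of $u_m(\mu)\geq \frac{1}{2}U(m,w)$ or $v_w(\mu)\geq \frac{1}{2}V(m,w)$ holds, so the pair $(m,w)$ cannot be $\frac{1}{2}$-blocking. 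Pairs with a zero valuation are trivially not $\frac{1}{2}$-blocking, so $\mu$ is $\frac{1}{2}$-stable.

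What remains is the welfare bound $\W(\mu)\geq \W(\mu^*)$, for which it suffices to exhibit an LP-Half-feasible pair $(\mu^*,y)$. I would set $y(m,w)\coloneqq 1$ whenever $u_m(\mu^*)\geq U(m,w)$ and $y(m,w)\coloneqq 0$ otherwise; stability of $\mu^*$ (\Cref{def:Stability}) guarantees $v_w(\mu^*)\geq V(m,w)$ on pairs with $y(m,w)=0$, so the relaxed constraints~(\ref{Prog-OPT-Stab:Stability1}) and~(\ref{Prog-OPT-Stab:Stability2}) are satisfied, and the remaining constraints are inherited directly from $\mu^*$ being a fractional matching. Hence $(\mu^*,y)$ is LP-Half feasible, and $\W(\mu)\geq \W(\mu^*)$ follows from the optimality of $\mu$. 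The only conceptual hurdle is the observation that relaxing $y$ to $[0,1]$ converts the inherently disjunctive $\frac{1}{2}$-stability condition into a single pair of linear inequalities via $\max(y,1-y)\geq \frac{1}{2}$; beyond that, no step poses any real difficulty.
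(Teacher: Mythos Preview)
Your proposal is correct and follows exactly the same approach as the paper: solve the LP relaxation of \ref{Prog:OPT-Stab}, use $\max(y,1-y)\geq \tfrac{1}{2}$ to argue $\tfrac{1}{2}$-stability, and invoke feasibility of $\mu^*$ for the relaxation to obtain the welfare bound. Your write-up is slightly more explicit (separating out the zero-valuation case and spelling out the integral $y$-assignment for $\mu^*$), but there is no substantive difference.
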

\begin{proof}
Consider the mixed integer linear program \ref{Prog:OPT-Stab} from \Cref{subsec:OptimalStableMatchingProgram} for finding an optimal stable fractional matching for $\I$. Relaxing the integrality constraint (\ref{MILP-OPT-Stab:Integral_Constraint}) to $y(m,w) \in [0,1]$ results in a linear program. Since a stable fractional matching always exists (see Proposition~\ref{prop:GaleShapley}), this relaxation is feasible. Let $\mu$ be a solution of the relaxed program. Since $\max\{ y(m,w), 1 - y(m,w) \} \geq \frac{1}{2}$, we have that for every man-woman pair $(m,w) \in M \times W$, either $\textstyle{ u_m \geq \frac{1}{2} U(m,w) }$ or $\textstyle{ v_w \geq \frac{1}{2} V(m,w) }$, implying that $\mu$ is $\textstyle{ \frac{1}{2} }$-stable. It is also clear that $\W(\mu) \geq \W(\mu^*)$ since $\mu^*$ satisfies \ref{Prog:OPT-Stab}.
\end{proof} 

\end{document}